\documentclass[runningheads]{llncs}

\newif\ifArxiv
\Arxivtrue

\usepackage{amssymb,amsfonts,amsmath}
\usepackage{booktabs,tabularx}
\usepackage{xspace,soul,wrapfig}
\usepackage{graphicx,paralist,enumerate}
\usepackage[hidelinks]{hyperref}
\usepackage[textsize=footnotesize]{todonotes}
\usepackage[full]{complexity}
\usepackage[capitalise]{cleveref}
\newcommand{\bigoh}{\mathcal{O}}
\usepackage[utf8]{inputenc}
\usepackage{subcaption}
\usepackage{cite}
\usepackage{thmtools,thm-restate}
\usepackage{url,soul}

\graphicspath{{./Figures/}}

\newcommand{\midset}{\ensuremath{\operatorname{mid}}}

\newcommand{\sig}{\ensuremath{\operatorname{sig}}}

\newcommand{\outerp}{\ensuremath{\Gamma_{\operatorname{op}}}}
\newcommand{\bless}{\ensuremath{\Gamma_{\operatorname{bl}}}}

\newcommand{\planarreemb}{\ensuremath{\Gamma_{\operatorname{re}}}}

\newcommand{\candidate}{\ensuremath{S}}
\newcommand{\splt}{\ensuremath{\curlywedge}}
\newcommand{\ssplit}{\ensuremath{{S_{\splt}}}}
\newcommand{\ssplitp}{\ensuremath{{S'_{\splt}}}}
\newcommand{\ssplitsol}{\ensuremath{{S_{\splt}^*}}}

\newcommand{\addit}{\ensuremath{S_{\operatorname{add}}}}
\newcommand{\cycle}{\ensuremath{C_{\operatorname{out}}}}
\newcommand{\inner}{\ensuremath{S_{\operatorname{in}}}}
\newcommand{\souter}{\ensuremath{S_{\operatorname{out}}}}
\newcommand{\noose}{\ensuremath{N_\eta}}

\newcommand{\sn}{\textsc{{Splitting Number}}}

\newcommand{\copies}{\ensuremath{\textsf{copies}}}
\newcommand{\orig}{\ensuremath{\textsf{orig}}}
\newcommand{\copiessol}{\ensuremath{\textsf{copies}}^*}
\newcommand{\origsol}{\ensuremath{\textsf{orig}}^*}

\newcommand{\algA}{\ensuremath{\mathcal{A}}}
\newcommand{\algB}{\ensuremath{\mathcal{B}}}

\newcommand{\ps}{\ensuremath{p_{\text{s}}}}
\newcommand{\pe}{\ensuremath{p_{\text{e}}}}

\newcommand\pESN{\textsc{Embedded Splitting Number}}

\ifArxiv
\renewcommand{\orcidID}[1]{\href{https://orcid.org/#1}{\includegraphics[scale=.03]{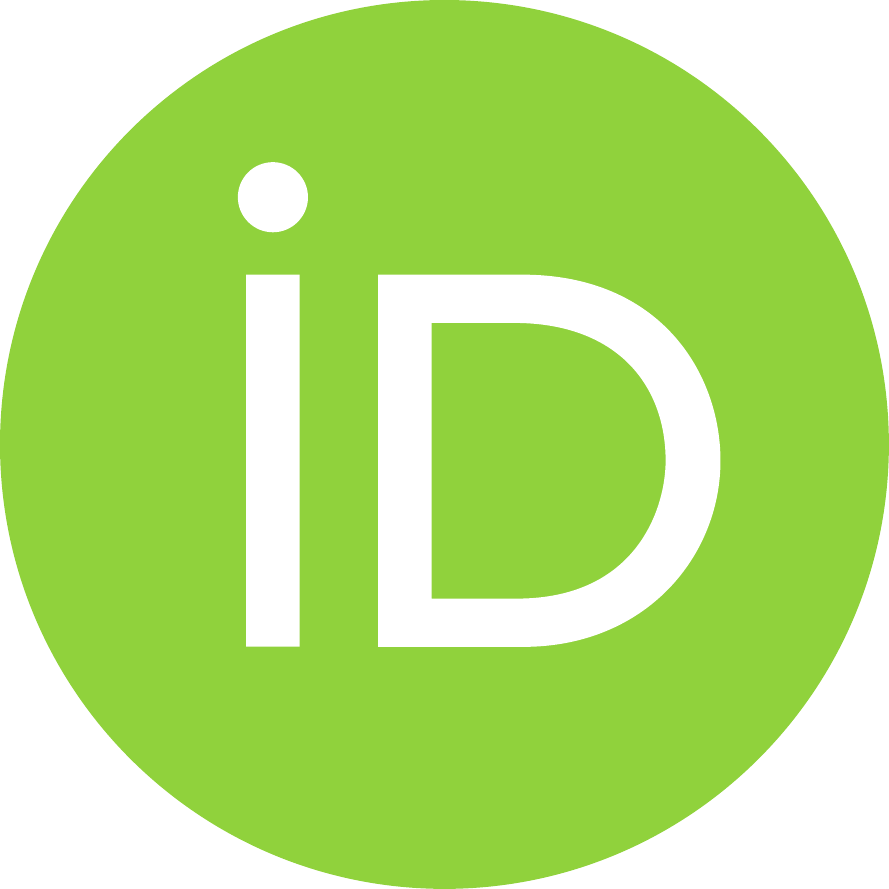}}}
\fi

\spnewtheorem{branchingR}{Branching Rule}{\bfseries}{\itshape}
\spnewtheorem{reduction}{Reduction Rule}{\bfseries}{\itshape}

\usepackage{etoolbox} %
\newcommand{\toappendix}[1]{%
  \gappto{\appendixProofText}{
    {#1}
   }
}

\newcommand{\appendixproofwithrestatable}[2]{%
  \gappto{\appendixProofText}{
    {#1}
    {#2}
   }
}

\newcommand{\appendixsection}[1]{%
  \gappto{\appendixProofText}{
    \section{Additional Material for Section~\ref{#1}}
    \label{appsec:#1}
  }
}

\begin{document}

\title{Planarizing Graphs and their Drawings by Vertex Splitting \thanks{Manuel Sorge is supported by the Alexander von Humboldt Foundation, Anaïs Villedieu is supported by the Austrian Science Fund (FWF) under grant P31119, Jules Wulms is partially supported by the Austrian Science Fund (FWF) under grant P31119 and partially by the Vienna Science and Technology Fund (WWTF) under grant ICT19-035. Colored versions of figures can be found in the online version of this paper.}}

\author{Martin Nöllenburg\inst{1}\orcidID{0000-0003-0454-3937} \and
Manuel Sorge\inst{1}\orcidID{0000-0001-7394-3147} \and
Soeren Terziadis\inst{1}\orcidID{0000-0001-5161-3841} \and 
Anaïs~Villedieu\inst{1}\orcidID{0000-0001-6196-8347} \and
Hsiang-Yun Wu\inst{2,3}\orcidID{0000-0003-1028-0010} \and
Jules Wulms\inst{1}\orcidID{0000-0002-9314-8260}}
\authorrunning{M. N\"ollenburg et al.}

\institute{Algorithms and Complexity Group, TU Wien, Vienna, Austria
\email{\{noellenburg|manuel.sorge|sterziadis|avilledieu|jwulms\}@ac.tuwien.ac.at}\and
Research Unit of Computer Graphics, TU Wien, Vienna, Austria \and
St.\ P{\"o}lten University of Applied Sciences, Austria\\
\email{hsiang.yun.wu@acm.org}}

\maketitle              %
\begin{abstract}
The splitting number of a graph $G=(V,E)$ is the minimum number of vertex splits required to turn $G$ into a planar graph, where a vertex split removes a vertex $v \in V$, introduces two new vertices $v_1, v_2$, and distributes the edges formerly incident to $v$ among 
$v_1, v_2$.
The splitting number problem is known to be \NP-complete for abstract graphs and we provide a non-uniform fixed-parameter tractable (\FPT) algorithm for this problem.
We then shift focus to the splitting number of a given topological graph drawing in $\mathbb{R}^2$,
where the new vertices resulting from vertex splits must be re-embedded into the existing drawing of the remaining graph. 
We show \NP-completeness of this \emph{embedded} %
splitting number problem, even for its two subproblems of (1) selecting a minimum subset of vertices to split and (2) for re-embedding a minimum number of copies of a given set of vertices. %
For the latter problem we present an \FPT\ algorithm parameterized by the number of vertex splits.
This algorithm reduces to a bounded outerplanarity case and uses an intricate dynamic program on a sphere-cut decomposition.

\keywords{vertex splitting \and planarization \and parameterized complexity}
\end{abstract}

\section{Introduction}
\looseness=-1

While planar graphs admit compact and naturally crossing-free drawings, computing good layouts of large and dense non-planar graphs remains a challenging task, mainly due to the visual clutter caused by large numbers of edge crossings. However, graphs in many applications are typically non-planar and hence several methods have been proposed to simplify their drawings and minimize crossings, both from a practical point of view~\cite{vonLandesberger:2011:CGF,lht-setbt-17} and a theoretical one~\cite{s-cng-18,Liebers_2001}. Drawing algorithms often focus on reducing the number of visible crossings~\cite{n-clng-20} or improving crossing angles~\cite{o-aravc-20}, aiming to achieve similar beneficial readability properties as in crossing-free drawings of planar graphs.

One way of turning a non-planar graph into a planar one while retaining the entire graph and not deleting any of its vertices or edges, is to apply a sequence of \emph{vertex splitting} operations, a technique which has been studied in theory~\cite{eades1995vertex,Eppstein_2017,Liebers_2001,ku-twcg-16}, but which is also used in practice, e.g., by biologists and social scientists~\cite{rd-ued-10,HenryBF08,Nielsen_2019_CBB,Wu_2019_BMC,Wu_2020_TVCG}.
For a given graph $G=(V,E)$ and a vertex $v \in V$, a \emph{vertex split} of $v$ removes $v$ from $G$ and instead adds two non-adjacent copies $v_1, v_2$ such that the edges formerly incident to $v$ are distributed among $v_1$ and $v_2$.
Similarly, a \emph{$k$-split} of $v$ for $k \ge 2$ creates $k$ copies $v_1, \dots, v_k$, among which the edges formerly incident to $v$ are distributed.
On the one hand, splitting a vertex can resolve some of the crossings of its incident edges, but on the other hand the number of objects in the drawing to keep track of increases. Therefore, we aim to minimize the number of splits needed to obtain a planar graph, which is known as the \emph{splitting number} of the graph. Computing it is \NP-hard~\cite{Faria_2001}, but it is known for some graph classes including complete and complete bipartite graphs~\cite{HartsfieldJR85,jackson1984splitting,Hartsfield86}.
A related concept is the folded covering number~\cite{ku-twcg-16} or equivalently the planar split thickness~\cite{Eppstein_2017} of a graph $G$, which is the minimum $k$ such that $G$ can be decomposed into at most $k$ planar subgraphs by applying a $k$-split to each vertex of $G$ at most once. %
Eppstein et al.~\cite{Eppstein_2017} showed that deciding whether a graph has split thickness $k$ is \NP-complete, even for $k=2$, but can be approximated within a constant factor and is fixed-parameter tractable (\FPT) for graphs of bounded treewidth.

While previous work considered vertex splitting in the context of abstract graphs, our focus in this paper is on vertex splitting for non-planar, topological graph drawings in $\mathbb R^2$. 
In this case we want to improve the given input drawing by applying changes to a minimum number of split vertices, which can be freely re-embedded, while the non-split vertices must remain at their original positions in order to maintain layout stability%
~\cite{mels-lam-95}, see \cref{fig:basic}. %

\begin{figure}[tb]
    \centering
    \includegraphics{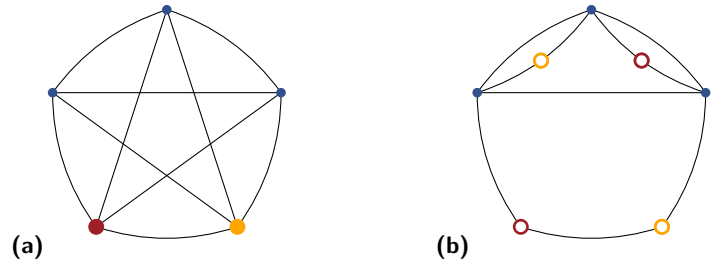}
    \caption{Vertex splitting in a drawing of $K_5$. The red and orange disks in \textbf{\textsf{(a)}} are split once into the red and orange circles in \textbf{\textsf{(b)}}. Note that an abstract $K_5$ without drawing has splitting number 1.
    }
    \label{fig:basic}
\end{figure}

\newcommand{\pVP}{\textsc{Vertex Planarization}}
\looseness=-1
The underlying algorithmic problem for vertex splitting in drawings of graphs is two-fold: firstly, a suitable (minimum) subset of vertices to be split must be selected, and secondly the split copies of these vertices must be re-embedded in a crossing-free way together with a partition of the original edges of each split vertex into a subset for each of its copies. 

The former problem is closely related to the \NP-complete problem \pVP, where we want to decide whether a given graph can be made planar by deleting at most $k$ vertices, and to related problems of hitting graph minors by vertex deletions.
Both are very well-studied in the parameterized complexity realm~\cite{robertson_graph_1995,kawarabayashi_planarity_2009,jansen_nearoptimal_2013a,kim_linear_2015}.
For example, it follows from results of Robertson and Seymour~\cite{robertson_graph_1995} that \pVP\ can be solved in cubic time for fixed $k$ and a series of papers~\cite{marx_obtaining_2012,kawarabayashi_planarity_2009,jansen_nearoptimal_2013a} improved the dependency on the input size to linear and the dependency on $k$ to $2^{O(k \log k)}$.

The latter re-embedding problem is related to drawing extension problems, where a subgraph is drawn and the missing vertices and edges must be inserted in a (near-)planar way into this drawing~\cite{adfjkp-tppeg-15,eghkn-ep1d-20,eghkn-enc1dpt-20,DBLP:conf/wg/ArroyoKPSVW20,Chimani_2009,Chimani_2015}. In these works, however, the incident edges of each vertex are given, while we still need to distribute them among the copies. Furthermore, as we show in~\cref{sec:complex}, it generalizes natural problems on covering vertices by faces in planar graphs~\cite{bm-ccvfpg-88,kloks_new_2002,abu-khzam_direct_2004,abu-khzam_bounded_2008}. %

\subsubsection{Contributions.}
\looseness=-1
In this paper we extend the investigation of the splitting number problem and its complexity from abstract graphs to graphs with a given (non-planar) topological drawing.
In~\cref{sec:nonunifpt}, we first show that the original splitting number problem is non-uniformly \FPT\ when parameterized by the number of split operations using known results on minor-closed graph classes.
We then describe a polynomial-time algorithm for minimizing crossings in a given drawing when re-embedding the copies of a single vertex, split at most $k$ times for a fixed integer $k$, in~\cref{sec:poly-alg}. 

For the remainder of the paper we shift our focus to two basic subproblems of vertex splitting in topological graph drawings. %
We distinguish the \emph{candidate selection step}, where we want to compute a set of vertices that requires the minimum number of splits to obtain planarity, and the \emph{re-embedding step}, that asks where each copy should be put back into the drawing and with which neighborhood. 
We prove  in~\cref{sec:complex} that both problems are \NP-complete, using a reduction from vertex cover in planar cubic graphs for the candidate selection problem and showing that the re-embedding problem generalizes the %
\NP-complete \textsc{Face Cover} problem.
Finally, in~\cref{sec:re-embed-fpt} we present an \FPT~algorithm for the re-embedding problem %
parameterized by the number of splits.
Given a partial planar drawing and a set of vertices to split and re-embed, the algorithm first reduces the instance to a bounded-outerplanarity case and then applies dynamic programming on the decomposition tree of a sphere-cut decomposition of the remaining partial drawing.
We note that our reduction for showing \NP-hardness of the re-embedding problem is indeed a parameterized reduction from \textsc{Face Cover} parameterized by %
the solution size to the re-embedding problem parameterized by the number of allowed splits.
\textsc{Face Cover} is known to be \FPT\ %
in this case~\cite{abu-khzam_bounded_2008}, and hence our \FPT\ algorithm is a generalization of that result.

\smallskip 
\ifArxiv \emph{Due to space constraints, missing proofs and details are found in the appendix.} \fi

\section{Preliminaries}\label{sec:prelims}

\looseness=-1
Let $G=(V,E)$ be a simple graph with vertex set $V(G) = V$ and edge set $E(G) = E$.
For a subset $V' \subset V$, $G[V']$ denotes the subgraph of $G$ induced by $V'$.
The neighborhood of a vertex $v\in V(G)$ is defined as $N_G(v)$.
If $G$ is clear from the context, we omit the subscript $G$.
A \emph{split} operation applied to a vertex $v$ results in a graph $G'=(V',E')$ where $V' = V \setminus \{v\} \cup \{\dot{v}^{(1)},\dot{v}^{(2)}\}$ and $E'$ is obtained from $E$ by distributing the edges incident to $v$ among $\dot{v}^{(1)},\dot{v}^{(2)}$ such that $N_G(v)=N_{G'}(\dot{v}^{(1)})\cup N_{G'}(\dot{v}^{(2)})$ (copies are written with a dot for clarity). Splits with $N(\dot{v}^{(1)})=N(v)$ and $N(\dot{v}^{(2)})=\emptyset$ (equivalent to moving $v$ to $\dot{v}^{(1)}$), or with $N(\dot{v}^{(1)})\cap N(\dot{v}^{(2)})\ne\emptyset$ (which is never beneficial, but can simplify proofs) are allowed.
The vertices $\dot{v}^{(1)}, \dot{v}^{(2)}$ are called \emph{split vertices} or \emph{copies} of $v$.
If a copy $\dot{v}$ of a vertex $v$ is split again, then any copy of $\dot{v}$ is also called a copy of the original vertex $v$ and we use the notation $\dot{v}^{(i)}$ for $i= 1,2, \dots$ to denote the different copies of $v$.

\begin{problem}[\textsc{Splitting Number}]
Given a graph $G=(V,E)$ and an integer $k$, can $G$ be transformed into a planar graph $G'$ by applying at most $k$ splits to $G$?
\end{problem}

\sn~is \NP-complete, even for cubic graphs~\cite{Faria_2001}. %
We extend the notion of vertex splitting to drawings of graphs. 
Let $G$ be a graph and let $\Gamma$ be a \emph{topological drawing} of $G$, 
which maps each vertex to a point in $\mathbb R^2$ and each edge to a simple curve connecting the points corresponding to the incident vertices of that edge.
We still refer to the points and curves as vertices and edges, respectively, in such a drawing.
Furthermore, we assume $\Gamma$ is a \emph{simple} drawing, meaning no two edges intersect more than once, no three edges intersect in one point (except common endpoints), and adjacent edges do not cross. 

\begin{problem}[\textsc{Embedded Splitting Number}]\label{pb:esn}
Given a graph $G=(V,E)$ with a simple topological drawing $\Gamma$ and an integer $k$, can $G$ be transformed into a graph %
$G'$ by applying at most $k$ splits to $G$ such that $G'$ has a planar drawing that coincides with $\Gamma$ on $G[V(G) \cap V(G')]$?
\end{problem}%
Problem~\ref{pb:esn} includes two interesting subproblems, namely an embedded vertex deletion problem (which corresponds to selecting  candidates for splitting) %
and a subsequent re-embedding problem, both defined below.

\newcommand\pCS{\textsc{Embedded Vertex Deletion}}
\begin{problem}[\pCS]\label{pb:candsel}
Given a graph $G=(V,E)$ with a simple topological drawing $\Gamma$ and an integer $k$, can we find a set $S \subset V$ of at most $k$ vertices such that the drawing $\Gamma$ restricted to $G[V\setminus S]$ is planar?%
\end{problem}

Problem~\ref{pb:candsel} is closely related to the \NP-complete problem \textsc{Vertex Splitting}~\cite{marx_obtaining_2012,ly-nphpn-80,kawarabayashi_planarity_2009,jansen_nearoptimal_2013a}, yet it deals with deleting vertices from an arbitrary given drawing of a graph with crossings. One can easily see that Problem~\ref{pb:candsel} is \FPT, using a bounded search tree approach, where for up to $k$ times we select a remaining crossing and branch over the four possibilities of deleting a vertex incident to the crossing edges.
The vertices split in a solution of Problem~\ref{pb:esn} necessarily are a solution to Problem~\ref{pb:candsel}; 
otherwise some crossings would remain in $\Gamma$ after splitting and re-embedding. However, a set corresponding to a minimum-split solution of Problem~\ref{pb:esn} is not necessarily a minimum cardinality vertex deletion set as vertices can be split multiple times.
Moreover, an optimal solution to Problem~\ref{pb:esn} may also split vertices that are not incident to any crossed edge and thus do not belong to an inclusion-minimal vertex deletion set. %
We note here that a solution to Problem~\ref{pb:candsel} solves a problem variation where rather than minimizing the number of splits required to reach planarity, we instead minimize the number of split vertices: Splitting each vertex in an inclusion-minimal vertex deletion set its degree many times trivially results in a planar graph.%

In the re-embedding problem, a graph drawing and a set of candidate vertices to be split %
are given. 
The task is to decide how many times to split each candidate vertex, where to re-embed each copy, and to which neighbors of the original candidate vertex to connect each copy.

\newcommand\pSSRE{\textsc{Split Set Re-Embedding}}

\begin{problem}[\pSSRE]\label{pb:ssre}
Given a graph $G=(V,E)$, a candidate set $S \subset V$ such that %
$G[V \setminus S]$ is planar, a simple planar topological drawing $\Gamma$ of $G[V \setminus S]$,
and an integer $k \ge |S|$, can we perform in $G$ %
at most $k$ splits %
to the vertices in~$S$, where each vertex in~$S$ is split at least once, such that the resulting graph has a planar drawing that coincides with $\Gamma$ on~$G[V \setminus S]$? %
\end{problem}

We note that if no splits were allowed ($k=0$) then Problem~\ref{pb:ssre} would reduce to a partial planar drawing extension problem asking to re-embed each vertex of set~$S$ at a new position without splitting, which can be solved in linear time~\cite{adfjkp-tppeg-15}.

\section{Algorithms for (Embedded) Splitting Number}\label{sec:algorithms}
\appendixsection{sec:algorithms}

\textsc{Splitting Number} is known to be \NP-complete in non-embedded graphs~\cite{Faria_2001}.
In \cref{sec:nonunifpt}, we show that it is \FPT~when parameterized by the number of allowed split operations.
Indeed, we will show something more general, namely, that we can replace planar graphs by any class of graphs that is closed under taking minors and still get an \FPT~algorithm.
Essentially we will show that the class of graphs that can be made planar by at most $k$ splitting operations is closed under taking minors and then apply a result of Robertson and Seymour that asserts that membership in such a class can be checked efficiently~\cite{parameterized_alg}.

\looseness=-1
For vertex splitting in graph drawings, we consider in \cref{sec:poly-alg} the restricted problem to split a single vertex. We show that selecting such a vertex and re-embedding at most $k$ copies of it, while minimizing the number of crossings, can be done in polynomial time for constant~$k$.
For details see \ifArxiv Appendix~\ref{appsec:sec:algorithms} \else the full paper~\cite{arxiv_full_vdup} \fi

\subsection{A Non-Uniform Algorithm for Splitting Number}\label{sec:nonunifpt}

\toappendix
{\subsection{A Non-Uniform Algorithm for Splitting Number}}

We use the following terminology.
Let $G$ be a graph.
A \emph{minor} of $G$ is a graph~$H$ obtained from a subgraph of $G$ by a series of edge contractions.
\emph{Contracting} an edge $uv$ means to remove $u$ and $v$ from the graph, and to add a vertex that is adjacent to all previous neighbors of $u$ and~$v$. %
A graph class $\Pi$ is \emph{minor closed} if for every graph $G \in \Pi$ and each minor $H$ of $G$ we have $H \in \Pi$.
Let $\Pi$ be a graph class and $k \in \mathbb{N}$.
We define the graph class $\Pi_k$ to contain each graph $G$ such that a graph in $\Pi$ can be obtained from $G$ by at most $k$ vertex splits.

\toappendix{
In the next proof we use the concept of a neighborhood cover. For an integer $k \ge 2$, a \emph{neighborhood $k$-cover} of a vertex $v \in V$ is a $k$-tuple $(N_1, \dots, N_k)$ with $N_1,  \dots, N_k \subseteq V$ such that $N_1 \cup \dots \cup N_k = N_G(v)$.
Thus splitting $v$ exactly $k-1$ times results in $k$ copies whose neighborhoods form a neighborhood $k$-cover of~$v$.
}

\begin{restatable}{theorem}{minorclosednessthm}
  \label{lma1}
  For a minor-closed graph class~$\Pi$ and $k \in \mathbb{N}$, $\Pi_k$ is minor closed.%
\end{restatable}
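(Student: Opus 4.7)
The plan is to establish a clean structural characterization of $\Pi_k$ and then verify closure under each of the three elementary minor operations. Concretely, I would first argue that $G \in \Pi_k$ if and only if there is a graph $G' \in \Pi$ together with a partition $\{V_v\}_{v \in V(G)}$ of $V(G')$ such that (i) each $V_v$ is an independent set in $G'$, (ii) for every pair $u \neq v \in V(G)$, there is exactly one edge in $G'$ between $V_u$ and $V_v$ when $uv \in E(G)$ and none otherwise, and (iii) $|V(G')| - |V(G)| \le k$. The forward direction unwinds the definition of a split: copies produced by a split are non-adjacent and no subsequent operation ever adds an edge, so all copies of a vertex $v$ remain pairwise non-adjacent, while the distribution of edges at each split yields the correspondence in (ii). The reverse direction applies unsplits one class at a time.

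Given this equivalence, the task reduces to showing that if $G$ admits such a witness $G' \in \Pi$ and $H$ is obtained from $G$ by a single edge deletion, vertex deletion, or edge contraction, then $H$ also admits such a witness. In each case I construct the new witness by applying an analogous minor operation to $G'$; minor-closedness of $\Pi$ then guarantees that the modified $G'$ lies in $\Pi$ for free.

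For $H = G - uv$ I simply delete the unique $V_u$-$V_v$ edge of $G'$, keeping the partition; independence and the split-count bound are preserved. For $H = G - v$ I delete the whole set $V_v$ from $G'$ and use the partition $\{V_w\}_{w \neq v}$, which still satisfies $|V(G' - V_v)| - |V(H)| = |V(G')| - |V(G)| \le k$. For $H = G/uv$ I contract the unique $V_u$-$V_v$ edge $u'v'$ of $G'$, producing a merged vertex $z$, and set the new class of the contracted vertex $w$ of $H$ to $(V_u \cup V_v \setminus \{u',v'\}) \cup \{z\}$; since the only edge between $V_u$ and $V_v$ in $G'$ was precisely $u'v'$ and each of $V_u, V_v$ was independent, this new class remains independent in the contracted graph, and the vertex-count bound is maintained.

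The main obstacle lies in the contraction case and concerns condition (ii): whenever $u$ and $v$ share a common neighbor $y$ in $G$, the contracted $G'$ may carry two distinct edges between the new $w$-class and $V_y$, one inherited from $uy$ and one from $vy$, whereas $H$ has only one edge $wy$. I would fix this by deleting, for each such common neighbor, one of the two duplicate edges; the resulting graph is still a minor of $G'$ and hence in $\Pi$, the partition witnesses $H \in \Pi_k$, and the three invariants (i)--(iii) hold by construction.
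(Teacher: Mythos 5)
Your proof is correct and takes a genuinely different route from the paper's. The paper argues \emph{operationally}: it decomposes the minor relation into passing to a subgraph followed by a sequence of edge contractions, and for each case it carefully rewrites the given sequence of split operations $s_1,\dots,s_{k'}$ on $G$ into a new sequence applicable to the subgraph or contracted graph, tracking the intermediate graphs $G_i$ and $G'_i$ and showing the final graph stays in $\Pi$. You instead give a \emph{static witness characterization}: $G\in\Pi_k$ if and only if there is $G'\in\Pi$ with an independent partition $\{V_v\}_{v\in V(G)}$ of $V(G')$ inducing exactly one cross-edge per edge of $G$ and satisfying $|V(G')|-|V(G)|\le k$, i.e., $G$ is obtained from $G'$ by identifying independent sets. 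This reformulation turns the closure argument into checking that each of the three elementary minor operations on $G$ can be mirrored by a minor operation on the witness $G'$, which is considerably less bookkeeping. The price is that you must verify the characterization itself (both directions, including the reverse direction where you rebuild $G'$ by splitting $G$ one class at a time), and you must notice and repair the duplicate-edge issue that arises under contraction when $u$ and $v$ share a neighbor $y$ in $G$; you do both correctly. One small imprecision: in the vertex-deletion case you write $|V(G'-V_v)|-|V(H)| = |V(G')|-|V(G)|$, but this should be $|V(G')|-|V(G)|-|V_v|+1$, which is only equal when $|V_v|=1$; the inequality $\le k$ still holds since $|V_v|\ge 1$, so the conclusion is unaffected. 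Overall your structural view is arguably cleaner than the paper's sequence-rewriting argument, at the cost of proving the characterization lemma up front.
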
%
\appendixproofwithrestatable{\minorclosednessthm*}
{
  \begin{proof}
    Let $G \in \Pi_k$ and let $H$ be a minor of $G$. 
    We show that $H \in \Pi_k$.
    Let $G'$ be a subgraph of $G$ such that $H$ is obtained from $G'$ by a series of edge contractions.
    We first show that $G' \in \Pi_k$.
    Let $s_1, s_2, \ldots, s_{k'}$ be a sequence of at most $k$ vertex splits that, when successively applied to $G$, we obtain a graph in $\Pi$.
    Let $G_0 = G$ and for each $i = 1, \ldots, k'$ let $G_i$ be the graph obtained after applying $s_i$.
    We adapt the sequence $s_1, \ldots, s_{k'}$ to obtain a sequence of graphs $G' = G'_0, G'_1, \ldots, G'_{k'}$ as follows.
    For each $i = 1, 2, \ldots, k'$, if $s_i$ is applied to a vertex $v \in V(G_{i - 1})$ with partition $N_1, N_2$ of $N_{G_{i - 1}}(v)$ then, if $v \in V(G'_{i - 1})$, to get $G'_i$ we apply a split operation in $G'_{i - 1}$ to $v$ with neighborhood cover $(N_1 \cap V(G'_{i - 1}), N_2 \cap V(G'_{i - 1}))$ (and we assume without loss of generality that the new vertices introduced by this operation are identical to the vertices introduced by $s_i$).
    If $v \notin V(G'_{i - 1})$ we put $G_i = G_{i - 1}$ instead.
    Observe that for each $i \in \{1, 2, \ldots, k'\}$, graph $G'_i$ is a subgraph of $G_i$.
    Hence, $G'_{k'}$ is a subgraph of $G_{k'}$ and since $\Pi$ is in particular closed under taking subgraphs, we have $G'_{k'} \in \Pi$, as claimed.

    It remains to show that for each graph $H$ that is obtained from a graph $G \in \Pi_k$ through a series of edge contractions we have $H \in \Pi_k$.
    By induction on the number of edge contractions, it is enough to show this in the restricted case where $H$ is obtained by a single edge contraction from $G$.
    Let $s_i$ and $G_i$ be as defined above, that is, $s_1, \ldots, s_{k'}$ is a sequence of vertex-split operations that when successively applied to $G$, we obtain a graph in $\Pi$ and $G_i$ is the graph obained after applying~$s_i$.
    We claim that there is a series of $k'$ split operations applied to $H = H_0$ that result in a series of graphs $H_1, H_2, \ldots, H_{k'}$ such that for all $i \in \{1, 2, \ldots, k'\}$ a graph isomorphic to $H_i$ is obtained from $G_i$ by a single edge contraction.
    This would imply that $H_{k'} \in \Pi$ because $\Pi$ is minor closed and thus we would have $H \in \Pi_k$, finishing the proof.

    To prove the claim, since $H_0$ is obtained from $G_0$ by a single edge contraction, by induction it is enough to show the following.
    Let $H=(V',E')$ be obtained from $G=(V,E)$ by contracting the single edge $uv \in E$ and let $w \in V'$ be the vertex resulting from the contraction.
    Let $G'$ be obtained from $G$ by applying a single split operation~$s$.
    It is enough to show that there is a split operation such that, when applied to $H$ to obtain the graph $H'$ we have that $H'$ is isomorphic to a graph obtained from $G'$ by contracting a single edge.
    
    To show this, if $s$ is not applied to $u$ or $v$, then $s$ can directly be applied to $H$.
    Then, contracting $uv$ in $G'$ we see directly that we obtain a graph isomorphic to $H'$, as required.

    \begin{figure}[t]
      \includegraphics[page=2,width=\linewidth]{edgecontraction.pdf}
      \caption{$G$ and $G'$ after a splitting operation on $u$ (top row) and $H$ and $H'$ after a splitting operation on $w$, which is the result of contracting edge $uv$ (bottom row). The set $N_G(u)$ is split into $N_1$ and $N_2$. And edge between a vertex and a set indicates the possibility of the vertex having neighbors in that set.
      }
      \label{fig:edgecont}
    \end{figure}
    
    Otherwise, $s$ is applied to $u$ or $v$.
    By symmetry, say $s$ is applied to $u$ without loss of generality.
    Let thus $u \in V(G)$ be split into $u_1$ and $u_2$ in $G'$.
    We split $w$ into $w_1$ and $w_2$ in $H$ with the neighborhood 2-cover of $w$ defined as $(N_{G'}(u_1), N_{G'}(u_2) \cup N_{G'}(v))$.
    That is, $H'$ is obtained from $H$ by splitting $w$ into $w_1$ and $w_2$ such that $N_{H'}(w_1) = N_{G'}(u_1)$ and $N_{H'}(w_2) = N_{G'}(u_2) \cup N_{G'}(v)$ (see \cref{fig:edgecont}).
    Contracting $u_2v$ in $G'$ into a vertex $x$ results in a graph $\hat{G}$ that is isomorphic to $H'$:
    To see this, observe that all neighborhoods of vertices in $V \setminus \{u_1, x\}$ are identical between $\hat{G}$ and $H'$ and that $N_{\hat{G}}(u_1) = N_{H'}(w_1)$ and $N_{\hat{G}}(x) = N_{G'}(u_1) \cup N_{G'}(v)  = N_{H'}(w_2)$.
    Thus, our claim is proven, finishing the overall proof.
  \end{proof}
}
\noindent
The proof is given in 
\ifArxiv Appendix~\ref{appsec:sec:algorithms} \else the full paper~\cite{arxiv_full_vdup} \fi
and essentially shows that whenever we have a graph $G \in \Pi_k$ and a minor $H$ of $G$, then we can retrace vertex splits in $H$ analogous to the splits that show that $G$ is in $\Pi_k$.
By results of Robertson and Seymour we obtain (again see
\ifArxiv Appendix~\ref{appsec:sec:algorithms}\else\cite{arxiv_full_vdup}\fi):

\begin{restatable}{proposition}{minorclosednessfptthm}
\label{prop:split-to-minor-closed-nonuni-fpt}
  Let $\Pi$ be a minor-closed graph class.
  There is a function $f \colon \mathbb{N} \to \mathbb{N}$ such that for every $k \in \mathbb{N}$ there is an algorithm running in $f(k)\cdot n^3$ time that, given a graph $G$ with $n$ vertices, correctly determines whether $G \in \Pi_k$.
\end{restatable}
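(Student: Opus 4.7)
The plan is to combine \cref{lma1} with two deep results from the graph minors series of Robertson and Seymour. First, I would invoke \cref{lma1} to conclude that for every fixed $k \in \mathbb{N}$ the class $\Pi_k$ is minor closed. This is the whole reason for having proved \cref{lma1} in the form stated, and it lets us work entirely in the realm of minor-closed classes from here on.

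Next, I would apply the graph minor theorem (Robertson--Seymour): every minor-closed class of graphs has a finite set of forbidden minors, that is, there exists a finite family $\mathcal{F}_k$ of graphs such that a graph $G$ lies in $\Pi_k$ if and only if no graph in $\mathcal{F}_k$ is a minor of $G$. Then I would invoke the Robertson--Seymour minor-testing algorithm, which decides, for any fixed graph $H$, whether $H$ is a minor of an input graph $G$ on $n$ vertices in time $\bigoh(n^3)$, with a constant depending only on $|V(H)|$. Running this test for each $H \in \mathcal{F}_k$ and accepting iff all tests are negative yields an algorithm of running time $|\mathcal{F}_k| \cdot c_k \cdot n^3$, where $c_k$ bounds the constants over the finitely many $H \in \mathcal{F}_k$. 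Setting $f(k) := |\mathcal{F}_k| \cdot c_k$ gives the claimed bound.

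The main subtlety to address is that this algorithm is \emph{non-uniform}: the graph minor theorem guarantees the existence of $\mathcal{F}_k$, but gives no way to compute it from $k$. Hence we only obtain an $f(k) \cdot n^3$-time algorithm \emph{for each fixed} $k$, rather than a single algorithm that takes $k$ as input and constructs its decision procedure explicitly. This is exactly the reason the statement of the proposition says ``there is a function $f$'' and ``for every $k$ there is an algorithm'', rather than asserting a uniform \FPT\ running time. I would make this non-uniformity explicit in the write-up, and note that both ingredients (the existence of $\mathcal{F}_k$ and the cubic-time minor test) are standard and can be cited directly from the graph minors series, or from a textbook treatment such as the parameterized algorithms book already cited in the excerpt.
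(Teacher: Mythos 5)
Your proposal is correct and follows essentially the same route as the paper: apply \cref{lma1} to conclude $\Pi_k$ is minor closed, then invoke the Robertson--Seymour machinery (finite forbidden-minor characterization plus cubic-time minor testing) to obtain the $f(k)\cdot n^3$ algorithm, noting the non-uniformity. The paper simply packages the two Robertson--Seymour ingredients into a single citation to Cygan et al.\ (Theorem~6.13), whereas you unpack them explicitly; the content is the same.
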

\appendixproofwithrestatable{\minorclosednessfptthm*}{
  \begin{proof}
    From \cref{lma1} it follows that the class $\Pi_k$ of graphs that represent positive input instances is closed under taking minors.
    From Robertson and Seymour's graph minor theorem it follows that it can be determined in $c_{\Pi_k} \cdot n^3$ time for a given $n$-vertex graph whether it is contained in $\Pi_k$, where $c_{\Pi_k}$ is a constant depending only on $\Pi_k$ %
    (see Cygan et al.~\cite[Theorem 6.13]{parameterized_alg}).
    \cref{prop:split-to-minor-closed-nonuni-fpt} follows by setting $f(k) = c_{\Pi_k}$.
  \end{proof}
}

Since the class of planar graphs is minor closed, we obtain the following.

\begin{corollary}
  \textsc{Splitting Number} is non-uniformly fixed-parameter tractable\footnote{A parameterized problem is non-uniformly fixed-parameter tractable if there is a function $f \colon \mathbb{N} \to \mathbb{N}$ and a constant $c$ such that for every parameter value $k$ there is an algorithm that decides the problem and runs in $f(k) \cdot n^c$ time on inputs with parameter value $k$ and length~$n$.} with respect to the number of allowed vertex splits.
\end{corollary}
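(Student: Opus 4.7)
The plan is to directly instantiate the two preceding results with $\Pi$ taken to be the class of planar graphs. First, I would recall the classical Wagner--Kuratowski characterization, which implies that planarity is preserved under vertex deletions, edge deletions, and edge contractions; hence the class of planar graphs is minor closed. With this in hand, \cref{lma1} immediately yields that for every $k \in \mathbb{N}$ the class $\Pi_k$ of graphs that can be turned into a planar graph by at most $k$ vertex splits is itself minor closed.

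Next, I would feed this minor-closed class into \cref{prop:split-to-minor-closed-nonuni-fpt}, which supplies a function $f \colon \mathbb{N} \to \mathbb{N}$ and, for each fixed $k$, an algorithm running in $f(k) \cdot n^3$ time that decides membership in $\Pi_k$ on $n$-vertex inputs. Since \textsc{Splitting Number} on instance $(G, k)$ is by definition the question of whether $G \in \Pi_k$, this is precisely a deciding algorithm for \textsc{Splitting Number} with running time $f(k) \cdot n^3$. Matching this against the footnote's definition of non-uniform fixed-parameter tractability (an $f(k) \cdot n^c$-time algorithm for each value of~$k$, with $c$ a constant) concludes the proof.

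There is no real obstacle here; the corollary is a one-line specialization of what has just been established. The only conceptual subtlety worth flagging is that the Robertson--Seymour graph minor theorem, on which \cref{prop:split-to-minor-closed-nonuni-fpt} rests, only guarantees the existence of the deciding algorithm and of the value $f(k)$ separately for each $k$, rather than providing them uniformly as a single procedure taking $k$ as input. This is exactly why the statement is phrased as \emph{non-uniform} rather than uniform fixed-parameter tractability, and is the reason the footnote spells out the weaker notion carefully.
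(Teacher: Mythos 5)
Your proposal is correct and matches the paper's argument exactly: the corollary is obtained by observing that planar graphs form a minor-closed class and then instantiating \cref{prop:split-to-minor-closed-nonuni-fpt} with $\Pi$ the class of planar graphs, with non-uniformity tracing back to the non-constructive nature of the Robertson--Seymour machinery. Your explicit mention of \cref{lma1} is slightly redundant since \cref{prop:split-to-minor-closed-nonuni-fpt} already invokes it internally, but this does not affect correctness.
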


\subsection{Optimally Splitting a Single Vertex in a Graph Drawing}\label{sec:poly-alg}

Let $\Gamma$ be a drawing of a graph $G=(V,E)$ and let $v \in V$ be the single vertex to be split $k$ times. Chimani et al.~\cite{Chimani_2009} showed that inserting a single star into an embedded graph while minimizing the number of crossings can be solved in polynomial time by considering shortest paths between faces in the dual graph, whose length correspond to the edges crossed in the primal. We build on their algorithm by computing the shortest paths in the dual of the planarized subdrawing of $\Gamma$ for $G[V \setminus \{v\}]$ between all faces incident to $N(v)$ and all possible faces for re-inserting the copies of $v$ as the center of a star. We  branch over all combinations of $k$ faces to embed the copies, compute the nearest copy for each neighbor and select the combination that minimizes the number of crossings.

\toappendix
{\subsection{Optimally Splitting a Single Vertex in a Graph Drawing}
Given a graph $G=(V,E)$ and its drawing $\Gamma$, a candidate vertex $v\in V$ and an integer $k$, we show that we can split $v$ into $k$ copies such that the resulting number of crossings is minimized; we construct a corresponding crossing-minimal drawing $\Gamma^*$ in polynomial time.

Chimani et al.~\cite{Chimani_2009} showed that inserting a single star into an embedded graph while minimizing the number of crossings can be solved in polynomial time. They use the fact that the length of a shortest path in the dual graph between two vertices $v_1$ and $v_2$ corresponds to the number of edge crossings generated by an edge between two vertices embedded on the faces in the primal graph corresponding to $v_1$ and $v_2$.
We extend this method to optimally split a single vertex into $k$ copies, which is similar to reinserting $k$ stars. The algorithm planarizes the input graph, then exhaustively tries all combinations of $k$ faces to re-embed the copies of $v$. For each combination it finds for all the neighbors of $v$ which copy is their best new neighbor (meaning it induces the least amount of crossings) using the dual graph. 

In the first step we remove the specified vertex $v$ from $\Gamma$ with all its incident edges.
Let $\Pi$ be the planarization of $\Gamma\setminus v$, %
let $F$ %
be the set of faces of $\Pi$, and let 
$D$ be the dual graph of $\Pi$.
For a vertex $u\in\Pi$ incident to a face set $F(u)$, we define $V_F(u)$ as the vertex set that represents $F(u)$ in $D$.
The algorithm by Chimani et al.~\cite{Chimani_2009} computes the crossing number for the vertex insertion in a face $f\in F$ by finding the sum of the shortest paths in $D$ between the vertex $v_f$ that represents $f$ in $D$ and $V_F(w)$, for each $w\in N(v)$.
In our algorithm, we collect all the individual path lengths in $D$ between each face vertex~$v_f$ and the faces in $V_F(w)$, for each $w\in N(v)$, in a table, and then consider all face subsets of size $k$ in which to embed the $k$ copies of $v$. 
For such a given set $S$ of $k$ faces, we assign each $w\in N(v)$ to its closest face $f^*(w) \in S$, which yields a crossing-minimal edge between $w$ and $S$. %
We break ties in path lengths lexicographically using a fixed order of $F$. 
For each set $S$ of candidate faces we compute the sum of the resulting shortest path lengths between $w$ via some face in $V_F(w)$ and $f^*(w)$ for each $w \in N(v)$.

We choose as the solution the set $S^*$ with minimum total path length and assign one copy $\dot{v}^{(1)},\dots,\dot{v}^{(k)}$ of $v$ into each face
of $S^*$ that is closest to at least one of the neighbors in $N(v)$. 
This corresponds to splitting $v$ into at most $k$ copies.
The edges from the newly placed copies to the neighbors of $v$ follow the computed shortest paths in $D$.

Chimani et al.~\cite{Chimani_2009} showed that we can compute the table of path lengths in $D$ in time $O((|F|+|\mathcal{E}|) |N(v)|)$, where $F$ and $\mathcal{E}$ are, respectively, the sets of faces and edges of~$\Pi$. We consider $O(|F|^k)$ subsets of $k$ faces and chose the one minimizing crossings. Thus our algorithm runs in polynomial time for $k \in O(1)$.
}

\begin{restatable}{theorem}{onevtxcrossmin}
\label{crossmin}
Given a drawing $\Gamma$ of a graph $G$, a vertex $v \in V(G)$, and an integer $k$, we can split $v$ into $k$ copies such that the remaining number of crossings is minimized in time $O((|F|+|\mathcal{E}|) \cdot |N(v)| \cdot |F|^k)$, where $F$ and~$\mathcal{E}$ are respectively the sets of faces and edges of the planarization of $\Gamma$.
\end{restatable}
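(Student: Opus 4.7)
The plan is to reduce the problem to a multi-face vertex insertion problem and solve it by combining the shortest-path-in-the-dual technique of Chimani et al.~\cite{Chimani_2009} with exhaustive enumeration over face placements. First, I would delete $v$ together with all its incident edges from $\Gamma$ and planarize the resulting drawing by replacing every remaining crossing with a dummy vertex to obtain a planar embedded graph $\Pi$ with face set $F$ and dual $D$. The key observation, inherited from the vertex insertion framework of~\cite{Chimani_2009}, is that inserting a single new edge from a point in a face $f$ to a vertex $w$ using a minimum number of edge crossings amounts to computing a shortest path in $D$ from the vertex representing $f$ to some vertex representing a face incident to $w$.

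Next I would precompute, for each neighbor $w \in N(v)$ and each face $f \in F$, the length $d(f,w)$ of the cheapest such edge routing. Chimani et al. show that for a fixed $w$ this can be done by a single multi-source shortest-path computation from the at most $|F|$ faces incident to $w$, costing $O(|F|+|\mathcal{E}|)$ time per neighbor and $O((|F|+|\mathcal{E}|)\cdot |N(v)|)$ in total. Then, to handle $k$ copies of $v$, I would enumerate every ordered choice of $k$ faces $(f_1,\ldots,f_k) \in F^k$ in which to place the copies $\dot v^{(1)},\ldots,\dot v^{(k)}$. For each such tuple, the crossing number of the best possible split of $v$ into exactly these placements equals $\sum_{w \in N(v)} \min_{i \le k} d(f_i, w)$, because the routings of the edges $\dot v^{(i)}w$ are independent: each incident edge of $v$ is re-inserted as its own dual shortest path, and any routing of it must pay the corresponding crossing cost. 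Distributing each neighbor $w$ greedily to an argmin copy induces a partition of $N(v)$ among the copies, which is exactly a valid split of $v$ into (at most) $k$ parts.

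The main things to justify carefully are (i) the equivalence between splitting $v$ into $k$ copies placed in chosen faces and inserting $k$ independent stars centered at these faces into $\Pi$, (ii) the optimality of assigning each neighbor to its closest chosen face, and (iii) the fact that the enumeration over $F^k$ contains an optimal placement. Item~(i) follows because a split with copies $\dot v^{(1)},\ldots,\dot v^{(k)}$ and a drawing that coincides with $\Gamma$ on $G[V \setminus \{v\}]$ must embed each $\dot v^{(i)}$ in some face of $\Pi$; once the faces are fixed, the problem decomposes per edge. Item~(ii) is immediate from (i) together with the per-edge independence. Item~(iii) just says that every possible configuration of copy placements is examined, since unused slots can be filled with duplicate faces (giving a split with fewer than $k$ non-empty copies).

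Taking the minimum-cost tuple yields the desired split and a drawing obtained by routing each edge $\dot v^{(i)}w$ along the corresponding dual shortest path, turning each traversed dual edge into a crossing in the primal. The running time is the preprocessing $O((|F|+|\mathcal{E}|)\cdot |N(v)|)$ plus $O(|N(v)|)$ work for each of the $|F|^k$ tuples, giving the claimed bound $O((|F|+|\mathcal{E}|) \cdot |N(v)| \cdot |F|^k)$. The main conceptual obstacle is really just step~(ii): once one accepts that the individual edge routings do not interact (because each is charged per crossed edge of $\Pi$, not per topological interference with sibling routings), the greedy per-neighbor choice becomes obviously optimal and the rest is bookkeeping.
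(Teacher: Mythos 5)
Your approach is the same as the paper's — compute dual shortest-path distances from each neighbor of $v$ to every face of $\Pi$, enumerate $k$-subsets/tuples of faces, and assign each neighbor to the nearest chosen face — but you have a genuine gap exactly where you wave it away. You write that the individual edge routings ``do not interact (because each is charged per crossed edge of $\Pi$, not per topological interference with sibling routings).'' That dismissal is precisely the step that needs proof. The quantity $\sum_{w} \min_i d(f_i,w)$ only counts crossings between the new star edges and the old edges of $\Pi$; it says nothing about crossings among the newly inserted edges themselves. If two stars' edges cross, the drawing you output has strictly more crossings than your accounting claims, and ``remaining number of crossings'' in the theorem statement means \emph{all} crossings. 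So your argument establishes a lower bound on the achievable crossing number, but not that the bound is attained by the routing you construct.

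The paper closes this gap explicitly. It fixes a lexicographic tie-breaking rule over $F$ when assigning each neighbor to its closest chosen face, and then argues that under this rule no two inserted edges can cross: if $(v_i,u_i)$ and $(v_j,u_j)$ crossed at a point $c$ inside a face $f$, both dual routings pass through the dual vertex $v_f$; if one side of $c$ is strictly shorter to reach, the other routing was not a shortest assignment (the farther pistil should have been connected to the nearer copy instead), and if the two sides have equal length, the lexicographic rule forces both pistils to be assigned to the same copy, a contradiction either way. Without that argument (and without any tie-breaking rule at all, which you omit), the proposal does not establish that the computed value equals the number of crossings of the constructed drawing, so the theorem as stated is not yet proved. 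The rest of your outline — planarization, dual shortest paths, enumeration over $|F|^k$ placements, and the running time arithmetic — matches the paper.
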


\appendixproofwithrestatable{\onevtxcrossmin*}{
\begin{proof}
Given a solution that embeds copies in the faces $f_1,\dots,f_k$, the drawing we compute has a minimum number of edge crossings. Since for each face combination the algorithm finds the minimum number of crossings with the edges of $\Gamma$ of a vertex insertion, by design, the star centered at each copy has minimum number of crossings with $\Gamma$ by the exhaustive search. Still, uncounted crossings between inserted edges could happen. We argue that this is impossible. Let us assume that we have two stars rooted on $v_i$, $v_j$, and vertices $u_i,u_j\in {N(v)}$ s.t.\ $(v_i,u_i)$ crosses the edge $(v_j,u_j)$ in a point $c$ inside some face~$f$. This means that in the dual graph both paths representing the two edges pass through the same vertex $v_f$ that represents $f$. If the path between $c$ and $v_i$ crosses less edges than the path between $c$ and $v_j$ then the path in $D$ from $v_j$ to $u_j$ is not minimal and $u_j$ would have been assigned to $v_i$ (symmetrically if the path between $c$ and $v_j$ crosses less edges). If both paths have the same length, then by the lexicographic order rule, both vertices would have been assigned to the same face. So the stars do not intersect one another and we search exhaustively for all possible face sets to embed the stars, meaning that the algorithm produces a crossing minimal drawing after splitting $v$ at most $k$ times.
Finally, to find the amount of crossings generated by adding an edge between a neighbor $w$ of $v$, we can do a BFS traversal of the graph from the faces of $V_F(w)$, and every time we encounter a face, the depth on the tree corresponds to the face distance to $V_F(w)$. We do this for every element of $N(v)$, and for every set of $k$ faces of the planarization of $\Gamma$, which takes $O((|F|+|\mathcal{E}|) \cdot |N(v)| \cdot |F|^k)$ time.
\end{proof}
}

\section{\NP-completeness of %
Subproblems}\label{sec:complex}

While it is known that \sn\ is \NP-complete~\cite{Faria_2001}, in the correctness proof of the reduction Faria et al.~\cite{Faria_2001} assume that it is permissible to draw all vertices, split or not, at new positions as there is no initial drawing to be preserved.
The reduction thus does not seem to easily extend to \textsc{Embedded Splitting Number}.
Here we show the \NP-completeness of each of its two subproblems. 

\begin{figure}[tb]
	\centering
	\includegraphics{reduc.pdf}
	\caption{
	The drawing $\Gamma$ in black, and the vertices and edges added to obtain $\Gamma'$ in blue. The vertex cover highlighted in orange corresponds to the deletion set.}
	\label{fig:newest_reduction_overview}
\end{figure}

\begin{restatable}{theorem}{hardnessone}
	\label{thm:hardnessone}
	\pCS\ is \NP-complete. %
\end{restatable}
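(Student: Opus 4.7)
The approach is standard for \NP-completeness: \NP-membership follows from an obvious certificate, and \NP-hardness follows from a polynomial-time many-one reduction from \textsc{Vertex Cover} on planar cubic graphs, which is well known to be \NP-complete. For membership, a candidate deletion set $S$ with $|S| \le k$ is the witness; given $\Gamma$ and $S$, one iterates over all crossings recorded in $\Gamma$ and checks that each of them involves at least one edge with an endpoint in $S$, in time polynomial in the size of $\Gamma$.

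For hardness, given an instance $(H, k)$ of planar cubic \textsc{Vertex Cover}, the plan is to fix any planar drawing of $H$ and then, for each edge $e = uv \in E(H)$, to introduce two fresh vertices $x_e, y_e$ together with a new edge $x_e y_e$ drawn as a short arc entering the two faces of $H$ incident to $e$ and crossing $e$ exactly once. This yields a simple drawing $\Gamma'$ of a graph $G$ in which every crossing lies on some $e \in E(H)$ and involves only the four vertices $\{u, v, x_e, y_e\}$, giving a bijection between crossings of $\Gamma'$ and edges of $H$. The produced instance is $(G, \Gamma', k)$, exactly of the form depicted in \cref{fig:newest_reduction_overview}, and the construction is clearly polynomial.

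For the forward direction of correctness, a vertex cover $C$ of $H$ with $|C| \le k$ is also a deletion set: after removing $C$, all edges of $H$ vanish from the drawing and the surviving edges $x_e y_e$ are pairwise non-crossing. For the reverse direction, let $D \subseteq V(G)$ with $|D| \le k$ be a deletion set. For each $e = uv \in E(H)$, the crossing on $e$ forces $D \cap \{u, v, x_e, y_e\} \ne \emptyset$. I would then build a vertex cover by taking $D \cap V(H)$ and, for each edge $e = uv$ not yet covered by it, adding a chosen endpoint (say~$u$) whose witness $x_e$ or $y_e$ lies in $D$. Charging each such added endpoint to its distinct witness in $D \setminus V(H)$ produces a vertex cover of $H$ of size at most $|D| \le k$.

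The main obstacle I anticipate is the realization of $\Gamma'$ as a simple drawing: one must argue that the edges $x_e y_e$ can be inserted simultaneously without creating unintended crossings among themselves, double crossings with $e$, or crossings involving edges adjacent to $e$. This can be handled by placing $x_e$ and $y_e$ inside a thin tubular neighborhood of the interior of $e$, disjoint from the tubes of all other edges of $H$ and avoiding the endpoints of $e$; then $x_e y_e$ can be routed entirely within that tube and consequently crosses only $e$. The remaining correctness accounting is then routine bookkeeping.
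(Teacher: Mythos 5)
Your proposal is correct and takes essentially the same approach as the paper: reduce from planar \textsc{Vertex Cover} by adding a short degree-one crossing gadget across each edge, then show vertex covers and deletion sets correspond (with a charging or swap argument to replace degree-one gadget vertices in the deletion set by endpoints of the original crossed edge). The only cosmetic differences are that you reduce from the cubic variant and spell out the tubular-neighborhood realization of the gadget edges, whereas the paper reduces from general planar \textsc{Vertex Cover} and takes the gadget edges to be short orthogonal segments.
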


\begin{proof}

We reduce from the \NP-complete \textsc{Vertex Cover} problem in planar graphs~\cite{GareyJ77}, where given a planar graph $G=(V,E)$ and an integer $k$, the task is to decide if there is a subset $V' \subseteq V$ with $|V'| \le k$ such that each edge $e \in E$ has an endpoint in $V'$.
Given the planar graph $G$ from such a \textsc{Vertex Cover} instance and an arbitary plane drawing $\Gamma$ %
of $G$ we construct an instance of \pCS\ as follows.
We create a drawing $\Gamma'$ by drawing a \emph{crossing edge} $e'$ across each edge $e$ of $\Gamma$ such that $e'$
is orthogonal to 
$e$ and has a small enough positive length such that
$e'$
intersects only $e$
and no other crossing edge or edge in $\Gamma$, see \cref{fig:newest_reduction_overview}.
Drawing $\Gamma'$ can be computed in polynomial time.

Let $C$ be a vertex cover of $G$ with $|C| = k$.  We claim that $C$ is also a deletion set that solves \pCS\ for $\Gamma'$. 
We remove the vertices in $C$ from $\Gamma'$, with their incident edges. 
By definition of a vertex cover, this removes all the edges of $G$ from $\Gamma'$. 
The remaining edges in $\Gamma'$ are the crossing edges %
and they form together a (disconnected) planar drawing which shows that $C$ is a solution of \pCS\ for $\Gamma'$.

Let $D$ be a deletion set of $\Gamma'$ such that $|D|=k$. We find a vertex cover of size at most $k$ for $G$ in the following manner. Assume that $D$ contains a vertex $w$ that is an endpoint of a crossing edge $e$ that crosses the edge $(u,v)$ of $G$. Since $w$ has degree one, deleting it only resolves the crossing between $e$ and $(u,v)$, thus we can replace $w$ in $D$ by $u$ (or $v$) and resolve the same crossing as well as all the crossings induced by the edges incident to $u$ (or $v$). Thus we can find a deletion set $D'$ of size smaller or equal to $k$ that contains only vertices in $G$ and %
removing this deletion set from $\Gamma'$ removes only edges from $G$. 
Since every edge of $G$ is crossed in $\Gamma'$, every edge of $G$ must have an incident vertex in $D'$, thus $D'$ is a vertex cover for $G$.

Containment in \NP\ 
is easy to see. Given a deletion set $D$, we only need to verify that $\Gamma'$ is planar after deleting $D$ and its incident edges.
\end{proof}

Next, we prove that also the re-embedding subproblem itself is \NP-complete, by showing that \textsc{Face Cover} is a special case of the re-embedding problem. %
The problem \textsc{Face Cover} is defined as follows. Given a planar graph $G=(V,E)$, a subset~$D\subseteq V$, and an integer $k$, can $G$ be embedded in the plane, such that at most $k$ faces are required to cover all the vertices in $D$? \textsc{Face Cover} is \NP-complete, even when $G$ has a unique planar embedding~\cite{bm-ccvfpg-88}. %

\begin{restatable}{theorem}{hardnesstwo}\label{thm:hardnesstwo}
	\pSSRE\ is \NP-complete.
\end{restatable}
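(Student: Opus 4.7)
The plan is to reduce from \textsc{Face Cover}, which is \NP-complete even when the input graph $G$ has a unique planar embedding~\cite{bm-ccvfpg-88}. Given such an instance $(G,D,k)$ with $k \ge 2$ (the case $k \le 1$ is trivial by inspecting faces in polynomial time), I would construct a \pSSRE\ instance as follows. Let $\Gamma$ be the unique planar drawing of $G$. Form $G'$ by adding a single fresh vertex $v$ adjacent to exactly the vertices of $D$, set $S := \{v\}$, and choose split budget $k' := k - 1 \ge 1 = |S|$. By construction, $G'[V(G') \setminus S] = G$ admits the given planar drawing $\Gamma$, so the instance meets the syntactic requirements of \pSSRE.

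For the forward direction I would argue as follows. Given a face cover $f_1, \ldots, f_k$ of $D$ in the unique embedding of $G$, perform $k-1$ successive splits on $v$ to obtain $k$ copies $\dot{v}^{(1)}, \ldots, \dot{v}^{(k)}$. Assign each $u \in D$ to some index $i$ with $u$ on the boundary of $f_i$, place $\dot{v}^{(i)}$ in the interior of $f_i$, and route each new edge $u\dot{v}^{(i)}$ through the interior of $f_i$. The result is a planar drawing that coincides with $\Gamma$ on $G$, witnessing a yes-answer to the constructed \pSSRE\ instance.

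For the backward direction, suppose a re-embedding with at most $k-1$ splits on $v$ exists, yielding at most $k$ copies of $v$. Since the drawing must coincide with $\Gamma$ on $G$ and $G$ has a unique planar embedding, each copy lies in the open interior of some face of $\Gamma$, and any new edge between a copy and a neighbor $u \in D$ must be contained in the closure of that same face; otherwise it would cross an edge of $\Gamma$. Hence $u$ lies on the boundary of the face hosting its assigned copy, and the at most $k$ faces occupied by copies form a face cover of $D$. Containment in \NP\ is immediate: a certificate is the sequence of splits together with the extending drawing, and planarity can be verified in polynomial time.

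The main obstacle I anticipate is pinning down the backward direction: one must rule out that copies or their incident edges somehow ``cheat'' by using space outside the open faces of $\Gamma$, or that splitting further might cover additional vertices in $D$ without costing extra faces. Both concerns are resolved by the strict requirement that the final drawing coincide with $\Gamma$ on $G[V \setminus S]$, which confines the entire added subdrawing to the open faces of $\Gamma$, together with the observation that each additional copy can only occupy an additional face of $\Gamma$ rather than reduce the number of faces needed to touch all of $D$.
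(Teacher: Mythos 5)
Your reduction is precisely the one in the paper: add a fresh vertex $v$ adjacent to $D$, take the unique drawing of $G$, set $S=\{v\}$, and allow $k-1$ splits; both directions of the equivalence and the \NP-membership argument are argued the same way. The only difference is that you explicitly dispose of the degenerate case $k\le 1$ to ensure the constructed instance satisfies the \pSSRE\ requirement $k-1\ge|S|$, a small point the paper leaves implicit.
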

\begin{proof}
	We %
	give a parameterized reduction from \textsc{Face Cover} (with unique planar embedding) parameterized by the solution size to the re-embedding problem parameterized by the number of allowed splits.
	We first create a graph~$G' = (V', E')$, with a new vertex $v$, vertex set $V' = V \cup \{v\}$ and $E' = E \cup \{dv \mid d \in D\}$.
	Then we compute a planar drawing $\Gamma$ of $G$ corresponding to the unique embedding of $G$. %
	Finally, we define the candidate set $S=\{v\}$ and allow for $k-1$ splits in order to create up to $k$ copies of $v$. %
	Then $G'$, $\Gamma$, $S$, and $k-1$ form an instance $I$ of \pSSRE. %
	
	In a solution of $I$, every vertex in $D$ is incident to a face in~$\Gamma$, in which a copy of $v$ was placed.
	Therefore, selecting these at most $k$ faces in~$\Gamma$ gives a solution for the \textsc{Face Cover} instance.
	Conversely, given a solution for the \textsc{Face Cover} instance, we know that every vertex in $D$ is incident to at least one of the at most $k$ faces.
	Therefore, placing a copy of $v$ in every face of the \textsc{Face Cover} solution yields a re-embedding of at most $k$ copies of $v$, each of which can realize all its edges to neighbors on the boundary of the face without crossings. %
	
	Finally, a planar embedding of the graph can be represented combinatorially in polynomial space. %
    We can also verify in polynomial time that this embedding is planar and exactly the right connections are realized, for \NP-containment.%
\end{proof}

\section{Split Set Re-Embedding is Fixed-Parameter Tractable}\label{sec:re-embed-fpt}

\begin{figure}[t]
	\centering
	\includegraphics{Figures/steps.pdf}
	\caption{
		\textbf{\textsf{(a)}} An example graph~$G$, \textbf{\textsf{(b)}} a planar drawing~$\Gamma$ of~$G$ where $\candidate$ has been removed, and \textbf{\textsf{(c)}} a solution drawing $\planarreemb^*$. Pistils are squares, copies are circles and vertices in $\candidate$ are disks.
	}
	\label{fig:steps}
\end{figure}

In this section we show that \pSSRE\ (\textsc{SSRE}) can be solved by an \FPT-algorithm, with the number~$k$ of splits as a parameter.
We provide an overview of the involved techniques and algorithms in this section and refer to \ifArxiv Appendix~\ref{appsec:sec:re-embed-fpt} \else\cite{arxiv_full_vdup} \fi
for the full technical details.

\looseness=-1
\subsubsection{Preparation.}
First, from the given set $\candidate$ of $s= |\candidate|$ candidate vertices (disks in \cref{fig:steps}(a)) we choose how many copies of each candidate we will insert back into the graph; these copies form a set~$\ssplit$. 
Vertices with neighbors in $\candidate$ are called \emph{pistils} (squares in \cref{fig:steps}), and faces incident to pistils are called \emph{petals}.
The copies in~$\ssplit$ must be made adjacent to the corresponding pistils. 
Since vertices in $\candidate$ can be pistils, we also determine which of their copies in~$\ssplit$ are connected by edges. For both of these choices, the number of copies per candidate and the edges between copies, we branch over all options\ifArxiv ~(see Appendix~\ref{ssec:10kouter}). \else. \fi
In each of the $\bigoh(2^k k^4)$ resulting branches we apply dynamic programming to solve \textsc{SSRE}.

\looseness=-1
Second, we prepare the drawing~$\Gamma$ for the dynamic programming.
A face $f$ is not necessarily involved in a solution, e.g., if it is not a petal: a copy embedded in $f$ either has no neighbors and can be re-embedded in any face, or its neighbors are not incident with $f$, and this embedding induces a crossing. Therefore we remove all vertices not incident to petals, which actually results in a drawing of a $10k$-outerplanar graph\ifArxiv ~(see Appendix~\ref{ssec:10kouter}). \else. \fi We show this in the following way. We find for each vertex a \emph{face path} to the outerface, which is a path alternating between incident vertices and faces. Each face visited in that path is either a petal or adjacent to a petal (from the above reduction rule). Thus, each face in the path might either have a copy embedded in it in the solution, or a face up to two hops away has a copy embedded in it. If we label each face by a closest copy with respect to the length of the face path, then one can show that no label can appear more than five times on any shortest face path to the outerface. Since there are at most $2k$ different copies (labels), we can bound the maximum distance to the outerface by $10k$.
Because of the $10k$-outerplanarity, we can now compute a special branch decomposition of our adapted drawing~$\Gamma$ in polynomial time, a so called \emph{sphere-cut decomposition} of branchwidth $20k$.
A sphere-cut decomposition $(\lambda,T)$ is a tree $T$ and a bijection $\lambda$ that maps the edges of $\Gamma$ to the leaves of $T$ (see \cref{fig:scd1}).%

\begin{figure}
	\centering
	\includegraphics{Figures/sc_decompaltnew.pdf}
	\caption{
		\textbf{\textsf{(a)}} A graph and \textbf{\textsf{(b)}} its sphere-cut decomposition. Each labeled leaf corresponds to the same labeled edge of the graph. The middle set of each colored edge in the tree corresponds to the vertices of the corresponding colored dashed noose in the graph.
	}
	\label{fig:scd1}
\end{figure}
\looseness=-1
Each edge~$e$ of $T$ splits $T$ into two subtrees, which induces via $\lambda$ a bipartition $(A_e,B_e)$ of $E(G)$ into two subgraphs. We define a vertex set $\midset(e)$, which contains all vertices that are incident to an edge in both $A_e$ and $B_e$.
Additionally, $e$ also corresponds to a curve, called a \emph{noose} $\eta(e)$, that intersects $\Gamma$ only in $\midset(e)$. 
We define a root for~$T$, and we say that for edge~$e$, the subtree further from the root corresponds to the drawing \emph{inside} the noose $\eta(e)$. Having a root for~$T$ allows us to solve \textsc{SSRE} on increasingly larger subgraphs in a structured way, starting with the leaves of $T$ (the edges of $\Gamma$) and continuing bottom-up to the root (the complete drawing~$\Gamma$). 
\ifArxiv For more details see Appendix~\ref{sec:ssre-fpt-scbd}.\fi
\looseness=-1

\subsubsection{Initialization.}
During the dynamic programming, we want to determine whether there is a partial solution for the subgraphs of $\Gamma$ that we encounter when traversing~$T$. 
For one such subgraph we describe such a \emph{partial solution} with a tuple $(\ssplitp, (N_{\dot{v}})_{\dot{v} \in \ssplitp}, \Gamma')$. In this tuple, the set $\ssplitp \subseteq \ssplit$ corresponds to the embedded copies, $(N_{\dot{v}})$ are their respective neighborhoods in $\Gamma$, and $\Gamma'$ is the resulting drawing.
However, during the dynamic programming, we need more information to determine whether a partial solution exists. 
For example, for the faces completely inside the %
noose~$\eta$ enclosing~$\Gamma'$ (\emph{processed} faces) a solution must already be found, while faces intersected by~$\eta$ (\emph{current} faces) still need to be considered.

We store this information in a \emph{signature}, which is a tuple $(\inner, \cycle, M, \noose)$. An example of a signature is visualized in~\cref{fig:sigcontent}.
The set $\inner(t)\subseteq \ssplit$ corresponds to the copies embedded in processed faces, 
and $\noose$ contains a set~$X(p)$ for each pistil~$p$ on the noose, such that~$X(p)$ describes which neighbors of $p$ are still missing in the partial solution.
The set $\cycle$ corresponds to a set of planar graphs describing the combinatorial embedding of copies in current faces. One such graph $C_f$ (\cref{fig:nesting}) associated with a current face $f$ consists of a cycle, whose vertices represent the pistils of $f$, and of copies embedded inside the cycle. Since $f$ is current, not all of its pistils are necessarily inside the noose, and $M$ describes which section of the cycle, and hence which pistils, should be used.

\begin{figure}[tb]
  \centering
  \includegraphics{Figures/dynprogoverview.pdf}
  \caption{The information stored in a signature of the partial solution inside the orange dashed noose: copies in $\inner$ are used in the grey faces, blue noose vertices who have missing neighbors (outgoing edges outside the noose) are stored in $\noose$, in red an example of a nesting graph for a face traversed by the noose, with four dotted edges connecting to the cycle and the vertices described by $M$ in green.}
  \label{fig:sigcontent}
\end{figure}

\looseness=-1
Saving a single local optimal partial solution, one that uses the smallest number of copies, for a given noose is not sufficient. This sub-solution may result in a no-instance when considering the rest of the graph outside the noose.
We therefore keep track of all signatures that lead to partial solutions, which we call \emph{valid} signatures. These signatures allow us to realize the required neighborhoods for pistils inside the noose with a crossing-free drawing. The number of distinct signatures~$N_s(k)$ depends on the number $k$ of splits and we prove an upper bound of $2^{O(k^2)}$ by counting all options for each element of a signature tuple. Since the number of signatures is bounded by a function of our parameter~$k$, we can safely enumerate all signatures.
We then determine which signatures are valid for each noose in $T$\ifArxiv ~(see Appendix~\ref{sec:ssre-fpt-dp}). \else. \fi
The number~$N_s(k)$ of signatures will be part of the leading term in the total running time.
\looseness=-1

\begin{figure}[tb]
	\centering
	\includegraphics{Figures/nesting_face.pdf}
	\caption{
          \textbf{\textsf{(a)}} A face~$f$ and copies inside the orange noose, and \textbf{\textsf{(b)}} the corresponding nesting graph~$C_f$ with the interval described by $M$ highlighted in grey. The two light blue vertices represent two different copies of the same removed vertex. Copies in $C_f$ have no edges to copies in other nesting graphs.}
          \label{fig:nesting}
\end{figure}

\toappendix{

\section{Split Set Re-Embedding is Fixed-Parameter Tractable}\label{appsec:sec:re-embed-fpt}

\looseness=-1
We first introduce the following terminology. Any vertex $v$ in $\Gamma$ that has a neighbor in $\emph{\candidate}$ is called a \emph{pistil}.
Each face that is incident to a pistil is called a \emph{petal}.
Let $p$ be a pistil in the input graph $G$ with neighbors $N(p)$. 
Let $\dot{v}$ be a copy of some $v \in \candidate$, where $v \in N(p)$. We say $\dot{v}$ \emph{covers} $p$ if it is embedded in a face incident to $p$ and we can draw a crossing free edge between $\dot{v}$ and $p$.

If $(G, \Gamma, \candidate, k)$ is a yes-instance of \pSSRE, then there is a series of split operations to achieve a planar re-embedding, we call the elements of the series \emph{solution splits} (see \cref{fig:steps} for an example of the problem and its solutions).
We refer to the \emph{solution graph} as the graph 
obtained from $G$ by performing the solution splits.
A \emph{solution} is defined as a tuple 
$(\ssplitsol, \copiessol, \origsol, (N_u^*)_{u \in \ssplitsol}, \planarreemb^*)$ 
consisting of the following:

\begin{enumerate}[(i)]
\item The set $\ssplitsol$ of copies of vertices in $\candidate$ introduced by the solution splits.
  Since $\candidate$ has size $s \leq k$ and there are at most $k$ splits, we have $|\ssplitsol| \leq s + k \leq 2k$.
\item A mapping $\copiessol \colon \candidate \to 2^\ssplitsol$ that maps each vertex $v \in \candidate$ to the set $\copiessol(v) \subseteq \ssplitsol$ of copies of $v$ introduced by performing the solution splits.
\item A mapping $\origsol \colon \ssplitsol \to \candidate$ that maps each copy $\dot{v} \in \ssplitsol$ to the vertex $v=\origsol(\dot{v}) \in \candidate$ that $\dot{v}$ is a copy of.
\item For each copy $\dot{v} \in \ssplitsol$ a vertex set $N_{\dot{v}}^* \subseteq V$ of neighbors of $\dot{v}$ such that for each $v \in \candidate$ the family $\{ N_{\dot{v}}^* \mid \dot{v} \in \copiessol(v) \}$ is a partition of $N_G(v)$.
\item A planar drawing $\planarreemb^*$ of the graph resulting from $\Gamma$ by embedding the copies in $\ssplitsol$ such that each copy $\dot{v} \in \ssplitsol$ has edges drawn to  each vertex in $N_{\dot{v}}^*$ (\cref{fig:steps}c).
\end{enumerate}

\subsection{Splitting Candidates and Obtaining $10k$-Outerplanarity}\label{ssec:10kouter}

As mentioned, the first step in our algorithm for \pSSRE\ is to determine, (a) for each candidate vertex in $\candidate$ how many copies are introduced and (b)~how all these copies are connected to each other.
For (a), we branch into all possibilities of performing at most $k$ splits of the candidate vertices in $\candidate$; in each branch we carry out all steps of the algorithm described below.
Since there are at most $k$ splits, at least one per candidate, and $s \leq k$, we are looking for an $s$-composition of the integer $k$ which is given by $\binom{k-1}{s-1}\leq 2^k$.
To keep track of the splits done in the current branch, we define $\ssplit$ as the set of resulting copies of $\candidate$ and we define the corresponding mappings $\orig$ and $\copies$.
Note that $|\ssplit| \leq 2k$.
After branching, we look only for solutions that respect the choice of vertex splits in each branch.
Clearly, if there is a solution for the instance, one branch will have made the same choice.
Analogously, for (b) we create a branch for each of the $2^{O(k^{2})}$ possible graphs with vertex set~$\ssplit$ that represent $G[\ssplit]$.
We denote the choice of this graph in the current branch by~$G_\ssplit$.
Similar as for the previous branching, we look only for solutions that respect the choice of a certain branch, i.e., we require that the set of copies induces the subgraph $G_{\ssplit}$.

\subsubsection{Initializing the Split Vertex Set \boldmath$\ssplit$}\label{sec:ssre-fpt-split-guess-1}
As mentioned, the first step in our algorithm for \pSSRE\ is to determine, for each candidate split vertex, how many copies are introduced and how these copies are connected to each other.  
We are given as input the set $\candidate$ of vertices that have been removed from the drawing to be split such that $|\candidate|=s\leq k$. Otherwise, if $|\candidate|=s > k$, we can immediately conclude that we deal with a no-instance as all candidate vertices must be split at least once.
From $\candidate$ we now create sets $\ssplit$ of copies that will be reintroduced into the drawing.
To obtain all relevant sets $\ssplit$, we introduce our first branching rule.

\begin{branchingR}
\label{br:vertex}
  Let $(G, \Gamma, \candidate, k)$ be an instance of $\pSSRE$. Create a branch for every $k' = s, \ldots, s + k$, defining a set $\ssplit$ of $k'$ new vertices, for every mapping $\orig \colon \ssplit \to \candidate$, and every mapping $\copies \colon \candidate \to 2^\ssplit$ such that $\{\copies(v) \mid v \in \candidate\}$ is a partition of $\ssplit$ and such that $\orig^{-1} = \copies$.\footnote{Herein for a mapping $f \colon X \to Y$ we define the mapping $f^{-1} \colon Y \to 2^X$ by putting $f^{-1}(y) = \{x \in X \mid f(x) = y\}$ for every $y \in Y$.}
\end{branchingR}

Note that for resolving crossings in $\Gamma$ one might sometimes want to only re-embed a vertex without splitting. However, such a vertex move operation is not permitted in \pESN\ and we only permit re-embedding copies of original vertices. Thus each vertex in $\candidate$ is split at least once.  
We can bound the number of branches by observing that the number of ways of distributing the $k$ splits among $s$ vertices with at least one split per vertex is the same as the number of $s$-compositions of the integer $k$, which is described by $\binom{k-1}{s-1} \le 2^k$.

\begin{lemma}\label{lma:brvertex}
  If $(G, \Gamma, \candidate, k)$ is a yes-instance of $\pSSRE$ with solution %
  $\sigma=(\ssplitsol, \copiessol, \origsol, (N_{\dot{v}})_{\dot{v} \in \ssplitsol}, \planarreemb^*)$, then there is a branch created by Branching Rule~\ref{br:vertex} with $\ssplitsol = \ssplit$.
\end{lemma}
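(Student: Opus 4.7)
The plan is to show that Branching Rule~\ref{br:vertex} exhausts all combinatorial structures of the copy set together with the origin relationship, so that one of the branches it creates must match the solution $\sigma$. The entire argument is essentially bookkeeping on the definitions; no nontrivial combinatorial insight is required beyond checking that the ranges of the enumerated parameters cover what a solution can produce.

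First, I would bound the size $k' := |\ssplitsol|$. Since every vertex in $\candidate$ is split at least once in a valid solution (by the problem definition, each candidate must yield at least one copy), we have $|\ssplitsol| \ge s$. On the other hand, each split increases the number of copies by exactly one, and at most $k$ splits are performed, so $|\ssplitsol| \le s + k$. Hence $s \le k' \le s + k$, and Branching Rule~\ref{br:vertex} creates some family of branches indexed by this particular value~$k'$.

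Second, within the branch for this $k'$, I would exhibit matching choices of $\ssplit$, $\orig$, and $\copies$. Fix any bijection between the abstract $k'$-element set $\ssplit$ from the rule and the concrete copy set $\ssplitsol$ from $\sigma$, and transport $\origsol$ and $\copiessol$ across this bijection to obtain the candidate $\orig$ and $\copies$ of the branch. The two conditions demanded by the rule then need to be verified: that $\{\copies(v) \mid v \in \candidate\}$ is a partition of $\ssplit$, and that $\orig^{-1} = \copies$. Both follow immediately from the definition of a solution, since each copy in $\ssplitsol$ is a copy of exactly one original vertex in $\candidate$; this yields the partition property, while the inverse relation is a direct restatement of how $\copiessol$ and $\origsol$ are defined.

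I do not anticipate any real obstacle. The only points worth a quick sanity check are the boundary cases: that $k' = s$ is included when the solution performs exactly one split per candidate, and that duplicate or trivial splits (where one copy receives all neighbors) are consistent with the branching rule, which they are, since the rule enumerates all mappings without any restriction on their image size. Thus the identification $\ssplitsol = \ssplit$ holds in at least one branch, proving the lemma.
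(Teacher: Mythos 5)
Your proof is correct and follows essentially the same approach as the paper: both argue that Branching Rule~1 exhausts all sizes $k'$ in a range covering $|\ssplitsol|$ and all $\orig$/$\copies$ mappings, so some branch must match the solution's copy set up to relabeling. Your exposition is cleaner in making the bijection between the abstract set $\ssplit$ and $\ssplitsol$ explicit and then verifying the two defining conditions of the rule, whereas the paper's proof phrases the same covering argument via a less transparent counting of ``remainder'' copies $\addit$.
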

\begin{proof}
Given the set $\ssplitsol$ of copies re-embedded by a solution, we argue that $\ssplitsol$ must be a subset of a set generated in at least one of the branches of BR1.
Set $\ssplitsol$ necessarily contains at least one copy of each vertex in $\candidate$, and we call $\addit$ the remainder of copies embedded by $\ssplitsol$.
Since the total number of splits is bounded by $k$, $\addit$ contains at most $k=s+d$ vertices: one copy of each vertex in $\candidate$ for the initial split, and $d$ vertices for additional splits. Our branching rule BR1 creates a branch for each combination of copies from $\candidate$ that adds up to $2s+d$, two copies of each vertex in $\candidate$, and one branch for each combination of $d$ copies of $\candidate$. Thus there is a branch~$b$ that chooses a set of copies $S_b$ that contains a superset of~$\addit$, such that $S_b \setminus \addit$ contains exactly one copy of each vertex in $\candidate$. Thus $S_b$ contains exactly all copies $\ssplitsol$ re-embedded by a solution.
\end{proof}

Finding a superset of an optimal set of copies is sufficient, as having additional copies of some vertices does not prevent us from finding a solution.

Some vertices in $\ssplit$ have neighbors in $\ssplit$, or in both $\ssplit$ and $V \setminus \candidate$. From the perspective of a non-split pistil it is easy to verify whether the original neighborhood is present in a solution. However, a split vertex is incident to only a subset of its original edges. Thus it is required to consider the union of all copies of a vertex to find its original neighborhood. For an edge $(u,v)$ where $u,v\in \candidate$, we will branch over all possibilities for defining this edge between any pair of copies of $u$ and $v$. So if there are $i$ copies of $u$ and $j$ copies of $v$, we will branch over the $i\cdot j$ possible ways to reflect $(u,v)$ in the solution. 
This simplifies the verification of the presence of such edges as it is sufficient to consider the neighborhood of any chosen copy of a vertex, rather than of all its copies.

\begin{branchingR}
\label{br:edge}
Let $(G=(V,E), \Gamma, \candidate, k)$ be an instance of \pSSRE\ and $\ssplit$ a set of copies obtained from Branching Rule~\ref{br:vertex}.
Create a branch for each possible set $E_{\splt}=\{ (\dot{u}^{(i)},\dot{v}^{(j)}) \mid \dot{u}^{(i)},\dot{v}^{(j)}\in\ssplit \text{ and }$ $(\origsol(\dot{u}^{(i)}),\origsol(\dot{v}^{(j)}))\in E\}$, in which every $e\in E$ is represented exactly once. %
\end{branchingR}

Let $s \ge 2$ as otherwise there are no such edges within $\ssplit$. Each of the $s \le k$ vertices in $\candidate$ has at most $k-1$ neighbors in $\candidate$, yielding at most $\binom{k}{2}$ edges. Since each vertex in $\candidate$ has at most $k$ copies in $\ssplit$, there are at most $k^2$
possibilities to reflect one original edge in $\binom{\candidate}{2}$ as an edge in $\binom{\ssplit}{2}$. This results in $O(k^4)$ branches. %

\begin{lemma}\label{lma:bredge}
  If $(G=(V,E), \Gamma, \candidate, k)$ an instance of $\pSSRE$ is a yes-instance with solution $\sigma=(\ssplitsol, \copiessol, \origsol, (N_{\dot{v}})_{{\dot{v}} \in \ssplitsol}, \planarreemb^*)$ and graph $G^*$ corresponding to $\planarreemb^*$, then there is a branch created by Branching Rule~\ref{br:edge} with $E_{\splt} = E(G^*[\ssplitsol]$).

\end{lemma}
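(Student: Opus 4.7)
The plan is to start from the branch guaranteed by \cref{lma:brvertex}, in which $\ssplit = \ssplitsol$ and the mappings $\orig$ and $\copies$ coincide with $\origsol$ and $\copiessol$. Working inside this branch, I define the candidate edge set $E_\splt^{\ast} := E(G^*[\ssplitsol])$ and verify that it satisfies the two requirements defining a branch of \textbf{Branching Rule~\ref{br:edge}}: (i) every edge in $E_\splt^{\ast}$ has the form $(\dot{u}^{(i)},\dot{v}^{(j)})$ with $(\origsol(\dot{u}^{(i)}),\origsol(\dot{v}^{(j)})) \in E$, and (ii) each edge of $E$ with both endpoints in $\candidate$ is represented exactly once.

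For (i), I would invoke the definition of a vertex split from \cref{sec:prelims}: a split only redistributes existing incidences, since $N_G(v) = N_{G'}(\dot{v}^{(1)}) \cup N_{G'}(\dot{v}^{(2)})$. Inductively over the sequence of solution splits, no edge between two copies can appear in $G^*$ unless the corresponding originals were already adjacent in $G$. Thus each edge of $E_\splt^{\ast}$ traces back to a unique original edge between two vertices of $\candidate$.

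For (ii), I would track a fixed edge $e = (u,v) \in E$ with $u,v \in \candidate$ through the sequence of splits producing $G^*$. Each time $u$ (or any of its copies) is split, $e$ is assigned to exactly one of the two resulting copies, since by the split definition the new neighborhoods partition the old one (into $N_1$ and $N_2$). The same holds for $v$. Combining the partition property across all splits of $u$ and all splits of $v$, the edge $e$ ends up as exactly one pair $(\dot{u}^{(i)},\dot{v}^{(j)})$ in $G^*$, which gives both existence and uniqueness of its representation in $E_\splt^{\ast}$.

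Having established (i) and (ii), the set $E_\splt^{\ast}$ is one of the sets enumerated by \textbf{Branching Rule~\ref{br:edge}}, so the rule creates a branch with $E_\splt = E_\splt^{\ast} = E(G^*[\ssplitsol])$, as claimed. The only delicate point—rather than a real obstacle—is the bookkeeping: I must use the bijection between $\ssplit$ and $\ssplitsol$ coming from the BR1 branch to identify $\orig$ with $\origsol$, so that the adjacency condition $(\origsol(\dot{u}^{(i)}),\origsol(\dot{v}^{(j)})) \in E$ used in the statement of BR2 matches the condition defining $E(G^*[\ssplitsol])$.
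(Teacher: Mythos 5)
Your argument takes the same route as the paper: fix the \textbf{BR1} branch with $\ssplit=\ssplitsol$ and observe that the specific edge set $E(G^*[\ssplitsol])$ is among those enumerated by \textbf{BR2}, since that rule exhausts all ways of assigning each candidate--candidate edge to one copy pair. You are more explicit than the paper's (rather terse) proof in checking that the solution's induced edge set actually satisfies BR2's two requirements, which is a useful clarification.

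One small imprecision: in step (ii) you justify "exactly one" by saying "by the split definition the new neighborhoods partition the old one." This is not what the split definition gives you — the paper's definition in \cref{sec:prelims} explicitly permits $N(\dot{v}^{(1)})\cap N(\dot{v}^{(2)})\ne\emptyset$, so a single split need not partition. The partition property you need is instead a requirement on the \emph{solution}: item (iv) of the solution tuple says that for each $v\in\candidate$ the family $\{N_{\dot{v}}^*\mid\dot{v}\in\copiessol(v)\}$ is a partition of $N_G(v)$. Citing that property (together with item (v), which ties edges in $\planarreemb^*$ to membership in the sets $N_{\dot{v}}^*$) is what makes the "exactly once" conclusion airtight, and also justifies your step (i). With that attribution corrected, the proof is fine.
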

\begin{proof}
  By Lemma~\ref{lma:brvertex} we know that Branching Rule~\ref{br:vertex} produces a branch that passes $\ssplitsol$ to Branching Rule~\ref{br:edge}. Therefore, in this branch the edges in $E_{\splt}$ must have both endpoints in $\ssplitsol$ in the solution. Let $\dot{u}^{(i)},\dot{v}^{(j)} \in \ssplitsol$ such that  $(\origsol(\dot{u}^{(i)}),\origsol(\dot{v}^{(j)}))\in E$.
  Let us assume no branch created by Branching Rule 2 contains $(\dot{u}^{(i)},\dot{v}^{(j)})$ and we do not find the solution. 
  As the branching rule attempts all possible assignments of copy pairs, it necessarily attempts to add $(\dot{u}^{(i)},\dot{u}^{(j)})$ and the assumption contradicts the branching rule. So the branching rule is sound.
\end{proof}

As a second step, we want to reduce the size of our input, or more precisely the input drawing $\Gamma$.
Note that faces that are not incident to any pistil do not play an important part in the solution as there are no crossing-free edges that can be realized in such a face to achieve our desired adjacencies between copies and pistils.
Hence, we use a reduction rule that removes all vertices not adjacent to a petal from~$\Gamma$.
We now show that this results in a $10k$-outerplanar drawing.
\subsubsection{Obtaining $10k$-Outerplanarity}\label{sec:ssre-fpt-outerplanar}

Our algorithm in this section should output a planar drawing. 
This means that a copy $\dot{v} \in \ssplitsol$ that is reinserted into $\Gamma$ is added only into those faces that are incident to neighbors of $\origsol(\dot{v})$. 
Consequently, any face in $\Gamma$ that has no pistils can safely be ignored by a solution.%

\begin{reduction}\label{reduc:outerp}
Let $(G, \Gamma, \candidate, k)$ be a $\pSSRE$ (\textsc{SSRE}) instance. Any vertex of $G$ not incident to a petal in $\Gamma$ is removed from $G$ and $\Gamma$, alongside all of its incident edges. This results in the reduced instance $(G', \Gamma', \candidate, k)$.
\end{reduction}

\begin{lemma}\label{lma:reduc:outerp}
  Reduction Rule~\ref{reduc:outerp} is sound: given an instance $I=(G, \Gamma, \candidate, k)$ of \textsc{SSRE}, and applying Reduction Rule~\ref{reduc:outerp} to get $I'=(G', \Gamma', \candidate, k)$, it holds that $I$ is a yes-instance if and only if $I'$ is a yes-instance.
\end{lemma}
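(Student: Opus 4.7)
The plan is to show that Reduction Rule~\ref{reduc:outerp} preserves solvability because the only faces of $\Gamma$ it modifies are irrelevant to the re-embedding of copies. First I would observe that a vertex $v$ removed by the rule cannot be adjacent to any pistil: any edge $vp$ to a pistil~$p$ is on the boundary of at least one face incident to~$p$, and such a face is a petal by definition---contradicting that $v$ is not incident to any petal. Hence the removed vertex-and-edge set lies strictly inside the non-petal part of~$\Gamma$, and in particular the rule leaves the set of pistils untouched.

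Next I would argue that the regions of all petal faces coincide in $\Gamma$ and in~$\Gamma'$. Deleting a vertex $v$ together with its incident edges can only merge the faces appearing on the boundary around~$v$; since each such face is a non-petal by the previous observation, the merging produces only new non-petal faces, and every petal region is left unaffected. Iterating over all removed vertices, the combinatorial and geometric petal structure of $\Gamma'$ is identical to that of~$\Gamma$.

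With these two facts both directions of the equivalence become natural. For $(\Rightarrow)$, given a solution $(\ssplitsol,\copiessol,\origsol,(N_{\dot{v}}^*)_{\dot{v} \in \ssplitsol},\planarreemb^*)$ of~$I$, I would restrict $\planarreemb^*$ to~$G'$ by erasing the removed vertices and edges. Each copy in $\ssplitsol$ is necessarily embedded inside a petal---crossing-free edges to its pistil neighbours force this---and since petals are unchanged, the restricted drawing is planar and realises exactly the same adjacencies, yielding a solution of~$I'$ with the same tuple of copies, mappings, and neighbourhoods. For $(\Leftarrow)$, given a solution $\sigma'$ of~$I'$, I would re-insert each removed vertex and edge at its original position in~$\Gamma$. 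These lie entirely in non-petal regions, whereas every copy and every copy-to-pistil edge introduced by $\sigma'$ lies inside a petal region; since the two kinds of regions are disjoint, no new crossings are created and the result is a valid solution of~$I$.

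The main obstacle I expect is the geometric claim in the second step: that removing $v$ only alters the faces on whose boundary $v$ appears, and that all such faces are non-petals. The latter is immediate from the rule itself, but showing that the petal regions remain unchanged requires a careful local analysis of the cyclic sequence of faces around~$v$, in particular the subtleties that arise when $v$ is a cut vertex of~$G$ and the surrounding non-petal faces merge into several distinct pieces rather than a single one.
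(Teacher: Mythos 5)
Your overall approach mirrors the paper's: you observe that petal faces are preserved under the reduction and argue each direction by restricting or re-inserting into the drawing. Your first observation (a removed vertex cannot be adjacent to any pistil) and the petal-invariance observation are correct and useful. However, there is a genuine gap: the claim ``Each copy in $\ssplitsol$ is necessarily embedded inside a petal---crossing-free edges to its pistil neighbours force this'' is false. A copy $\dot{v}$ may have $N_{\dot{v}}^*\subseteq\candidate$, i.e., all of its realised neighbours are other copies, so $\dot{v}$ has no pistil neighbours at all and nothing constrains it to a petal. The paper handles exactly this case explicitly (``Any copy embedded in a non-petal face of $\Gamma$ must have degree 0 ... and can therefore be embedded anywhere'').

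For your $(\Rightarrow)$ direction the false claim turns out to be harmless: erasing the removed (non-petal) vertices and edges leaves such a copy floating in a merged non-petal face, and the restricted drawing is still planar with the same adjacencies. But your $(\Leftarrow)$ direction actually needs the claim and so breaks down: if $\sigma'$ places a pistil-free copy $\dot{v}$ (or a connected component of such copies) inside a non-petal face of $\Gamma'$ that was formed by merging several non-petal faces of $\Gamma$, then re-inserting the removed vertices and edges ``at their original positions'' may cross $\dot{v}$ or the edges between copies lying in that face, so ``no new crossings are created'' does not follow. The fix is easy---such a component has no edges to any vertex of $G'$ and can be relocated to any face---but it must be stated; as written the disjointness-of-regions argument does not cover this case.
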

\begin{proof}

We first show that for a face $f'\in \Gamma'$, but $f' \not \in \Gamma$ there cannot be pistils on its boundary, meaning that any vertex inserted into $f'$ has degree 0 and can be embedded anywhere. Assume there is a pistil $p$ incident to $f'$ and let $F$ be the set of faces incident to $p$ in $\Gamma$. The reduction rule does not remove any vertex incident to a face of $F$ as these faces are all petals. So the faces incident to $p$ in $\Gamma'$ must be the same as those in $\Gamma$. This contradicts that $f'$ is not in $\Gamma$ and hence there cannot be a pistil on $f'$ and a solution for $I'$ is a solution for $I$.

Given a solution $\sigma=(\ssplitsol, \copiessol, \origsol, (N_u)_{u \in \ssplitsol}, \planarreemb^*)$ to $I$, we will show that $\sigma$ is also a solution to $I'$. %
Since every petal of $\Gamma$ is also a face in $\Gamma'$ (we do not remove vertices incident to petals), every embedding in a petal face of $\Gamma$ in $\sigma$ can be replicated in $\Gamma'$. Any copy embedded in a non-petal face of $\Gamma$ must have degree 0 to preserve planarity and can therefore be embedded anywhere.
As a result, $I$ is a yes-instance if and only if $I'$ is a yes-instance, and Reduction Rule~\ref{reduc:outerp} is sound.
\end{proof}

In the remainder, let $\outerp$ be the drawing obtained from $\Gamma$ after exhaustively applying Reduction Rule~\ref{reduc:outerp}. We use \cref{lma:reduc:outerp} every time we apply Reduction Rule~\ref{reduc:outerp} to get the following corollary.

\begin{corollary}\label{cor:reduc:outerp}
Let $I_\text{p}=(G, \Gamma, \candidate, k)$ be a $\pSSRE$ instance, and let $I_\text{o}=(G', \outerp, \candidate, k)$ be the instance obtained from $I_\text{p}$ after applying Reduction Rule~\ref{reduc:outerp} exhaustively. $I_\text{p}$ is a yes-instance if and only if $I_\text{o}$ is a yes-instance.
\end{corollary}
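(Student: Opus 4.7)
The proof is a straightforward induction that lifts the single-step equivalence of \cref{lma:reduc:outerp} to the exhaustive closure. The plan is to define a sequence of instances $I_0 := I_\text{p}, I_1, I_2, \ldots, I_m := I_\text{o}$, where $I_{i+1}$ is obtained from $I_i$ by a single application of Reduction Rule~\ref{reduc:outerp} and $I_m = I_\text{o}$ is the fixed point of the rule (i.e., no vertex outside a petal remains in $\outerp$). Each rule application strictly decreases $|V(G)|$, so $m \le |V(G)|$ and in particular $m$ is finite and well-defined.

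Next I would apply \cref{lma:reduc:outerp} to each consecutive pair $(I_i, I_{i+1})$ in the sequence: the lemma is stated for an arbitrary \pSSRE{} instance and a single application of the reduction, so it yields that $I_i$ is a yes-instance of \pSSRE{} if and only if $I_{i+1}$ is. Chaining these $m$ biconditionals gives $I_\text{p} = I_0$ a yes-instance iff $I_\text{o} = I_m$ a yes-instance, which is precisely the statement.

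The one thing I would verify explicitly, though it is not really an obstacle, is that after a single application the remaining instance is still a well-formed \pSSRE{} instance to which the lemma can be reapplied: the rule only removes vertices not incident to any petal in the current drawing, so $\candidate$ and its surrounding pistils and petals are preserved, and face merges caused by vertex removal can only enlarge petals, never destroy them. Hence the hypothesis of \cref{lma:reduc:outerp} holds at every step of the iteration, and the induction goes through without further work.
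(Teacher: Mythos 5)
Your proposal is correct and matches the paper's approach: the paper likewise obtains the corollary by invoking Lemma~\ref{lma:reduc:outerp} for each application of Reduction Rule~\ref{reduc:outerp} and chaining the resulting equivalences. The extra observation you make (that intermediate instances remain well-formed and petals are preserved) is a sound sanity check; in fact one can say something slightly stronger than ``petals can only be enlarged'': the proof of Lemma~\ref{lma:reduc:outerp} shows that faces incident to a pistil are left entirely unchanged, so the set of petals is preserved exactly.
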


\begin{figure}[b]
	\centering
	\includegraphics{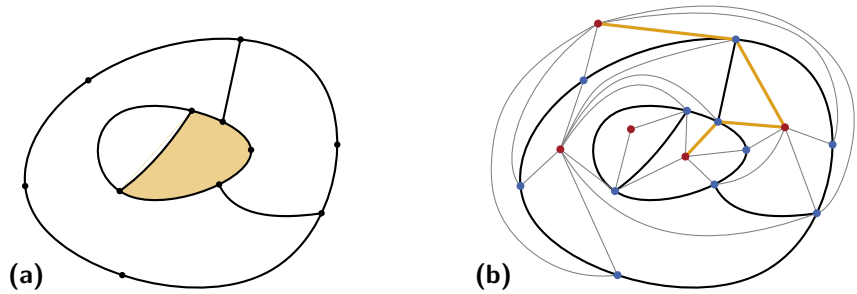}
	\caption{
		\textbf{\textsf{(a)}} A graph and \textbf{\textsf{(b)}} its modified dual. A shortest face path, drawn with yellow edges in \textbf{\textsf{(b)}}, from the yellow face in \textbf{\textsf{(a)}} to the outer-face alternates between face-vertices (in red) and vertex-vertices (in blue).
	}
	\label{fig:path}
\end{figure}%

We introduce the notion of a face path in a drawing $\Gamma$ via a modified bipartite dual graph $D$ of $\Gamma$ whose two sets of vertices are the faces and vertices of $G$ (denoted as \emph{face-vertex} and \emph{vertex-vertex}, respectively) and whose edges link each face-vertex with all its incident vertex-vertices as shown in \cref{fig:path}. Paths in $D$ are called \emph{face paths} and they alternate between faces and vertices of $\Gamma$.
We define the length of a face path as the number of its vertex-vertices. %
Using this notion, a drawing is $k$-outerplanar if all of its vertex-vertices have a face path to the outer face of length at most $k$. We now show that applying the reduction rule exhaustively to an instance of \pSSRE\ transforms drawing $\Gamma$ into a $10k$-outerplanar drawing $\outerp$.

\begin{lemma}\label{outerplma}
If $(G, \Gamma, \candidate, k)$ is a yes-instance, then $\outerp$ $10k$-outerplanar.
\end{lemma}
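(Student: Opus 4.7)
The plan is to fix an arbitrary solution $\sigma = (\ssplitsol, \copiessol, \origsol, (N_u^*)_{u \in \ssplitsol}, \planarreemb^*)$ (so $|\ssplitsol| \leq 2k$), pick any vertex $v$ of $\outerp$ together with a shortest face path from $v$ to the outer face, and show by a labeling argument that this path visits at most $10k$ faces, which gives the desired bound on its length.

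First I would record two structural facts about $\outerp$. (a) Every vertex of $\outerp$ lies on the boundary of some petal of $\outerp$: Reduction Rule~\ref{reduc:outerp} only deletes vertices that were incident to no petal of $\Gamma$, and deleting a vertex can only merge the faces around it, so a surviving vertex $v$ that was originally on a petal $f$ with pistil $p$ (which is itself never deleted) stays on the same boundary as $p$ in the resulting merged face of $\outerp$, which is thus a petal. (b) In $\planarreemb^*$ every pistil $p$ is adjacent to at least one copy $\dot{v} \in \ssplitsol$, since every edge between $p$ and $\candidate$ in $G$ must be realized in the solution; this copy is embedded in some face $F^*$ of $\outerp$ incident to $p$.

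Combining (a) and (b), every face $f$ of $\outerp$ lies within face-path distance at most $2$ of a face that contains a copy: pick any vertex $u$ on the boundary of $f$, a petal $P$ of $\outerp$ through $u$, a pistil $p$ on the boundary of $P$, and a face $F^*$ of $\outerp$ incident to $p$ that holds a copy; then $f, u, P, p, F^*$ is a face path with two vertex-vertices. I label each face on the chosen shortest face path $v = v_0, f_0, v_1, f_1, \ldots, v_m, f_\infty$ by one such witnessing copy.

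The core counting step is that each of the at most $2k$ copies is used as a label at most five times on this path. Indeed, if $\dot{v}$ embedded in $F^*$ labels both $f_i$ and $f_j$ with $i<j$, then $d(f_i, F^*), d(f_j, F^*) \le 2$, so the triangle inequality in the face-distance metric yields $d(f_i, f_j) \le 4$; but the restriction $f_i, v_{i+1}, f_{i+1}, \ldots, v_j, f_j$ of our shortest face path is itself a shortest face path between $f_i$ and $f_j$ and has exactly $j-i$ vertex-vertices, forcing $j - i \le 4$. Hence the $m+1$ faces on the path satisfy $m + 1 \le 5 \cdot 2k = 10k$, bounding the face-path length by $10k$. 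The main obstacle is step~(a): one must argue carefully that, even after exhaustively deleting many vertices, Reduction Rule~\ref{reduc:outerp} really preserves the ``every vertex lies on a petal'' invariant in $\outerp$, since the faces of $\outerp$ differ from those of $\Gamma$.
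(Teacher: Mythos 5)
Your proof is correct and follows essentially the same strategy as the paper's: label faces along the shortest face path by a nearby solution copy (at face-path distance at most two), note there are at most $2k$ copies, and use a shortcut/triangle-inequality argument to bound each label's multiplicity on the path by five, giving $10k$. The concern you flag about fact~(a) is unfounded: after \emph{exhaustive} application of Reduction Rule~\ref{reduc:outerp}, every remaining vertex is by definition incident to a petal of the current drawing $\outerp$, so no separate invariant-preservation argument is needed.
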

\begin{proof}
  We show that the shortest face path from any vertex of $\outerp$ to the outer face has length at most $10k$. Let $\ssplitsol$ be the set of copies of $\candidate$ embedded in the drawing $\planarreemb^*$ of a solution.
  We label each face $f$ of $\outerp$ by an arbitrary copy $\dot{v}$ in $\ssplitsol$ that is closest to $f$ (by length of face paths from the faces partitioning $f$ in $\planarreemb^{*}$).
  Note that the length of a face path in $\outerp$ from $f$ to the face of $\outerp$ that $\dot{v}$ is embedded in is at most two since a face is either incident to a pistil, or all its vertices are incident to such a face 
  and all pistils are covered.
  Since $|\ssplitsol| \le 2k$ there are at most $2k$ distinct labels for the faces of $\outerp$.
  Let $v$ be a vertex in $\outerp$ and let $P$ be a shortest face path between $v$ and the outer face.
  We claim that each label appears at most five times in $P$. Assume to the contrary that there are six faces $f_1, f_2, f_3, f_4, f_5, f_6$ with the same label $\dot{u}$ occurring in $P$ in this order. 
  We call $f_{\dot{u}}$ the face in which $\dot{u}$ is embedded in the solution. The faces $f_1$ and $f_6$ must have at most distance 2 from $f_{\dot{u}}$ as otherwise they would be labeled differently. Thus one can find a path from $f_1$ to $f_{\dot{u}}$ of length at most 2, and similarly a path from $f_{\dot{u}}$ to $f_6$ of length at most 2. Thus, there is a shorter path from $f_1$ to $f_6$ which contradicts our claim.
  Hence, $P$ has length at most $10k$, showing that $\outerp$ is $10k$-outerplanar.
\end{proof}

\subsection{Finding a Sphere-Cut Decomposition}\label{sec:ssre-fpt-scbd}

A \emph{branch decomposition} of a (multi-)graph $G$ is a pair $(T,\lambda)$ where $T$ is an unrooted binary %
tree, %
and $\lambda$ is a bijection between the leaves of $T$ and $E(G)$. Every edge $e\in E(T)$ defines a bipartition of $E(G)$ into $A_e$ and $B_e$ corresponding to the leaves in the two connected components of $T-e$.
We define the \emph{middle set $\midset(e)$} of an edge $e \in E(T)$ to be the set of vertices incident to an edge in both sets $A_e$ and $B_e$. The \emph{width} of a branch decomposition is the size of the biggest middle set in that decomposition. The \emph{branchwidth} of $G$ is the minimum width over all branch decompositions of $G$. 

\looseness=-1
A \emph{sphere-cut decomposition} of a planar (multi-)graph $G$ with a planar embedding $\Gamma$ on a sphere $\Sigma$ is a branch decomposition $(T,\lambda)$ of $G$ such that for each edge $e\in E(T)$ there is a \emph{noose} $\eta(e)$: %
a closed curve on $\Sigma$ such that its intersection with $\Gamma$ is exactly the vertex set $\midset(e)$ (i.e., the curve does not intersect any edge of $\Gamma$) and such that the curve visits each face of $\Gamma$ at most once (see Figure~\ref{fig:scd}).
The removal of $e$ from $E(T)$ partitions $T$ into two subtrees $T_1,T_2$ whose leaves correspond respectively to the noose's partition of $\Gamma$ into two embedded subgraphs $G_1,G_2$. Sphere-cut decompositions are introduced by Seymour and Thomas~\cite{SeymourT94}, more details can also be found in~\cite[Section 4.6]{MarxP15}.%
\begin{figure}
	\centering
	\includegraphics{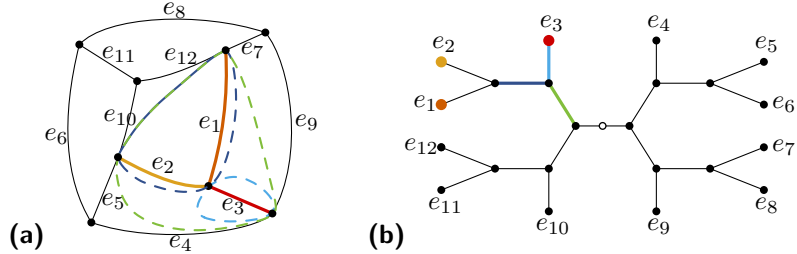}
	\caption{
		\textbf{\textsf{(a)}} A graph and \textbf{\textsf{(b)}} its sphere-cut decomposition. Each labeled leaf corresponds to the same labeled edge of the graph. The middle set of each colored edge in the tree corresponds to the vertices of the corresponding colored dashed noose in the graph.
	}
	\label{fig:scd}
\end{figure}
The \emph{length} of the noose $\eta(e)$ for an edge $e \in E(T)$ is the number of vertices on the noose (or the size of $\midset(e)$) and it is at most the branchwidth %
of the decomposition.
The drawings in this paper are defined in the plane, whereas we need drawings on the sphere for sphere-cut decompositions. However, if we treat the outer face of a planar drawing just as any other face, then spherical and planar drawings are homeomorphic.

An $\ell$-outerplanar graph has branchwidth at most $2 \ell$ \cite{Biedl_2015}. Moreover, each connected bridgeless planar graph of branchwidth at most $b$ has a sphere-cut decomposition of width at most $b$ and this decomposition can be computed in $O(n^3)$ time where $n$ is the number of vertices (this has been shown by Seymour and Thomas~\cite{SeymourT94}, see the discussion by Marx and Pilipczuk~\cite[Section 4.6]{MarxP15}). We transform any bridge in our newly obtained graph into a multi edge to ensure that the graph is bridgeless.

\subsubsection{Obtaining a Bridgeless Graph}\label{sec:ssre-fpt-bridgeless}
While our graph drawing~$\outerp$ is already $\ell$-outerplanar (more specifically $10k$-outerplanar), we have to deal with the \emph{bridges}, i.e., edges of $G$ whose removal disconnect $G$.
There is no guarantee our graph is bridgeless and even if it was required of the input, Reduction Rule~1 might create bridges.
We instead create a new graph $G''$ together with a drawing $\bless$, in which for any bridge $(u,v)$, we add a secondary multi edge between $u$ and $v$ and continue to work with the resulting multigraph.
Note that adding the edge does not affect the outerplanarity and hence does not affect the bound of the width of the decomposition.

\begin{lemma}\label{lma:bridgisouter}
  Given the following two instances of $\pSSRE$, $I_\text{o}=(G', \outerp, \candidate, k)$ and $I_\text{b}=(G'', \bless, \candidate, k)$, which differ only in the graph $G'$ and $G''$ and their respective drawings $\outerp$ and $\bless$, where $G''$ is a copy of $G'$ plus a duplicate of every bridge, then $I_\text{o}$ is a yes-instance if and only if $I_\text{b}$ is a yes-instance.
\end{lemma}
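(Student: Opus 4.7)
The plan is to establish the two directions of the biconditional by adding or removing the duplicate bridge edges while preserving the planar drawing structure. The key observation is that the duplicated edges are edges among vertices of $G' \setminus \candidate$ (since every bridge in $\outerp$ is an edge of the drawn graph and hence has no candidate endpoint), so duplicating them changes neither the candidate set $\candidate$ nor the set of pistils and petals, nor the adjacencies required of the copies in $\ssplit$.

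For the direction $I_\text{o} \Rightarrow I_\text{b}$, I would start from a solution $(\ssplitsol, \copiessol, \origsol, (N_{\dot{v}}^*)_{\dot{v} \in \ssplitsol}, \planarreemb^*)$ of $I_\text{o}$. For each bridge $(u,v)$ of the graph drawn by $\outerp$, its image in $\planarreemb^*$ is a simple arc $\gamma_{uv}$ that is free of crossings. I would take a sufficiently thin open tubular neighborhood $T$ of $\gamma_{uv}$, pinched off near the endpoints $u$ and $v$, small enough that $T$ contains no other vertex or edge of $\planarreemb^*$; this is possible because $\planarreemb^*$ has only finitely many other edges and each is disjoint from $\gamma_{uv}$. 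Inside $T$ I then route a second arc from $u$ to $v$ realizing the duplicate edge. The resulting drawing is planar, extends $\bless$, and uses the same copies and neighborhoods $(N_{\dot{v}}^*)$ as before, so it witnesses that $I_\text{b}$ is a yes-instance.

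The direction $I_\text{b} \Rightarrow I_\text{o}$ is immediate: given a planar drawing $\planarreemb_\text{b}^*$ realizing a solution for $I_\text{b}$, I would erase from $\planarreemb_\text{b}^*$ the arc corresponding to each duplicate bridge edge, leaving every vertex and every other edge untouched. The resulting drawing is still planar, still extends $\outerp$, and uses exactly the same copies $\ssplitsol$ with the same neighborhoods $(N_{\dot{v}}^*)$, so it is a solution for $I_\text{o}$.

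The main subtlety I anticipate is justifying that the parallel arc in the forward direction can be drawn within the tubular neighborhood without crossing any other edge or passing through any other vertex, and without disturbing the face structure used by the copies. Both points reduce to a standard topological argument: because $\gamma_{uv}$ is simple and disjoint from every other object in $\planarreemb^*$, a sufficiently thin neighborhood of it meets no other edge or vertex, and the parallel arc inherits this property. Since no copy is drawn on $\gamma_{uv}$ itself, the added arc only splits the single face incident to both sides of the bridge into two faces, both of which still carry exactly the same pistils on their boundaries as before, so the realized adjacencies of copies are unchanged.
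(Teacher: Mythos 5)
Your proof is correct and follows essentially the same strategy as the paper: in both directions you transform a solution of one instance into a solution of the other while leaving the copies and their assigned neighborhoods fixed. Your backward direction is in fact cleaner than the paper's: the paper performs a case analysis to relocate any copy that lands inside the digon face between the two parallel bridge edges, whereas you observe (correctly) that simply erasing the duplicate arc merges that digon into the enclosing face of $\outerp$ without creating any crossing, so no relocation is needed.
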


\begin{proof}
Given a solution $\sigma_\text{o}=(\ssplitsol, \copiessol, \origsol, (N_{\dot{v}})_{\dot{v} \in \ssplitsol}, \planarreemb^*)$ to $I_\text{o}$, we will show that $\sigma_\text{o}$ is also a solution to $I_\text{b}$.
We can extend $\sigma_\text{o}$ to solve $I_\text{b}$ as follows. All the faces of $\outerp$ exist in $\bless$ and have the same set of vertices incident to them. This means that any split vertex $s$ inserted into a face $f$ of $\outerp$ to cover a set of vertices $N(s)$ can be inserted in the same face of $\bless$ and cover the same set of vertices. 
So $\sigma_\text{o}$ is a solution to $I_\text{b}$.

Given a solution $\sigma_\text{b}=(\ssplitsol, \copiessol, \origsol, (N_{\dot{v}})_{\dot{v} \in \ssplitsol}, \planarreemb^*)$ to $I_\text{b}$, we can extend $\sigma_\text{b}$ to solve $I_\text{o}$ in the same way as for the previous paragraph with one exception. There may be split vertices that are embedded into a face bounded by two multi-edges $(u,v)$ of $\bless$ that does not exist in $\outerp$. Such a split vertex~$\dot{s}$ would have one or both of $u,v$ as neighbors. If the split vertex only has one neighbor then it can be embedded in any face incident to that neighbor without disturbing anything. If $N(\dot{s})=\{u,v\}$, then there is only a single face of $\outerp$ in which they both lie (by virtue of being a bridge).
If there are no vertices embedded on that face and we can freely put $\dot{s}$ in it and have it reach its neighborhood without risk of creating a crossing. Otherwise, if there are other copies of split vertices embedded in that face, we can still embed $\dot{s}$ sufficiently close to edge $(u,v)$ and connect it to $u$ and $v$ by two crossing-free edges. 
This means that after embedding all such vertices, we have found a solution to $I_\text{o}$, and with this we showed that $I_\text{o}$ is a yes-instance if and only of $I_\text{b}$ is a yes-instance.
\end{proof}
}

\newcommand\pREDP{\textsc{Split Set Re-Embedding Decomp}}

\toappendix{
As a result, we have a bridgeless graph with a $10k$-outerplanar drawing which thus admits a sphere-cut decomposition of width bounded by $20k$, as discussed before. Moreover, we can compute this decomposition in $O(n^{3})$ time \cite{Biedl_2015,MarxP15,SeymourT94}.

\subsection{Initializing the Dynamic Programming}\label{sec:ssre-fpt-dp}

In the previous sections, we used branching and computed a set~$\ssplit$ as well as the mappings~$\copies$ and~$\orig$ for each branch. We now use dynamic programming on the sphere-cut decomposition $(T_0,\lambda)$ of the bridgeless $10k$-outerplanar graph $G'$, obtained previously, and its drawing $\Gamma'$ to find the remaining elements of the solution: $(N_{\dot{v}}^*)_{\dot{v} \in \ssplit}$ and $\planarreemb^*$.
We have therefore effectively reduced \pSSRE\ to the following more restricted problem:

\begin{problem}[\pREDP]
  Let the following be given: a graph $G=(V,E)$, a set $\candidate\subseteq V$, an integer $k \ge |\candidate|$, a drawing $\Gamma$ on the sphere of the $10k$-outerplanar bridgeless graph $G' = (V', E') := G[V \setminus \candidate]$, its sphere-cut decomposition $(T_0,\lambda)$, and, as guessed by the initial branching, two mappings $\orig$ and $\copies$ and a graph $G_\ssplit$ on the set of vertices $\ssplit$.
  The task is to decide whether there is a solution $(\ssplitsol, \copiessol, \origsol, (N_{\dot{v}}^*)_{\dot{v} \in \ssplitsol}, \planarreemb^*)$ to the instance $(G, \candidate, \Gamma, k)$ of \pSSRE\ 
  that coincides with the guessed branch, i.e., $\ssplitsol = \ssplit$, $\origsol = \orig$, $\copiessol = \copies$, and $G^*[\ssplitsol] = G_\ssplit$, where $G^*$ is the corresponding solution graph.
\end{problem}

}

\toappendix{
We first transform $T_0$ into a rooted tree $T$ by choosing an arbitrary edge $e_r=(r_1,r_2)\in T_0$ and subdividing it with a new root vertex~$r$. %
This induces parent-child relationships between all the vertices in $T$ and we set for a given vertex $t\in V(T)$ with parent $p$, the noose $\eta(t) = \eta((p,t))$ corresponding to the set $\midset((p,t))$.
Since for each $t \in V(T) \setminus \{r\}$ the parent~$p$ is unique, we simply use $\midset(t)$ instead of $\midset((p,t))$.
Additionally we put $\midset(r_1)=\midset(r_2)=\midset(e_r)$. %

The dynamic program works bottom-up in $T$, considering iteratively larger subgraphs of the input graph~$G'$. It determines how partial solutions look like on the interface between subgraphs and the rest of $G'$.
We need the following notions to define this interface.

  For a vertex $t\in V(T)$ and the noose $\eta(t)$ associated to it, we define the subgraph $G'_t$ of $G'$ as the subgraph obtained from the union of the edges that correspond to the leaves in the subtree of $T$ rooted at $t$.
  We say that a subgraph of $G'$ is \emph{inside} the noose $\eta(t)$, %
  if it is a subgraph of $G'_t$. %
  For a noose $\eta(t)$ of $(T, \lambda)$, the \emph{processed} faces are faces of the subgraph $G'_t$ that have all their incident vertices inside or on $\eta(t)$; the \emph{current} faces of $\eta(t)$ are the faces of $G'$ that have vertices inside, on, and outside of the noose $\eta(t)$.

Nooses in sphere-cut decompositions by definition pass through a face at most once, so it is not possible for a face to have an intersection with a noose that contains more than two vertices.
Additionally, since $\Gamma$ is embedded on the sphere, there is no outer face.%

Next, we define partial solutions for the subgraphs of $G'$. Intuitively, a partial solution for a subgraph $G'_{t}$ is a planar drawing~$\Gamma'$ of $G'_{t}$ using a set~$\ssplitp$ of copies that covers all of the pistils inside the noose~$\eta(t)$ defining $G'_{t}$.

\begin{definition}
    A \emph{partial solution} for $G'_t$ is a tuple $(\ssplitp, (N_{\dot{v}})_{\dot{v} \in \ssplitp}, \Gamma')$, s.t.\ $\ssplitp \subseteq \ssplit$ and:
\begin{enumerate}[(i)]
\item for every $\dot{v} \in \ssplitp$, we have $N_{\dot{v}} \subseteq N_G(\orig(\dot{v}))$,
\item planar drawing~$\Gamma'$ extends $\Gamma$ by embedding each copy $\dot{v} \in \ssplitp$ in a face of $G'_t$,
\item for every $v \in \orig(\ssplitp)$, $\{N_{\dot{v}} \mid \dot{v} \in \copies(v)\}$ is a partition of some subset of $N_G(v)$,
\item for each pistil $p$ of $G'$ that is inside (and not on) the noose $\eta(t)$ and for each neighbor $v \in N_G(p) \cap \candidate$ there is a copy $\dot{v} \in \copies(v)$ such that $p \in N_{\dot{v}}$ and $\dot{v} \in N_{\Gamma'}(p)$, and\label{cond:pistils-covered}
\item $G(\Gamma')[\ssplitp] = G_\ssplit[\ssplitp]$, where $G(\Gamma')$ is the graph corresponding to the partial solution drawing $\Gamma'$ and $G_\ssplit$ is the graph guessed by branching.%
\end{enumerate}
\end{definition}

We now describe the information that is required during dynamic programming to find partial solutions. To model a partial solution in a current face, we describe the combinatorial embedding of the split vertices embedded inside that face using a structure called a nesting graph.
Let $\eta$ be a noose and let $f$ be a face that is current for~$\eta$.
A \emph{nesting graph} $C_f$ for $f$ and a vertex set~$S_f\subseteq\ssplitp$ to be embedded in $f$ is a combinatorially embedded graph with vertex set $S_f \cup C$, such that $C$ induces a cycle that encloses all vertices in $S_f$. It is a combinatorial representation of the boundary of $f$ and its vertices represent a subset of the vertices of the boundary of $f$ (possibly merged).

We call the vertices in $C$ \emph{cycle vertices}.
If $S_f=\emptyset$, the nesting graph is the empty graph. If $S_f=\{s\}$, the cycle is of length 2 and both cycle vertices share an edge with $s$. If $|S_f|>1$, the following conditions must be satisfied:
\begin{enumerate}[(i)]\label{nestinggraph}
\item embedded graph~$C_f$ is planar and all vertices in $C$ must be incident with its outer face,
\item the vertices of 
$\cup_{\dot{v}\in S_f}N(\dot{v})\setminus S_f$ lie in a cycle $C$ and $S_f$ is embedded inside of $C$,%
\item each vertex $u\in C$ has exactly one neighbor in $S_f$,\label{cond:nest-graph-cycle-degree}
\item for any two vertices $c_1,c_2\in C$ that have the same neighbor $\dot{s}\in S_f$, it holds that $c_1$ and $c_2$ are not neighbors in $C_f$, and lastly\label{cond:nest-graph-cycle-nonneighbors}
\item\label{cond:nest-graph-induced} $C_{f}[S_f] = G_{\ssplit}[S_f]$.

\end{enumerate}

Intuitively, to check which pistils in the current face $f$ can be covered by $S_f$ while using the embedding of $C_f$, we can imagine $C_f$ embedded inside of $f$ and attempt to draw crossing-free edges from the vertices of $C$ to the pistils of $f$.

To describe the combinatorial embeddings of the vertices inside of the cycle of nesting graphs, we introduce the notion of \emph{compatibility} (see Figure~\ref{fig:nest}).

\begin{definition}
  Given a face $f$ and two copies $\dot{u},\dot{v} \in \ssplit$ to be embedded in $f$ with their respective neighborhoods $N(\dot{u})$ and $N(\dot{v})$ incident to $f$, 
  we say that $\dot{v}$ \emph{is compatible with} $\dot{u}$ in $f$ if in the cyclic ordering of $N(\dot{u})$ and $N(\dot{v})$ around $f$, the respective neighborhoods do not interleave.
\end{definition}

\begin{figure}
	\centering
	\includegraphics[width=\textwidth]{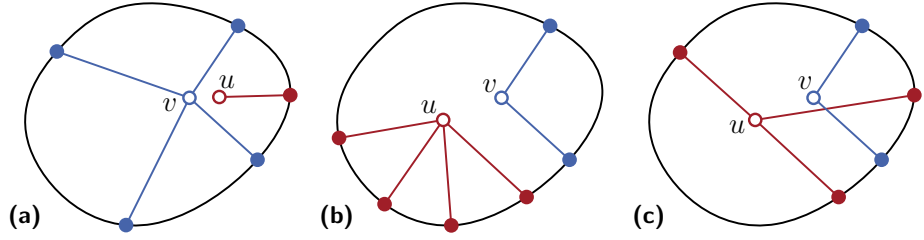}
	\caption{
		\textbf{(a)} A degree-one vertex $u$ is compatible with any other vertex, here only $v$. \textbf{(b)} Vertex $u$ is still compatible with $v$. \textbf{(c)} Vertices $u$ and $v$ are not compatible.
	}
	\label{fig:nest} 
\end{figure}

\looseness=-1
Since current faces of $G'_t$ are not fully inside~$\eta(t)$, we want to specify which part of a nesting graph $C_f$ is used in a current face $f$ to cover incident pistils. To achieve this we keep track of only the first and last vertices $\ps,\pe$ on the cycle $C$ of $C_f$ that connect to pistils in~$f$. %
That is, a clockwise traversal of $C$ from $\ps$ to $\pe$ visits all vertices used to cover pistils of $G'_t$. If only a single pistil is covered, then $\ps =\pe$;  if no pistil is covered $\ps$ and $\pe$ can be undefined.

While all pistils inside the noose~$\eta(t)$ of $G'_t$ are covered in a partial solution, the vertices on $\eta(t)$ can have missing neighbors.
That is, for a pistil $p$ on $\eta(t)$ with neighborhood $N_G(p)$ in $G$ and $N_{\Gamma'}(p)$ in $\Gamma'$, 
and $v\in\candidate\cap N_G(p)$ 
such that $\copies(v)\cap N_{\Gamma'}(p)=\emptyset$, vertex~$v$ is a \emph{missing neighbor} of~$p$. We define $X_t(p)$ as the set of missing neighbors of $p$.

Using the elements described above, we build a \emph{signature} for a node $t\in T$ modeling a possible solution.
\begin{definition}
  A \emph{signature} on $t \in V(T)$ is a tuple $(\inner(t), \cycle(t), M_t, \noose(t))$
  where:
  \begin{enumerate}
  \item $\inner(t)\subseteq \ssplit$,
  \item $\cycle(t)$ is a set containing a nesting graph $C_{f}$ for each current face $f$ of $G'_t$ such that no two of the graphs share a vertex in $\ssplit$,
  \item $M_t \colon \cycle(t) \to V(\cycle(t))\times V(\cycle(t))$ maps each nesting graph to a pair $(\ps,\pe)$, and
  \item $\noose(t)$ is a list of $|\midset(t)|$ sets of missing neighbors, defined as $\langle X_t(p) \rangle_{p \in \midset(t)}$.
  \end{enumerate}
\end{definition}
For a given signature on $t \in V(T)$, the set $\inner(t)\subseteq \ssplit$ tracks the split vertices embedded in the processed faces of~$G'_t$, graph~$\cycle(t)$ and $M_t$ track the embedding information in current faces, and $\noose(t)$ tracks the missing neighbors of the noose vertices~\cref{fig:sigcontent}.

\looseness=-1
We now show that every partial solution has a corresponding signature. %
For a partial solution $(\ssplitp, (N_{\dot{v}})_{\dot{v} \in \ssplitp}, \Gamma')$ for $G'_t$ we construct a signature tuple $(\inner(t),\cycle(t), M_t, \noose(t))$ (or $(\inner,\cycle, M, \noose)$ when $t\in V(T)$ is clear from context) as follows.
The set $\inner$ is composed of all vertices of $\ssplitp$ embedded in processed faces of $G'_t$.
To construct $\noose$, we create $X_t(p) \in \noose$ for each noose vertex~$p$, and add to $X_t(p)$ all neighbors in $\candidate$ for which no copy is adjacent to $p$ in~$\Gamma'$: $X_t(p) = (N_G(p) \cap \candidate) \setminus \orig(N_{\Gamma'}(p) \cap \ssplitp)$.
Lastly, for a current face $f$ of $G'_t$ where $S_f\subseteq\ssplitp$ are the copies embedded in $f$ in $\Gamma'$, we find the nesting graph~$C_f$ by transforming $f$ and the graph induced by $S_f$:
If $|S_f|=1$, graph $C_{f}$ will consist of that vertex embedded inside a 2-cycle, with outgoing edges from the split vertex to both cycle vertices.
If $|S_f|\geq 2$, the construction is as follows (see Figure~\ref{fig:nesting}).
\begin{itemize}
    \item For each $v\in V(G'_t)$ such that $N(v)\cap\ssplitp=\emptyset$ we contract one of its outgoing edges in~$f$. We repeat this process until all remaining vertices of $f$ are incident with a vertex of~$\ssplitp$.
    \item For each $u\in V(G'_t)$ we let $\dot{s}_1,\dots,\dot{s}_i$ be the ordering of $N(u)\cap \ssplitp$ clockwise around~$u$. We replace $u$ by the path $(u_1,u_2),\dots,(u_{i-1},u_i)$ on which we then add the edges $(\dot{s}_1,u_1),\dots,(\dot{s}_i,u_i)$. Thus each vertex on $f$ is adjacent to at most one vertex of $\ssplitp$.
    \item For $e=(u_1,u_2)$ incident to $f$, if there is a split vertex $\dot{s}\in \ssplitp$ such that $(u_1,\dot{s}),(u_2,\dot{s})$ in $\Gamma'$, we contract $e$ such that the new vertex obtained has a single edge to~$\dot{s}$.
    \end{itemize}
Since $G'_t$ is bridgeless and we initialized $C_f$ as $f$, $C_f\setminus S_f$ is necessarily a cycle~$C$.
To obtain $M$, for each graph in $C_f$ we find the section of $C$ used for the coverage of~$G'_t$, determine the first and last vertex in clockwise ordering on cycle~$C$, here $\ps$ and $\pe$, and set $M_t(C_f) = (\ps, \pe)$.
The obtained signature is the \emph{signature of the partial solution} and each such signature is \emph{valid}.

During the dynamic programming we enumerate all signatures, and hence we first determine how many distinct signatures can exist. To bound the number of different signatures, we need an upper bound on the number of vertices in a nesting graph.
\begin{lemma}\label{lma:cyclevertcount}
A nesting graph with $s$ split vertices embedded inside the cycle $C$ has at most $2s$ vertices on the cycle.
\end{lemma}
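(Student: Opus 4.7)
My plan is to prove $|C|\le 2s$ by an Euler's-formula count on a contracted form of the nesting graph. The case $s=1$ is immediate, since the definition fixes $|C|=2=2s$. For $s\ge 2$ the conditions (i)--(v) apply, and I will actually establish the sharper bound $|C|\le 2s-2$.

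For $s\ge 2$, I would first isolate the spine of $C_f$ by letting $H\subseteq C_f$ be the planar subgraph consisting of the cycle $C$ together with all $|C|$ ``spokes'' (edges between $C$ and $S_f$); the $G_\ssplit[S_f]$-edges guaranteed by condition (v) play no role in bounding $|C|$ and can be ignored. By condition (iii), every cycle vertex $c\in C$ has a unique neighbor $\ell(c)\in S_f$. I would then contract each spoke, merging $c$ into $\ell(c)$, to obtain a connected planar multigraph $G'$ whose vertex set $S^*\subseteq S_f$ consists of the split vertices that were incident to at least one spoke, and whose edge set has size $|C|$: each cycle edge $(c_i,c_{i+1})$ becomes an edge between $\ell(c_i)$ and $\ell(c_{i+1})$. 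Condition (iv) guarantees $\ell(c_i)\ne \ell(c_{i+1})$, so $G'$ is \emph{loopless}. Since we only contracted internal edges of $H$, the outer face of $G'$ is inherited from that of $H$, and its boundary walk---the contracted image of the traversal around $C$---has length exactly $|C|$.

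I would finish by applying Euler's formula to $G'$: from $|V(G')|=|S^*|$ and $|E(G')|=|C|$ we get $|F(G')|=2+|C|-|S^*|$, i.e.\ $1+|C|-|S^*|$ internal faces. The total face-boundary length equals $2|E(G')|=2|C|$; subtracting the outer face's contribution of $|C|$ leaves the internal faces with total boundary length $|C|$. Because $G'$ is loopless, every internal face has length at least $2$, so
\[
  2\bigl(1+|C|-|S^*|\bigr)\;\le\; |C|,
\]
and rearranging gives $|C|\le 2|S^*|-2\le 2s-2\le 2s$.

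The only step that really needs care is verifying that the outer face of $G'$ has boundary walk of length exactly $|C|$: this uses the standard fact that contracting an internal edge of a planar embedding preserves the outer face together with its combinatorial boundary walk, so that the contracted image of the walk around $C$ is precisely the outer-face walk of $G'$, even when multi-edges or cut vertices appear along the way.
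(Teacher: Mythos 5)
Your proof is correct and establishes the sharper bound $|C|\le 2s-2$ for $s\ge 2$, but it takes a genuinely different route from the paper. The paper also applies Euler's formula, but it keeps the nesting graph essentially intact (after discarding $S_f$--$S_f$ edges and the resulting isolated vertices), computes $|E'|=2|C|$ and $|C|\le s+|F|-2$, and then bounds the face count by a somewhat delicate traversal argument: it shows that any inner face with only a single cycle edge $(u,v)$ on its boundary would have to re-enter the cycle at a vertex distinct from both $u$ and $v$ (by conditions (iii) and (iv) of the nesting-graph definition), a contradiction. Your argument instead contracts each spoke $c\to\ell(c)$ to produce a connected loopless planar multigraph on $S^*\subseteq S_f$ with exactly $|C|$ edges, then closes the estimate using the standard fact that every face of a loopless embedded multigraph has degree at least $2$, together with a boundary-length count in which the outer face contributes $|C|$. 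This trades the paper's ad hoc face-traversal case analysis for a more generic degree bound, at the cost of having to verify that contraction of interior edges preserves the outer face's boundary walk --- which you explicitly flag and which holds here because the spokes lie strictly inside $C$ and condition (iv) rules out any contraction-induced loops. Both arguments prove the claimed $|C|\le 2s$; yours is arguably the more reusable template, since the "contract to a loopless multigraph, apply the face-degree-$\ge 2$ inequality" step requires no special structure beyond conditions (iii) and (iv).
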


\begin{proof}
  Let  $C_f=(V,E)$ be  a nesting graph, where $V=C \cup S_f$ with $C$ being the cycle defining the outer face and $S_f$ the vertex set embedded inside the cycle and let $|S_f|=s$.
  We remove all edges $e=(u,v)\in E$ such that $u,v\in S_f$ and all vertices in $S_f$ that have degree zero after the removal.
  We obtain a new set of faces $F$.
    Euler's formula tells us that for this modified nesting graph $G' = (V',E')$ we have $|V'|-|E'|+|F|=2$. 
    We first notice $|V'|\leq s+|C|$ and thus
    \begin{align}\label{eq:euler1}
      s + |C| - |E'| + |F| & \geq 2.
    \end{align}
    Considering that $E'=\{(u,v), u\in C, v\in C\}\cup \{(u,v), u\in C, v\in S_f\}$, that by property~(i) of nesting graphs $|\{(u,v), u\in C, v\in C\}|=|C|$, and that $|\{(u,v), u\in C, v\in S_f\}| = |C|$ (from (ii)), we obtain $|E'|=2|C|$.
    Plugging this into \cref{eq:euler1}, we have
    \begin{align*}
      &s + |F|  \geq 2 + |C|\\
      &s + |F| - 2  \geq |C|.
    \end{align*}
    We lastly show that $|F|\leq \frac{|C|}{2}$.
    Assume for a contradiction that a face $f\in F$ that is not the outer face has only one edge $e=(u,v)$ of $C$ on its boundary.
    We traverse $f$ starting from $v$, visiting its neighbor~$v_s$ different from $u$.
    Note that $v_{s} \in S_f$.
    Then as we have removed all edges incident to $v_s$ having their other endpoint in~$S_f$, the next vertex $v_c$ is necessarily on the cycle~$C$.
    Considering property \eqref{cond:nest-graph-cycle-nonneighbors} of nesting graphs we find that $v_c\ne u$.
    Thus our traversal continues and property \eqref{cond:nest-graph-cycle-degree} implies that the next vertex must be a vertex on the cycle different from $u$ and $v$, which contradicts our assumption that an inner face of $C_f$ only has a single cycle edge incident to itself.
    
    Hence, $|C|\leq s-2+\frac{|C|}{2}$, giving us $|C|\leq 2s-4\leq 2s$. %
\end{proof}

Next we bound the number of distinct signatures for $k$ split operations.
}

\begin{restatable}{lemma}{countingsig}
  \label{lma:sigcount}
  The number $N_s(k)$ of possible signatures is upper bounded by $2^{O(k^2)}$.
\end{restatable}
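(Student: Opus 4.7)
The plan is to bound each of the four components of a signature $(\inner(t),\cycle(t),M_t,\noose(t))$ independently and multiply the bounds. The key quantitative facts available are $|\ssplit|\le 2k$, $|\candidate|\le k$, the branchwidth bound $|\midset(t)|\le 20k$, the fact that each noose visits every face at most once (so the number of current faces of $G'_t$ is at most $|\midset(t)|\le 20k$), and the previous lemma that a nesting graph with $s$ interior copies has at most $2s$ cycle vertices.

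First I would handle the easy components. Since $\inner(t)\subseteq \ssplit$, there are at most $2^{|\ssplit|}\le 2^{2k}$ choices. For $\noose(t)$, the list has at most $|\midset(t)|\le 20k$ entries, and each entry $X_t(p)\subseteq \candidate$ contributes at most $2^{|\candidate|}\le 2^{k}$ options, giving at most $(2^{k})^{20k}=2^{20k^{2}}$ possibilities. This single component already drives the leading $2^{O(k^2)}$ term.

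Next I would bound $\cycle(t)$, which is the most delicate step. I would first choose an assignment of the copies of $\ssplit$ to either the processed region or to one of the at most $20k$ current faces; this partition into $(S_{f})_{f}$ contributes at most $(20k+1)^{2k}=2^{O(k\log k)}$. Given $S_f$ for each current face, the induced subgraph $C_f[S_f]=G_\ssplit[S_f]$ is fixed by the initial branching. By the previous lemma the cycle has length at most $2|S_f|$, so the total vertex count across all nesting graphs is at most $\sum_f (|S_f|+2|S_f|)\le 6k$. Combinatorially, each nesting graph is then determined by (i) its cycle length, (ii) which element of $S_f$ is the unique neighbor of each cycle vertex, and (iii) the cyclic order of incident cycle vertices around each $S_f$-vertex. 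The number of such combinatorial descriptions is at most $n_{f}^{O(n_f)}=2^{O(n_f\log n_f)}$ where $n_f$ is the number of vertices in that nesting graph, and summing over the current faces the exponent stays $O(k\log k)$.

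Finally, $M_t$ maps each of the at most $20k$ nesting graphs to an ordered pair of its own vertices, and since each nesting graph has $O(k)$ vertices this contributes at most $(O(k^{2}))^{20k}=2^{O(k\log k)}$. Multiplying the four bounds gives
\[
N_s(k)\ \le\ 2^{2k}\cdot 2^{20k^{2}}\cdot 2^{O(k\log k)}\cdot 2^{O(k\log k)} \ =\ 2^{O(k^{2})},
\]
which is the claimed estimate. The main obstacle I anticipate is the careful combinatorial enumeration of nesting graphs: one must verify that the constraints (outerplanarity of the cycle, degree-one condition on cycle vertices, and the fixed induced subgraph on $S_f$) reduce the description to the polynomial-in-$k$ data listed above, so that the factor stays $2^{O(k\log k)}$ and does not inflate beyond the $2^{O(k^2)}$ bound contributed by $\noose(t)$.
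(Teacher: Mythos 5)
Your proof is correct and follows the same overall strategy as the paper: you bound each of the four signature components independently, note that the $\noose$ component yields the dominant $2^{20k^2}$ term, and multiply. The bounds for $\inner$, $\noose$, and $M_t$ match the paper's reasoning essentially verbatim. Where you diverge is in bounding $\cycle$: the paper assembles all nesting graphs into a single auxiliary graph $\hat{C}$, triangulates it, appeals to the fact that triconnected planar graphs have at most linearly many combinatorial embeddings, and then counts planar graphs via $5$-degeneracy. You instead enumerate each nesting graph directly from its structural constraints — you observe that the induced subgraph $C_f[S_f]=G_\ssplit[S_f]$ is already fixed by the initial branching, and that each cycle vertex has a unique $S_f$-neighbor (condition (iii)), so the nesting graph is pinned down by the partition of $\ssplit$ among the at most $20k$ current faces, the cycle lengths, the cycle-to-$S_f$ attachment map, and the rotation system around $S_f$-vertices. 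Both approaches land on the same $2^{O(k\log k)}$ factor, which is dominated anyway. Your route is arguably more transparent since it exploits the nesting-graph axioms and the branching directly, whereas the paper's route via triconnected embeddings is more generic but needs extra machinery; the one point you should tighten in a final write-up is that your item (iii) specifies only the cyclic order of cycle-edges around each $S_f$-vertex and not the full rotation (including the interleaving with $G_\ssplit[S_f]$-edges) — but since the sum of degrees is $O(k)$, the number of full rotation systems is still $\prod_v(\deg(v)-1)! = 2^{O(k\log k)}$, so the bound survives.
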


\appendixproofwithrestatable{\countingsig*}
{
  \begin{proof}
  We first count the number of all possible sets $\inner$.
  These are necessarily subsets of $\ssplit$.
  Thus, there are $2^{2k}$ such sets. 

  To find the number of the possible sets $\noose$, we compute the set of all candidate vertices that are neighbors to the noose vertices.
  As our input graph must be a $10k$-outerplanar graph and we know that those have sphere-cut decompositions of branchwidth upper bounded by $20k$, there are at most $20k$ vertices on a noose in the decomposition.
Since a single vertex has at most $k$ split neighbors,
the set is of size at most $20k^2$. This means that there are at most $2^{20k^2}$ sets $\noose$.

To find the number of all the possible sets $\cycle$ of nesting graphs, consider constructing an auxiliary graph $\hat{C}$ that consists of the union of all nesting graphs in $\cycle$ connected by edges in an arbitrary tree-like fashion.
To find an upper bound for the number of sets $\cycle$ it is enough to find an upper bound on the number of (combinatorially) embedded graphs $\hat{C}$.
By \cref{lma:cyclevertcount} and since no two nesting graphs in $\cycle$ share a vertex in $\ssplit$ by definition of signatures it follows that $|V(\hat{C})| \leq 4k$ (recall that $|\ssplit| \leq 2k$).
Consider taking~$\hat{C}$ and triangulating it; in this way it becomes triconnected.
Thus $\hat{C}$ is a partial drawing of a triconnected embedded planar graph with at most $4k$ vertices.
The number of partial drawings of a fixed triconnected embedded planar graph with at most $4k$ vertices is at most $2^{O(k)}$ because such a graph contains $O(k)$ edges.
It remains to find an upper bound on the number of triconnected combinatorially embedded planar graphs.
Since each triconnected planar graph with $4k$ vertices has $O(k)$ different combinatorial embeddings (the embedding being fixed up to the choice of the outer face), an upper bound is $O(k)$ times the number of planar graphs with $4k$ vertices.
To bound the number of planar graphs with $4k$ vertices, recall that each planar graph is $5$-degenerate, that is, it admits a vertex ordering such that each vertex~$v$ has at most $5$ neighbors before $v$ in the ordering.
Thus, there are at most $((4k)^5)^{4k}$ planar graphs with $4k$ vertices because all planar graphs on a given vertex set can be constructed by, for each vertex~$v$, trying all possibilities to select the at most five neighbors of $v$ occurring before $v$ in the degeneracy ordering.
Overall, we obtain an upper bound of $2^{O(k)} \cdot O(k) \cdot O(k)^{O(k)} = 2^{O(k \log k)}$ for the number of sets $\cycle$.
 
We lastly need to find all mappings $M_t$ for the nesting graph set.
The trees have at most $4k$ vertices, which means in the worst case a cycle has $8k$ vertices.
This gives us at most $\binom{8k}{2}$ vertex pairs per nesting graph (as well as $\emptyset$) as possible outputs for every nesting graphs.

This means that there are at most $N_s(k)= 2^{2k} \cdot 2^{20k^{2}} \cdot 2^{O(k \log k)} \cdot \binom{8k}{2} = 2^{O(k^{2})}$ possible signatures for a given noose.
  \end{proof}
}

\looseness=-1
\subsubsection{Dynamic programming.}
Finally, we give an overview of how the valid signatures are found. In each branch, we perform bottom-up dynamic programming on $T$.
We want to find a valid signature at the root node of $T$, and we start from the leaves of $T$. 
Each leaf corresponds to an edge $(u_1,u_2)$ of the input graph~$G$, for which we consider all enumerated signatures and check if a signature is valid and thus corresponds to a partial solution. Such a partial solution should cover all missing neighbors of $u_1$ and $u_2$ not in $\noose=\{X(u_1),X(u_2)\}$, using for each incident face $f$ the subgraph of $C_f\in\cycle$ as specified by $M$\ifArxiv ~(see Appendix~\ref{subsec:dp-leaf}). \else. \fi

For internal nodes of $T$ we merge some pairs of valid child signatures corresponding to two nooses $\eta_1$ and $\eta_2$.
We merge if the partial solutions corresponding to the child signatures can together form a partial solution for the union of the graphs inside $\eta_1$ and $\eta_2$. The signature of this merged partial solution is hence valid for the internal node when
(1) faces not shared between the nooses do not have copies in common, (2) shared faces use identical nesting graphs and (3) use disjoint subgraphs of those nesting graphs to cover pistils, and (4) %
noose vertices have exactly a prescribed set of missing neighbors\ifArxiv ~(details in Appendix~\ref{subsec:dp-intern}). \else. \fi
Thus we can find valid signatures for all nodes of~$T$ and notably for its root. 
If we find a valid signature for the root, a partial solution~$(\ssplitp, (N_{\dot{v}})_{\dot{v} \in \ssplitp}, \Gamma')$ must exist. In~$\Gamma'$ all pistils are covered and it is planar, as the nesting graphs are planar and they represent a combinatorial embedding of copies that together cover all pistils. It is possible that certain split vertices are in no nesting graph, and hence $\ssplit\setminus\ssplitp \neq \emptyset$. We verify that the remaining copies that are pistils in $\ssplit\setminus\ssplitp$ induce a planar graph, which allows us to embed them in a face of $\Gamma'$ to obtain the final drawing. The running time for every node of~$T$ is polynomial in $N_s(k)$, thus, over all created branches \pSSRE\ is solved in $2^{O(k^2)} \cdot n^{O(1)}$ time\ifArxiv ~(see Appendix~\ref{subsec:dp-root}). \else. \fi
\begin{restatable}{theorem}{fptpssre}
  \label{thm:ssre-fpt}
  \pSSRE\ can be solved in $2^{O(k^2)} \cdot n^{O(1)}$ time, using at most $k$ splits on a topological drawing~$\Gamma$ of input graph~$G$ with $n$ vertices.
\end{restatable}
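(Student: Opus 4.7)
The plan is to combine the ingredients developed throughout \cref{sec:re-embed-fpt}: the outer branching, an outerplanarity-reducing preparation, a sphere-cut decomposition, and a bottom-up dynamic program over it.

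First, I would apply \cref{br:vertex,br:edge}, yielding $\bigoh(2^{k} \cdot k^{4})$ branches, each specifying the multiset of copies $\ssplit$ together with the graph $G_{\ssplit}$ they induce; by the associated soundness lemmas any solution matches at least one branch. In each branch I would exhaustively apply \cref{reduc:outerp} and then duplicate every bridge, which by the corresponding soundness lemmas preserves yes/no status and by \cref{outerplma} leaves a $10k$-outerplanar bridgeless drawing $\outerp$ of branchwidth at most $20k$. Using the algorithm of Seymour and Thomas, I would compute in $\bigoh(n^{3})$ time a sphere-cut decomposition $(T,\lambda)$ of width at most $20k$ and root $T$ at a subdivision of an arbitrary edge, so that every non-root node has a well-defined noose enclosing a subdrawing $G'_{t}$.

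The core of the argument is a bottom-up dynamic program over $T$ that, for every node $t$, marks which of the $N_{s}(k) = 2^{\bigoh(k^{2})}$ candidate signatures from \cref{lma:sigcount} are \emph{valid}, i.e., correspond to some partial solution of $G'_{t}$. At a leaf corresponding to an edge $(u_{1},u_{2})$, validity is decided by direct inspection: the prescribed nesting graphs in $\cycle$ together with the intervals described by $M_{t}$ must cover exactly the neighbors of $u_{1},u_{2}$ that $\noose$ does not list as missing. At an internal node $t$ with children $t_{1},t_{2}$, I would declare a signature valid iff there exist valid signatures at $t_{1},t_{2}$ that agree on the shared interface in four senses: (i) their $\inner$-sets for faces processed on only one side combine disjointly into $\inner(t)$; (ii) faces current in both subgraphs carry identical nesting graphs in $\cycle$; (iii) the intervals described by $M_{t_{1}}$ and $M_{t_{2}}$ on each shared nesting graph are vertex-disjoint and together equal the interval described by $M_{t}$; and (iv) for every $p \in \midset(t)$, $X_{t}(p)$ equals the intersection of $X_{t_{1}}(p)$ and $X_{t_{2}}(p)$, since a pistil covered on either side is no longer missing. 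The main technical obstacle will be to prove that these local conditions faithfully reflect the existence of a \emph{single} planar partial drawing for $G'_{t}$; the key fact is that sphere-cut nooses enter every face at most once, so the two child drawings live on opposite sides of a curve meeting each shared face in at most two vertices and can therefore be glued together without introducing crossings.

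To conclude, a valid signature at the root corresponds to a partial solution $(\ssplitp, (N_{\dot{v}})_{\dot{v} \in \ssplitp}, \Gamma')$ in which every pistil is covered; the copies in $\ssplit \setminus \ssplitp$ have no remaining pistil obligations, so they can be embedded into an arbitrary face of $\Gamma'$ provided $G_{\ssplit}[\ssplit \setminus \ssplitp]$ is planar, which is a polynomial-time final check. Conversely, reading off the signatures induced by an actual solution and applying induction on $T$ yields completeness. For the running time, each of the $\bigoh(n)$ nodes of $T$ contributes $N_{s}(k)^{2} \cdot n^{\bigoh(1)} = 2^{\bigoh(k^{2})} \cdot n^{\bigoh(1)}$ work over pairs of child signatures, the decomposition costs $\bigoh(n^{3})$, and this is multiplied by the $\bigoh(2^{k} \cdot k^{4})$ initial branches, for a total of $2^{\bigoh(k^{2})} \cdot n^{\bigoh(1)}$, as claimed.
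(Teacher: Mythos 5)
Your plan mirrors the paper's proof essentially step for step: the same two branching rules, the same petal-based reduction to a $10k$-outerplanar bridgeless drawing, the same width-$20k$ sphere-cut decomposition, the same signature-based bottom-up dynamic program, and the same final planarity check on $G_{\ssplit}[\ssplit \setminus \ssplitp]$ at the root, giving the same $2^{\bigoh(k^2)} \cdot n^{\bigoh(1)}$ bound. One small omission to fix when fleshing out the internal-node step: your condition~(iv) only defines $X_t(p)$ for $p \in \midset(t)$, but you also need to require, for every vertex $p \in (\midset(t_1) \cup \midset(t_2)) \setminus \midset(t)$ that becomes interior to $\eta(t)$, that $X_{t_1}(p) \cap X_{t_2}(p) = \emptyset$ (so such $p$ is fully covered); this is the paper's third consistency test and without it interior pistils with uncovered neighbors could slip through.
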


\toappendix{
\subsection{Finding Valid Signatures for Leaf Nodes}\label{subsec:dp-leaf}

The next three sections will each explain part of our dynamic programming approach, starting with the leaf nodes of the sphere-cut decomposition tree $T$. By definition the leaf nodes of $T$ form a bijection with the edges of $\Gamma$.
The first step of the dynamic programming algorithm is to find the set of valid signatures on the leaf nodes of $T$.
To do this, we loop over each possibles signature for each leaf $t\in V(T)$, and check whether it is valid.
In the following, we show how we determine whether a signature for a leaf~$t\in V(T)$ is a valid signature for~$t$, that is, the signature corresponds to a partial solution for the subinstance induced by the edge of $\Gamma$ that corresponds to~$t$.
All valid signatures found for $t$ are added to $D(t)$, the table of valid signatures on the node~$t$. 
The signatures we created by enumerating all possibilities are not all useful, a signature's nesting graph might not be an actual nesting graph so we check beforehand that they satisfy the necessary properties.
From the non-discarded signatures we can begin working on the decomposition.
In the following, we refer to the cycle and its incident vertices as the cycle of the nesting graph.

For a leaf node~$t\in T$ with corresponding edge $e=(u,v)$ in $\Gamma$ and given the signature for $t$ $\sig=(\inner(t), \cycle(t)=\{C_1,C_2\}, M_t, \noose(t)=\{X_t(u),X_t(v)\})$, %
the algorithm will proceed in two steps to cover $u$ and $v$ by embedding vertices in the two faces of $\Gamma$ incident to~$e$. 
It will first check elements of the signature and whether or not there are pistils on the noose to assess trivially valid and invalid signatures.
Then it will execute a routine based on branching and traversing the cycle of the nesting graph to assess the remaining signatures.
We first define the set $N_u(t)=N_G(u)\setminus X_t(u)$ %
containing the neighbors of $u$ that should cover $u$, as they are not in the missing neighbors set %
in $G'_t$ and analogously $N_v(t)$. %
The trivial checks are the following:
\begin{enumerate}
\item If $C\in\cycle$ is not a nesting graph, the signature is not valid.
\item If $\inner\ne\emptyset$ then the signature is not valid -- as $G'_t$ is only an edge, there is no processed face.
\item If $N_u(t)=N_v(t)=\emptyset$, then $\sig\in D(t)$: there is no vertex to cover in the subgraph, hence the signature is valid (even if $M_t\ne\emptyset$ the vertices in the nesting graph will just not be used).
\item If $N_u(t),N_v(t)\ne\emptyset$, but $\cycle=\{\emptyset,\emptyset\}$ or $M_t=\emptyset$, then there is no available vertex to cover the pistils, hence the signature is not valid.
\end{enumerate}

For the general case, we now introduce algorithm $\algA$ to check the validity of the signatures on leaf nodes, using the following notation.
Let $\ps^1,\pe^1 \in C_1$ and $\ps^2,\pe^2 \in C_2$ such that $M_t(C_1)=(\ps^1,\pe^1)$ and $M_t(C_2)=(\ps^2,\pe^2)$.
A clockwise traversal of the face~$f_1$ incident to $e$ in $\Gamma$ associated with $C_1$ encounters the endpoints of $e$ one after the other in a specific order: w.l.o.g.\ we assume here that $u$ is visited right before $v$, meaning for the other face $f_2$ incident to $e$, a clockwise traversal finds $v$ right before $u$. 

In its execution, algorithm $\algA$ first branches over all possible partitions of $N_u(t)\cup N_v(t)$ into two sets $S_1,S_2$.
Then, for each branch, it creates two new branches, assigning (1) $S_1$ to $C_1$ and $S_2$ to $C_2$ and (2) vice versa.

  Next, algorithm $\algA$ creates a branch for each ordering of $N_u(t)\cap S_1$.
  In each branch, the algorithm proceeds as follows.
  It creates a queue $Q_1$ from the ordering of $N_u(t)\cap S_1$ and a queue $Q_2$ containing the elements of the cycle of $C_1$ in the order of a clockwise traversal starting with~$\ps^1$.
  It then checks whether the first element $q_1$ of $Q_1$ is equal to the first element $q_2$ of $Q_2$.
  If so, $q_1$ is removed from $Q_1$.
  Intuitively, this corresponds to covering the pistil~$u$ by a copy of $q_1$ connected to cycle vertex~$q_2$.
  Otherwise, the cycle vertex~$q_2$ is not connected to a copy of the current missing neighbor $q_1$, thus $q_2$ is removed from $Q_{2}$.
  It repeats the checks and removals of the first elements until either (1) $Q_1$ becomes empty or (2) $Q_2$ becomes empty.
  In the second case, the algorithms discards the current branch.
  In the first case, the algorithms proceeds in the same way with $N_v(t) \cap S_1$ and $C_2$.
  If in at least one branch for an ordering of $N_u(t)\cap S_1$, the algorithm never reaches case (2), the algorithm accepts and puts $\sig \in D(t)$.
However, if all branches are discarded then the signature is not valid. %
Thus, we fill $D(t)$ with all signatures found to be valid.

\begin{lemma}\label{lma:algoA}
  Algorithm $\algA$ is correct.
\end{lemma}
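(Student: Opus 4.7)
The plan is to prove the bi-implication: after $\algA$ terminates, $\sig \in D(t)$ if and only if $\sig$ is the signature of some partial solution for~$G'_{t}$. At a leaf node~$t$, the subgraph $G'_{t}$ consists of a single edge $e=(u,v)$ together with its two incident faces $f_{1}, f_{2}$ of~$\Gamma$, so a partial solution is completely determined by how the embedded copies are placed in $f_{1}$ and $f_{2}$. I would first dispose of the trivial cases listed in the algorithm description by inspection: each rule is either clearly sound or vacuous at a leaf since $\inner(t)$ must be empty. The real work is in the general case where $N_{u}(t) \cup N_{v}(t) \neq \emptyset$ and $\cycle(t) \neq \{\emptyset, \emptyset\}$.

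For the completeness direction, I would start from a partial solution $(\ssplitp, (N_{\dot{v}}), \Gamma')$ whose signature equals $\sig$ and exhibit a branch of $\algA$ that accepts. The copies of $\ssplitp$ are distributed between $f_{1}$ and $f_{2}$ in $\Gamma'$, inducing both the partition $(S_{1}, S_{2})$ of $N_{u}(t)\cup N_{v}(t)$ enumerated by the outer branching and the assignment $S_{i}\mapsto C_{i}$ enumerated next. The clockwise traversal of $f_{1}$ starting at the image of $\ps^{1}$ visits the neighbors of $u$ covered by $S_{1}$ in one of the orderings $\pi_{1}$ enumerated by $\algA$, and in that branch the queue phase cannot abort: each required neighbor is eventually matched by the cycle vertex whose unique split-neighbor in $C_{1}$ is its copy. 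The same reasoning applied to $v$ and $C_{2}$ concludes the direction.

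For the soundness direction, given an accepting branch I would explicitly construct a partial solution. I embed the cycle of $C_{i}$ along the boundary of $f_{i}$ with $(\ps^{i},\pe^{i})$ positioned so that the traversal interval spans the side of $e$ on which the pistils of~$S_{i}$ are to be covered, and embed the copies of $S_{i}$ inside $f_{i}$ according to $C_{i}$'s combinatorial embedding. For each missing neighbor of~$u$ consumed from~$Q_{1}$ I route an edge inside~$f_{1}$ to its matched cycle vertex, which extends to the correct copy by property~\eqref{cond:nest-graph-cycle-degree}; property~\eqref{cond:nest-graph-cycle-nonneighbors} then guarantees that these edges do not cross. Verifying condition~\eqref{cond:pistils-covered} of partial solutions is direct because every neighbor in $N_{u}(t) \cup N_{v}(t)$ has been matched, and the remaining conditions are immediate from the construction.

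The hard part will be bridging the purely combinatorial queue-matching of $\algA$ with the geometric realizability of edges inside $f_{1}$ and $f_{2}$. The key observation I would rely on is that a clockwise walk of the cycle of $C_{i}$ reads off, via the unique split-neighbor of each cycle vertex, the cyclic sequence of copies required inside $f_{i}$; this sequence is realizable crossing-free precisely when the pistils of $f_{i}$ can be greedily matched against consecutive cycle occurrences of their copies, which is the invariant that $Q_{1}$ and $Q_{2}$ jointly maintain. Once this equivalence between cycle walk and crossing-free realizability is established, both directions follow mechanically from the definitions of partial solution, nesting graph, and signature.
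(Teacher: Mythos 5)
Your proposal follows essentially the same structure as the paper's proof: show the bi-implication by (completeness) exhibiting an accepting branch from a given partial solution via the induced partition, face assignment, and ordering, and (soundness) constructing a partial solution from an accepting branch by embedding the nesting graph in the face and routing edges from pistils to cycle vertices along the matched walk. The trivial cases are handled the same way.

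One imprecision worth flagging: you attribute the crossing-freeness of the routed pistil-to-cycle-vertex edges to nesting-graph property~\eqref{cond:nest-graph-cycle-nonneighbors}. That property only constrains adjacency between cycle vertices sharing a split neighbor; it does not by itself rule out crossings of the new routing edges. The actual reason, which the paper uses, is a monotonicity argument: the face boundary $(u,v)$ and the cycle of $C_{i}$ are both traversed clockwise, so the queue-based greedy matching produces non-interleaving neighbor intervals for $u$ and $v$, and two routing edges could only cross if those intervals alternated. You do gesture at this in your closing paragraph (``greedily matched against consecutive cycle occurrences''), so the right idea is present, but the attribution to property~\eqref{cond:nest-graph-cycle-nonneighbors} is misplaced and should be replaced by the same-direction/no-alternation argument when you write out the details.
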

\begin{proof}
  For a leaf $t\in T$ such that $\lambda(t)=e$ where $e=(u,v)$ in $E(G')$, we show that for $\sig=(\inner(t), \cycle(t), M_t, \noose(t))$ algorithm $\algA$ finds $\sig\in D(t)$ if and only if the input signature is valid, meaning we can find a partial solution $\sigma=(\ssplitp, (N_{\dot{v}})_{\dot{v} \in \ssplitp}, \Gamma')$ from $\sig$. In the following proof, for ease  of reading, we identify the vertices on the cycle of a nesting graph with corresponding split neighbor in that graph.

Given a partial solution~$\sigma$ for the signature $\sig$, and embedded subgraph graph $G'_t=(\{u,v\},e)$,
we show that algorithm $\algA$ sets $\sig\in D(t)$. Let $f_1$ and $f_2$ be the faces incident to $e$ and let $C_1$ and $C_2$, respectively, be their assigned nesting graphs. By definition of the signature of a partial solution, since $G'_t$ has no processed faces, necessarily $\inner(t)=\emptyset$. Additionally, if $N_u(t)=N_v(t)=\emptyset$, necessarily $\cycle(t)=\{\emptyset,\emptyset\}$ and $M_t=\noose(t)=\emptyset$. Similarly if $N_u(t),N_v(t)\ne\emptyset$, then $\cycle(t),M_t,\noose(t)\ne\emptyset$. 
Lastly all edges between split vertices in the solution are fully in $f_1$ or $f_2$, therefore the preliminary checks do not discard the signature.
If $N_u(t),N_v(t)\ne\emptyset$ in the partial solution, $u$ and $v$ have a subset of their neighborhood in $f_1$ and/or $f_2$. Algorithm $\algA$ attempts all partitions of $N_u(t)\cup N_v(t)$ into two sets, hence two branches will necessarily correspond to the partition of the solution. Additionally, since both options of assigning a set of the partition to a face are considered, one branch will assign the correct subsets of $\ssplitp$ to the correct faces. We place ourselves in the search tree nodes corresponding to this path.
W.l.o.g., by definition of the signature of a partial solution, the clockwise traversal of the neighborhood of $u$ and $v$ in $f_1$ corresponds to a sub-sequence of a clockwise traversal of the vertices on the cycle of $C_1$ between $\ps^1$ and $\pe^1$.
This sub-sequence of cycle $C_1$ will correspond to one of the orders we branch over: since a solution exists, this order is one of the possible orders of $N_u(t)\cap S_1$, $N_v(t)\cap S_1$, hence, each search will always find a copy of the neighbor it is trying to cover the pistil with, meaning $\algA$ finds the signature of a partial solution to be valid.

Given $\sig$ such that $\algA$ finds $\sig\in D(t)$, we show that $\sig$ corresponds to a partial solution~$\sigma=(\ssplitp, (N_{\dot{v}})_{\dot{v} \in \ssplitp}, \Gamma')$. We set $V(C_1)\cup V(C_2)=\ssplitp$. If $N_u(t)=N_v(t)=\emptyset$, $\sig\in D(t)$: if no endpoints of the edge are pistils, $\ssplitp=\emptyset$, there are no neighborhoods to cover, and the drawing of the input edge is necessarily planar. Hence, if $N_u(t)=N_v(t)=\emptyset$, then $\algA$ is correct.
Otherwise, $\algA$ returns $\sig\in D(t)$ when a branch of the search tree encounters all the neighbors it searches for before completing the nesting graph cycle traversal in both faces. In this case, we construct the partial solution in the following way. 
We place the nesting graphs in their corresponding face. We follow the path of the search tree that completes the search. When encountering a search tree node, we draw an edge between the cycle vertex found by the traversal and the pistil it is in the neighborhood of. Because both traversals ($(u,v)$ and the cycle) are in the same direction, the drawing is planar. Indeed there could only be a crossing between an edge that has $u$ as an endpoint and an edge that has $v$ as an endpoint, but since their neighborhoods are successive (they do not alternate), this is impossible. We can then remove the cycle edges and contract the edges between the pistils and the cycle vertices. The nesting graph is planar and these operations preserve planarity ensuring the final drawing is planar. 
The set of vertices $S_f$ inside the cycle of both nesting graphs $C_1,C_2$ is such that for every $s\in\candidate$ that has a copy in $S_f$,
we have that $\{N_{\Gamma'}(\dot{s}) \mid \dot{s} \in \copies(s)\cap\ S_f\}$ 
is a partition of some subset of $N_G(s)$, more specifically, a subset of $\{u,v\}$. %
There are no pistils inside of the noose, $\ssplitp\subseteq \ssplit$ and for every $\dot{s} \in \ssplitp$ we have $N_{\dot{s}} \subseteq N_G(\orig(\dot{s}))$, showing that $\sig$ is a partial solution.%

Thus, $\algA$ is correct.%
\end{proof}

\begin{lemma}\label{lma:leafruntime}
 For $t\in V(T)$ a leaf node, we fill $D(t)$ in $\mathcal{O}(N_s(k)k2^{2k}2k!)$ time.%
\end{lemma}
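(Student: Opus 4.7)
The plan is to bound the work that algorithm $\algA$ spends per leaf $t \in V(T)$ by multiplying the number of enumerated signatures with the cost of deciding validity of each one. First I would invoke \cref{lma:sigcount} to bound the number of candidate signatures at $t$ by $N_s(k)$, and observe that the preliminary checks (discarding signatures whose $\cycle(t)$ fails to be a pair of nesting graphs, or with $\inner(t) \neq \emptyset$, or that are trivially (in)valid because $N_u(t) = N_v(t) = \emptyset$, or because $\cycle(t) = \{\emptyset,\emptyset\}$ yet one of $N_u(t), N_v(t)$ is non-empty) can be performed in polynomial time per signature by direct inspection.

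For the signatures that survive these checks, I would then account for the three nested branchings of $\algA$. Since $N_u(t), N_v(t) \subseteq \candidate$ and $|\candidate| \leq k$, the set $N_u(t) \cup N_v(t)$ has at most $2k$ elements and hence produces at most $2^{2k}$ ordered partitions $(S_1, S_2)$. The subsequent assignment of $S_1, S_2$ to the nesting graphs $C_1, C_2$ contributes a further factor of $2$. The innermost branching enumerates all orderings of $N_u(t) \cap S_1$; since $|N_u(t) \cap S_1| \leq |N_u(t)| \leq k$, this contributes at most $k!$ options.

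For each of the resulting branches I would bound the cost of the two queue-based traversals used by $\algA$. Using \cref{lma:cyclevertcount}, the cycle vertices of any nesting graph number at most $2 \cdot |\ssplit| \leq 4k$, so $Q_2$ has length $O(k)$, while $Q_1$ has length at most $k$. Each iteration of the loop performs a single head-comparison and removes the head of either $Q_1$ or $Q_2$, so a traversal terminates in $O(k)$ steps, and the symmetric traversal for $N_v(t) \cap S_1$ with $C_2$ respects the same bound. Multiplying the factors gives a per-leaf running time of $\mathcal{O}\bigl(N_s(k) \cdot k \cdot 2^{2k} \cdot 2 \cdot k!\bigr)$, which matches the stated bound.

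The main obstacle I foresee is bookkeeping rather than depth: I must justify the size bounds $|N_u(t)|, |N_v(t)| \leq k$ from $|\candidate| \leq k$ and the cycle-length bound from \cref{lma:cyclevertcount} carefully, and make sure that the cost of the preliminary checks and of testing whether $C_1, C_2$ satisfy the nesting-graph axioms is absorbed within $\mathcal{O}(k)$ per signature so that it does not interfere with the product above.
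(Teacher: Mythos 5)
Your proposal is correct and follows essentially the same decomposition as the paper's proof: enumerate all $N_s(k)$ signatures, multiply by the branching factors of $\algA$ (the $2^{2k}$ partitions of $N_u(t)\cup N_v(t)$, the factor $2$ for the face assignment, the $k!$ orderings), and multiply by the $\mathcal O(k)$ cost of the two queue traversals bounded via \cref{lma:cyclevertcount}. The only cosmetic difference is that you count ordered partitions ($2^{2k}$) and then still add the face-assignment factor of $2$, which double-counts by a constant that is absorbed by $\mathcal O$-notation; the paper counts $2^{2k-1}$ unordered partitions times $2$ face assignments, reaching the same $2^{2k}$.
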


\begin{proof}
The trivial checks can be done in constant time. For the following step, there are at most $k$ missing neighbors to $u$ and $v$ each, meaning $2^{2k-1}$ partitions of the missing neighbors in the subset, giving $2^{2k}$ branches as each set creates one branch for each face assignment. Each of these branches will further branch into all the possible orders of their vertex set, meaning a single branch will further create at most $2k!$ branches.
Each branch traverses the two sub-cycles once. Since any cycle has at most $4k$ vertices this takes $\mathcal{O}(k)$ time. 
Overall for a given signature it takes $\mathcal{O}(k2^{2k}2k!)$ time to check if it is valid, and it takes $\mathcal{O}(N_s(k) k2^{2k}2k!)$ time to fill the table for a leaf node.
\end{proof}

\subsection{Finding Valid Signatures for Internal Nodes}\label{subsec:dp-intern}
In this section, we describe the algorithm that fills the table of valid signatures for internal nodes of $T$. For $t,c_1,c_2\in V(T)$ such that $c_1,c_2$ are the children of $t$ and given tables $D(c_1)$ and $D(c_2)$ of valid signatures for $c_1, c_2$ respectively, we will describe an algorithm that checks for each signature pair, whether there exists a corresponding valid signature for $t$.

Given the signatures $s_1,s_2$ of two children $c_1,c_2$ of the same node $t$, we introduce algorithm \algB\ that tests the two signatures for the following four properties. If the pair $s_1,s_2$ has all the required properties, \algB\ finds the signatures to be \emph{consistent} and combines them to compute a signature $s_t = \algB(s_1, s_2)$ for the parent node. 
The tests are the following.%

\begin{enumerate}
    \item Firstly, $\inner(c_1)\cap \inner(c_2)= \emptyset$, $\inner(c_1)\cap \souter(c_2)= \emptyset$, and $\souter(c_1)\cap \inner(c_2)=\emptyset$, where $\souter$ are the vertices of the graphs of $\cycle$. This ensures vertices shared by both signatures can only belong to nesting graphs.%
    \item We then check that $\souter(c_1)\cap \souter(c_2)$ concerns only vertices in nesting graphs corresponding to faces that intersect both noose. Thus for all current faces $f\in \eta(t)$ current in $\eta(c_1)$ and $\eta(c_2)$ with $C_f(c_1)\in \cycle(c_1)\in s_1,C_f(c_2)\in \cycle(c_2)\in s_2$ the corresponding nesting graphs should be equivalent, and hence $C_f(c_1)=C_f(c_2)$.
    \item Next, consider the set of vertices~$I=(\midset(c_1)\cup \midset(c_2))\setminus\midset(t)$, that are on the noose of both children and not on the noose of $t$.  %
    We require for every $v\in I$ with missing neighbors sets $X_{c_1}(v)\in\noose(c_1)$ and $X_{c_2}(v)\in\noose(c_2)$, that $X_{c_1}(v)\cap X_{c_2}(v)=\emptyset$.%
    \item Lastly, for all faces $f$ current in both $s_1$ and $s_2$, given the mappings $M_1\in s_1$ and $M_2\in s_2$ such that $M_1(C_f)=(\ps^1,\pe^1)$ and $M_2(C_f)=(\ps^2,\pe^2)$, a clockwise traversal of the cycle of $C_f$ from $\ps^1$ encounters first $\pe^1$ then $\ps^2$ and finally $\pe^2$ (the order is not strict, $\ps^1=\pe^1=\ps^2=\pe^2$ or any equality is allowed). If one or both signatures have no cycle vertices specified by the mapping (empty set), no traversal occurs.
\end{enumerate}

Any signature pair $s_1,s_2$ that fails any of the above tests is not consistent and no parent signature is computed. Otherwise, to compute the combination, \algB\ does the following.
The set $\cycle(t)$ corresponds to $\cycle(c_1)\cup\cycle(c_2)\setminus C$ where $C$ corresponds to the nesting graphs of faces current in $\eta(c_1)$ and $\eta(c_2)$ and processed in $\eta(t)$.
For every face $f$ current in $\eta(c_1)$ and $\eta(c_2)$ processed in $\eta(t)$, $S_f$ is the set of
vertices in $C$ and $\inner(t)=S_f\cup \inner(c_1)\cup\inner(c_2)$.
The new mapping $M_t$ is a combination of $M_1$ and $M_2$ on the nesting graphs for the faces that are current  in $\eta(t)$. For such a current face $f$ there are four candidate vertices: $\ps^1,\pe^1,\ps^2,\pe^2$. The two vertices that are not on the noose of $\eta(t)$ are removed. If $\ps^1,\pe^2$ remain, $M_t(C_f)=(\ps^1,\pe^2)$, otherwise, $\pe^1,\ps^2$ remain and $M_t(C_f)=(\ps^2,\pe^1)$.
Lastly, $\noose(t)$ is the union of $\noose(c_1)$ and $\noose(c_2)$ without the sets $X_{t}(p)$ where $p\in\midset(c_1)\cap\midset(c_2)$, thus we get $\noose(t) = \langle X_t(p) \rangle_{p \in \midset(t)}$. Observe that the computed tuple indeed adheres to the definition of a signature.%

For every node $t$ with children $c_1, c_2$, we compute the combination $\algB(s_1, s_2)$ of all consistent signatures of children $s_1\in D(c_1), s_2\in D(c_2)$. We prove that exactly those signatures are the valid signatures of $t$.

\begin{lemma}\label{lma:algoB}
For an internal node $t\in T$ with children $c_1, c_2$, the table of its valid signatures can be computed using the following recurrence relation:
\begin{equation*}
  D(t)=\{\algB(s_1, s_2) \mid s_1\in D(c_1), s_2\in D(c_2), s_1,s_2 \text{ consistent}\}
\end{equation*}
\end{lemma}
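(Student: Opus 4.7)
The plan is to prove the two inclusions separately. Recall that since $(T,\lambda)$ is a sphere-cut decomposition and $c_1, c_2$ are the children of $t$, the graph $G'_t$ is exactly $G'_{c_1} \cup G'_{c_2}$, the nooses $\eta(c_1)$ and $\eta(c_2)$ together enclose the region enclosed by $\eta(t)$, and $\midset(c_1) \cap \midset(c_2)$ is precisely the set $I$ of vertices lying on both child nooses but no longer on~$\eta(t)$. Every face of~$G'$ that is processed in $\eta(t)$ is either processed in one of the children, or is current in both children (and becomes processed when merging); every face current in $\eta(t)$ is current in at least one child. This geometric picture drives both directions.

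For the soundness direction, assume $s_1 \in D(c_1)$, $s_2 \in D(c_2)$ are consistent and let $\sigma_1=(\ssplitp_1,(N_{\dot v}^1),\Gamma_1')$ and $\sigma_2=(\ssplitp_2,(N_{\dot v}^2),\Gamma_2')$ be corresponding partial solutions. I would build a partial solution $\sigma=(\ssplitp,(N_{\dot v}),\Gamma')$ for $G'_t$ with signature $\algB(s_1,s_2)$ by taking the union of the two drawings and neighborhood assignments. Consistency check~(1) guarantees that copies embedded in processed faces on one side do not collide with copies of the other side, so $\inner(t) = \inner(c_1) \cup \inner(c_2) \cup S_f$ is well defined. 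Check~(2) ensures that when a face is current in both children, both children agree on the same nesting graph, so gluing along shared current faces yields a single consistent combinatorial embedding. Check~(4) ensures that the cyclic intervals $[\ps^1,\pe^1]$ and $[\ps^2,\pe^2]$ used by the two children inside a common nesting graph are disjoint and lie on opposite sides of the noose, so the union of the two partial pistil coverings in such a face is crossing-free; moreover, the interval used in $M_t$ on the~$\eta(t)$-side is exactly $[\ps^1,\pe^2]$ or $[\ps^2,\pe^1]$, matching $\algB$. Check~(3) is what certifies that pistils in $I$ that are no longer on $\eta(t)$ have every required candidate neighbor covered: their missing-neighbor sets on the two children are disjoint and jointly exhaust $N_G(\cdot)\cap\candidate$, which is exactly condition~(\ref{cond:pistils-covered}) of a partial solution for the newly processed vertices. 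Condition~(v) on induced subgraphs is preserved because both children respect $G_\ssplit$. The resulting $\Gamma'$ is planar as the union of two planar drawings agreeing on their intersection, and the signature of $\sigma$ equals $\algB(s_1,s_2)$ by construction.

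For the completeness direction, suppose $s \in D(t)$ and pick a partial solution $\sigma$ for $G'_t$ realizing $s$. I would restrict $\sigma$ to $G'_{c_1}$ and $G'_{c_2}$, obtaining drawings $\Gamma'_1,\Gamma'_2$, splitting $\ssplitp = \ssplitp_1 \cup \ssplitp_2$ according to which side each embedded copy lies in (copies in a face current in both children are assigned to whichever nesting graph that face provides; by the nesting-graph structure this assignment is forced). From each restriction I read off a signature $s_i$ and verify validity by checking the partial-solution conditions; in particular, pistils strictly inside $\eta(c_i)$ remain covered, and pistils in $I$ have their non-covered neighbors recorded in $X_{c_i}(p)$. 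The consistency conditions~(1)--(4) follow immediately from the disjointness of processed regions, the uniqueness of the combinatorial embedding in each shared current face, the partition of missing neighbors among the two children, and the interval structure of $M$ on shared nesting cycles. Finally, $\algB(s_1,s_2) = s$ by the same combining formulae.

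The main obstacle I expect is the handling of the mapping $M_t$ and the nesting graphs on faces whose status changes between the children and the parent, in particular faces current in both children that become processed in~$\eta(t)$: one must verify that the vertices in such a face that are covered in $\sigma$ are exactly $S_f$, that $\ssplitp_1 \cap \ssplitp_2 \cap S_f$ is split cleanly between the two restrictions without duplication, and that the cyclic interval structure of $M_t$ derived from $M_1, M_2$ really matches the pistils covered on the $\eta(t)$-side. These bookkeeping details rely crucially on the fact that a noose crosses each face at most once, which limits each shared current face to at most two vertices on each child noose and makes the interval picture well defined.
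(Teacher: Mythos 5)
Your proof follows the same two-inclusion structure as the paper's proof, uses the same gluing/restriction idea, and maps each of the four consistency tests to the role it plays in each direction, so the overall approach is correct and matches the paper's. However, there is one step that you assert but do not actually prove, and it is precisely the step the paper spends the most effort on.

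In the soundness direction you claim ``the union of the two partial pistil coverings in such a face is crossing-free'' and that the glued drawing ``is planar as the union of two planar drawings agreeing on their intersection.'' This is not automatic. A single copy $\dot{u}$ embedded inside a shared current face~$f$ can have neighbors~$u_1$ inside~$\eta(c_1)$ and~$u_2$ inside~$\eta(c_2)$, so neither child realizes all of~$\dot{u}$'s edges, and the union is not a disjoint gluing along a common boundary. Check~(4) only constrains the intervals designated by~$M_1$ and~$M_2$ on the cycle of~$C_f$; it does not by itself exclude a crossing between an edge from~$\dot{u}$ and an edge from some other copy~$\dot{x}$ in~$f$. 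The paper closes this gap with an explicit contradiction argument: if such a crossing existed, a traversal of~$f$ would find a subsequence $u_1,x_1,u_2,x_2$ with $x_1,x_2\in N_{\dot{x}}$, which lifts to two interleaving neighborhoods on the cycle of the shared nesting graph $C_f$, contradicting the compatibility (non-interleaving) property that $C_f$ must satisfy because both child signatures are valid. A similar explicit contradiction is needed in the completeness direction to justify that the restrictions' mappings pass test~(4) (the paper argues that once the face traversal leaves $G'_1$ past a noose vertex, all subsequent cycle vertices stand for pistils of $G'_2$, forcing the non-strict interval order). You should make both of these arguments explicit; as written, they are the missing ideas that would let the ``union is crossing-free'' assertion go through.

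One smaller inaccuracy: you say copies in a shared current face are ``split cleanly between the two restrictions without duplication,'' but a copy that appears inside a shared nesting graph $C_f$ legitimately appears in both $\souter(c_1)$ and $\souter(c_2)$; only the $\inner$ sets and the $\souter/\inner$ cross-intersections need to be disjoint (that is exactly what test~(1) enforces, and test~(2) then forces the shared nesting graphs to agree). Your phrasing suggests a partition where the correct invariant is agreement on the overlap.
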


\begin{proof}
Given $c_1,c_2,t \in V(T)$ such that $c_1,c_2$ are the children of $t$, to prove our recurrence relation, we show $\{\algB(s_1, s_2) \mid s_1\in D(c_1), s_2\in D(c_2), s_1,s_2 \text{ consistent}\}\subseteq D(t)$.
For $s_1\in D(c_1),s_2\in D(c_2)$ where $s_1=(\inner(c_1), \cycle(c_1), M_{c_1}, \noose(c_1))$ and $s_2=(\inner(c_2), \cycle(c_2), M_{c_2}, \noose(c_2))$ and $s_1$ and $s_2$ are consistent we show that $s_t = \algB(s_1, s_2)=(\inner(t), \cycle(t), M_t, \noose(t))$ is valid meaning there exists a corresponding partial solution $\sigma'=(\ssplitp, (N_{\dot{v}})_{\dot{v} \in \ssplitp}, \Gamma')$.

Since $s_1,s_2$ are valid a few properties are immediately verified: $\ssplitp$, the vertices in $\inner(t)$ and inside the cycles of $\cycle(t)$ are a union of the split vertices in each child's signature (where some nesting graph vertices are moved to $\inner(t)$), $\ssplitp\subseteq\ssplit$.
Additionally, for every $\dot{v}\in\ssplitp$, $N_{\dot{v}}\subseteq N_G(\orig(\dot{v}))$, since all coverage is inherited, and for every $v\in \orig(\ssplitp),\{N_{\dot{v}}\mid \dot{v}\in\copies(v)\}$ partitions $N_G(v)$. Additionally, while \algB\ modifies the nesting graph sets, it does not change the graphs themselves (it removes a subset of the children graphs for the parent set), the edge set induced by \ssplitp\ still corresponds to the guess made by Branching Rule~\ref{br:edge}.

For each pistil $p$ inside of $\eta(t)$, we want to show that they have no missing neighbors remaining. Since $s_1$ and $s_2$ are valid, all the pistils that were inside $\eta(c_1),\eta(c_2)$ are already covered, we must show that their remaining noose vertices that are inner vertices in $\eta(t)$ do not have remaining missing neighbors. This is done by \algB\ in the third step, 
for every $v\in I$ (as defined in test 3) where $X_{c_1}(v)\in\noose(c_1),X_{c_2}(v)\in\noose(c_2)$ we have $X_{c_1}(v)\subseteq N_G(v)$ %
and $X_{c_2}(v)\subseteq N_G(v)$ (by definition). Since $X_{c_1}(v)\cap X_{c_2}(v)=\emptyset$, by test 3, we know that $(N_G(v) \setminus X_{c_1}(v)) \cup (N_G(v) \setminus X_{c_2}(v)) = N_G(v)$.
Thus, the missing neighbors sets for non-noose vertices can be safely removed in $s_t$ as they are empty in the union of the two partial solutions. 
The remaining sets involve the vertices of $\midset(t)$, which are noose vertices which do not have to be covered in $s_t$.

We finally show that we can find a planar drawing $\Gamma'$ that extends $\Gamma$. Such drawings exist for $\eta(c_1)$ and $\eta(c_2)$, we call them $\Gamma_1$ and $\Gamma_2$ respectively, and we show that drawing $\Gamma'$ exists for the subgraph inside $\eta(t)$. The solution drawing for the faces not shared between $\eta(c_1)$ and $\eta(c_2)$ can directly be extended to $\Gamma'$ and is planar: \algB\ verifies in the first test that the only copies shared by both signatures' vertices were necessarily in nesting graphs that correspond to faces shared by both nooses, which ensures that there are no edges spanning from a face of $\eta(c_1)$ to a face of $\eta(c_2)$ which could create crossings. %
Next we explain how, for the remaining current faces present in both children nooses, we use the drawings of $\Gamma_1$ and $\Gamma_2$ to create $\Gamma'$. In $\Gamma_1$ and $\Gamma_2$, some copies cover pistils in only one subgraph. For those we can replicate the embedding in $\Gamma'$ as it was in both partial drawings. For a vertex $\dot{u}$ that has neighbors $u_1 \in G'_1$ and $u_2\in G'_2$, let us assume it does not have a planar embedding, meaning we find a crossing involving a split copy $\dot{x}$. This means that when traversing $f$, we can find a subsequence of vertices $u_1,x_1,u_2,x_2$ such that $x_1,x_2\in N_{\dot{x}}$. This can be extended to the vertices of the nesting graph: there must be two neighbors of $\dot{u}$ and two neighbors of $\dot{x}$ on the cycle that are alternating. This means that the two neighborhoods are not sequential in the cyclic ordering of the vertices on the cycle of the nesting graph. This contradicts the validity of the children signatures, and this situation is not possible, showing that we can create $\Gamma'$ by replicating the neighborhoods and embeddings in $\Gamma_1$ and $\Gamma_2$.
The mappings for nesting graphs corresponding to the faces now processed in $\eta(t)$, that were current in $\eta(c_1)$ and $\eta(c_2)$, are removed as we now can consider those faces processed and thus lie completely inside noose. The mappings for nesting graphs corresponding to faces still current in $\eta(t)$ that were current in $\eta(c_1)$ and $\eta(c_2)$ are found in the following manner. The cycle sections that were unused (between $\pe^1$ and $\ps^2$ and $\pe^2$ and $\ps^1$) are either removed as they are inside $\eta(t)$ and not able to cover pistils outside of it, or they become the new vertices the mapping designates.
Thus $\sigma'$ is a partial solution for $s_t$ and hence $s_t$ is valid.%

We now show $D(t)\subseteq\{\algB(s_1, s_2) \mid s_1\in D(c_1), s_2\in D(c_2), s_1,s_2 \text{ consistent}\}$.
Given a valid signature $s_t\in D(t)$, we show that it is possible to find two valid consistent signatures $s_1=(\inner(c_1), \cycle(c_1), M_{c_1}, \noose(c_1))$ for $c_1$ and $s_2=(\inner(c_2), \cycle(c_2), M_{c_2}, \noose(c_2))$ for $c_2$.
Since $s_t$ is valid there is a corresponding partial solution $\sigma'=(\ssplitp, (N_{\dot{v}})_{\dot{v} \in \ssplitp}, \Gamma')$ where $\Gamma'$ is the drawing of the solution graph inside $\eta(t)$. We restrict $\Gamma'$ to the graph~$G'_1$ inside $\eta(c_1)$, which is necessarily a partial solution for $G'_1$. Similarly we obtain a partial solution for $G'_2$. By definition of the signature of a partial solution, there exists a valid signature $s_1$ and $s_2$ for the two children nodes. We now show that these are consistent. Since $\Gamma'$ is planar, any vertex embedded in a face $f$ inside $\eta(c_1)$ can only be incident to vertices in $f$. Meaning the first test of \algB\ is passed. The shared current faces of $s_1$ and $s_2$ will also necessarily have the same nesting graph as they are the same face in $G_t'$: they induce the same nesting graphs, and thus the second test is passed. As for the third test, since $\sigma'$ is a partial solution, all of the inner vertices in $G_t'$ have no missing neighbors, meaning that the noose vertices of $\eta(c_1)$ and $\eta(c_2)$ that are not noose vertices of $\eta(t)$ are covered by a vertex in a face of $G'_1$ or $G'_2$, which will be reflected in the signature of the partial solution and no missing neighbor can remain, they are partitioned between the two signatures.
Lastly, we assume that the mappings created by \algB\ do not pass the fourth test, meaning w.l.o.g. in a face $f$'s nesting graph $C_f$ we encounter $\ps^1$ then $\ps^2$ then $\pe^1$ and $\pe^2$. This ordering is strict since the test is not passed. By construction, vertex $\ps^2$ stands for either, a pistil in $G'_1$ and $G'_2$, or a path of pistils and non pistils in $G'_1$ and $G'_2$ that were contracted to make $\ps^2$ in the nesting graph. In the first case this means that the pistil $\ps^2$ stands for is a noose pistil and $\ps^2=\pe^1$ and in the second case one of the vertices on the path is on the noose, and similarly since the path is contracted, then $\ps^2=\pe^1$. Since once the traversal of $f$ leaves $G'_1$ by passing the noose vertex, the next encountered pistils are pistils of $G'_2$ and $\pe^1$ cannot be any of the cycle vertex that stands for a pistil of $G'_2$ which contradicts our assumption.

Thus $s_1$ and $s_2$ are consistent, the recurrence relation is correct, and Algorithm~\algB\ computes all valid signatures.
\end{proof}

\begin{lemma}\label{lma:internalruntime}
  The table for one node is filled in $\mathcal{O}(N_s(k)^2k^4)$ time.
\end{lemma}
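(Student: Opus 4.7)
The plan is to decompose the cost into (i)~the number of pairs of child signatures considered and (ii)~the cost of Algorithm~\algB\ on each such pair, then multiply. For (i), by \cref{lma:algoB} we range over all $(s_1,s_2) \in D(c_1) \times D(c_2)$, and by \cref{lma:sigcount} each table has at most $N_s(k)$ entries, giving $N_s(k)^2$ pairs.

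For (ii), I would first collect the size bounds on every component of a signature. The middle sets $\midset(c_1)$, $\midset(c_2)$, $\midset(t)$ have size at most $20k$ from the branchwidth bound, $|\ssplit| \leq 2k$, each nesting graph in $\cycle(\cdot)$ has $\bigoh(k)$ vertices by \cref{lma:cyclevertcount}, there are $\bigoh(k)$ nesting graphs per signature (at most one per current face, and the number of current faces is $\bigoh(|\midset|) = \bigoh(k)$), and each missing-neighbor set $X_t(p)$ has size at most $|\candidate| \leq k$. Under these bounds I would walk through the four tests in turn: Test~1 reduces to a constant number of disjointness checks on sets of size $\bigoh(k)$, costing $\bigoh(k^2)$; Test~2 iterates over the $\bigoh(k)$ shared current faces and checks equality of two combinatorially embedded graphs on $\bigoh(k)$ vertices, so $\bigoh(k^3)$; Test~3 compares, for each of the $\bigoh(k)$ vertices in $I$, two missing-neighbor sets of size $\bigoh(k)$, again $\bigoh(k^3)$; Test~4 traverses, for each shared current face, a cycle of length $\bigoh(k)$ to verify the ordering of $\ps^1,\pe^1,\ps^2,\pe^2$, costing $\bigoh(k^2)$ overall. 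Assembling the combined signature (unions for $\inner$, removal from $\cycle$ of the nesting graphs corresponding to faces that are now processed, redefinition of $M_t$ on the surviving faces, and restriction of $\noose$) is dominated by the same $\bigoh(k^3)$ bound.

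Multiplying a per-pair cost of $\bigoh(k^4)$ (a slack combined bound covering all of the above) by the $N_s(k)^2$ pairs yields $\bigoh(N_s(k)^2 \cdot k^4)$, as claimed. The main care-point I expect is the equality test for nesting graphs in Test~2: to avoid any combinatorial graph-isomorphism blow-up, I would require the algorithm to store each nesting graph in a canonical form, for instance a fixed clockwise rotation of the cycle starting from a distinguished cycle vertex together with the induced adjacency lists of the interior copies, so that comparing two nesting graphs reduces to lexicographic comparison of two encodings of length $\bigoh(k)$ in $\bigoh(k^2)$ time. With this representation choice in place, the per-pair cost estimates above are routine and the claimed running time follows.
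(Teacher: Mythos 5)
Your proposal is correct and takes essentially the same approach as the paper: multiply the $N_s(k)^2$ signature pairs by the per-pair cost of running the four consistency tests of Algorithm~$\algB$, each of which you bound by $\bigoh(k^4)$ using the size bounds on the middle sets, $\ssplit$, nesting graphs, and missing-neighbor sets. Your individual per-test estimates differ slightly from the paper's (for example you give $\bigoh(k^3)$ for Test~3 where the paper gives the looser $\bigoh(k^4)$, and $\bigoh(k^3)$ for Test~2 where the paper gives $\bigoh(k)$), and your note about storing nesting graphs in canonical form to keep the equality test linear is a reasonable implementation detail, but none of this changes the overall $\bigoh(N_s(k)^2 k^4)$ bound or the structure of the argument.
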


\begin{proof}
Algorithm $\algB$ first does four different tests.
To verify that the inner vertex sets of size at most $2k$ are disjoint requires quadratic time, we additionally need to compute the vertex sets of the nesting graph which can be done in linear time. Overall the first test requires $\bigoh(k^2)$ time.
Checking that the at most $20k$ nesting graphs are equal can be done in linear time for the second test, for a $\bigoh(k)$ runtime.
There are at most $20k$ noose vertices which have at most $k$ missing neighbors. Verifying that the missing neighbors sets are disjoint requires $\bigoh(k^4)$ time.
Lastly, a clockwise traversal of the cycle of length at most $4k$ requires linear time.
Overall these check are done for each signature pair in the table of valid signatures for the children of size at most $N_s(k)$, the number of possible signatures.
Overall, \algB\ can be fill the table for an internal node in $\bigoh(N_s(k)^2\cdot k^4)$ time.
\end{proof}

\subsection{Checking Signatures at the Root Node}\label{subsec:dp-root}

Finally, we check for the existence of a solution $\sigma$ by looking at the valid signatures of the children of the root node. In the initialization we split a single edge~$e_r = (r_1,r_2)$ into two edges by connecting the end points of $e_r$ to a new root node $r$. As a result $mid(r_1)=mid(r_2)=mid(e_r)$, and the union of the graphs inside $\eta(r_1)$ and $\eta(r_2)$ would make up the whole input graph.
With the previous section we have shown how to compute the tables $D(r_1)$ and $D(r_2)$. We now show how given $s_1\in D(r_1),s_2\in D(r_2)$ we decide if a solution exists.
We reuse here algorithm \algB\ to decide whether the two signatures are consistent. As there are no current faces anymore, the only relevant element in the tuple is the set $\inner$ of vertices that were used to cover the pistils in $G'$. We run one last test on the set $\ssplit\setminus\inner$. The edge set guessed by Branching Rule~\ref{br:edge} holds some of the coverage of the split vertices, but if a split vertex is not in $\inner$ and has a neighbor in $G_{\ssplit}$, then we must be able to realize those last edges. Hence we verify that the graph $G_{\ssplit}[\ssplit\setminus\inner]$ is planar, and if so, embed it inside any face of a partial solution. Thus, if this test is passed, we find that a solution $\sigma$ exists.

\begin{lemma}\label{lma:root}
For an instance $I$ of $\pREDP$ with input graph $G'$ and modified sphere-cut decomposition tree $T$ rooted on $r$ where $r_1,r_2$ are the children of $r$ with $G'_1,G'_2$ the subgraphs inside $\eta(r_1),\eta(r_2)$ respectively, such that $G'_1\cup G'_2=G'$, $I$ has a solution if and only if there exists valid consistent signatures for $r_1,r_2$.
\end{lemma}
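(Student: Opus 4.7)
My plan is to prove both directions of the equivalence by leveraging Lemma~\ref{lma:algoB} applied at the children $r_1,r_2$ of the root, together with the special property that $\midset(r_1)=\midset(r_2)=\midset(e_r)$ and $G'_1 \cup G'_2 = G'$, so that at the root there are no remaining uncovered inner pistils and no current faces.

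For the forward direction, suppose $I$ has a solution $\sigma=(\ssplitsol, \copiessol, \origsol, (N_{\dot{v}}^*)_{\dot{v} \in \ssplitsol}, \planarreemb^*)$ consistent with the branching choices (Branching Rules~\ref{br:vertex} and~\ref{br:edge}). I would restrict $\planarreemb^*$ to $G'_1$ and $G'_2$ to obtain partial solutions $\sigma_1,\sigma_2$, whose associated signatures $s_1 \in D(r_1)$ and $s_2 \in D(r_2)$ exist by a bottom-up induction on $T$ combining Lemmas~\ref{lma:algoA} and~\ref{lma:algoB}. Consistency of $(s_1,s_2)$ under \algB\ follows from planarity of $\planarreemb^*$: no copy can be embedded in a face entirely inside both $\eta(r_1)$ and $\eta(r_2)$ (tests~1 and~2), the missing-neighbor sets partition at shared noose vertices (test~3) because every non-noose pistil of $G'$ is covered in $\planarreemb^*$, and the cyclic order of used cycle vertices around any face current in both signatures is inherited from $\planarreemb^*$ (test~4). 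Finally, the copies in $\ssplit \setminus \inner$ are exactly those copies that were not used to cover any pistil, hence by the branching-consistency their only edges are among themselves (from $G_{\ssplit}$) and they are drawn in $\planarreemb^*$, so $G_\ssplit[\ssplit \setminus \inner]$ is necessarily planar.

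For the backward direction, given valid consistent signatures $s_1,s_2$, Lemma~\ref{lma:algoB} supplies partial solutions $\sigma_1$ for $G'_1$ and $\sigma_2$ for $G'_2$, and the merging argument from its proof shows that a combined partial drawing $\Gamma'$ exists that is planar, extends $\Gamma$, covers every pistil strictly inside $\eta(r_1) \cup \eta(r_2) = G'$, and is compatible with $G_\ssplit$ on $\inner$. Because $\midset(e_r)$ consists of the two endpoints of the original edge chosen to place the root, there are no remaining current faces and no non-noose inner pistils to cover at $r$. The final planarity test on $G_\ssplit[\ssplit \setminus \inner]$ allows the remaining copies, which have no pistil neighbors, to be embedded inside any face of $\Gamma'$ while realizing their prescribed edges from $G_\ssplit$, producing the required solution drawing $\planarreemb^*$.

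The main obstacle I anticipate is the bookkeeping for the copies in $\ssplit \setminus \inner$: the dynamic program never places them, so I must argue that (i) in any solution such copies necessarily have no neighbors among non-split vertices (otherwise some pistil would remain uncovered and $s_1,s_2$ could not be valid), and (ii) an arbitrary face of $\Gamma'$ is always large enough to accommodate a planar embedding of $G_\ssplit[\ssplit \setminus \inner]$ together with the prescribed edges dictated by Branching Rule~\ref{br:edge}. Once this is established, the rest of the proof reduces to carefully invoking Lemma~\ref{lma:algoB} at the virtual edge $e_r$.
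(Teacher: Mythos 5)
Your proposal is correct and follows essentially the same route as the paper's proof: both directions invoke Lemma~\ref{lma:algoB} at the children $r_1, r_2$ of the virtual root, exploiting $\midset(r_1)=\midset(r_2)=\midset(e_r)$ so that $\noose(r)$ becomes empty and no non-noose pistil can remain uncovered, and both handle the leftover copies via the final planarity check on $G_\ssplit[\ssplit\setminus\inner]$. The only cosmetic difference is that the paper first strips off the components of $\planarreemb^*$ not attached to $G'$ so that the restricted $\ssplitp$ coincides exactly with $\inner(r)$, whereas you restrict by noose and treat $\ssplit\setminus\inner$ separately; and the concern you flag in (ii) disappears once (i) is established, since a standalone planar graph can always be inserted into any face of a planar drawing.
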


\begin{proof}
Given that two valid consistent signatures $s_1,s_2$ exist, for $r_1,r_2\in V(T)$ with $G'_1,G'_2$ the subgraphs inside $\eta(r_1),\eta(r_2)$ respectively, such that $G'_1\cup G'_2=G'$ we show that $I$ has a solution.

From \cref{lma:algoB}, the existence of valid consistent signatures $s_1,s_2$ ensures that $D(r)\ne\emptyset$. 
For $s_r\in D(r)$ s.t. $s_r=(\inner(r), \cycle(r), M_{r}, \noose(r))$, $s_r$ is valid, meaning it corresponds to a partial solution $\sigma'=(\ssplitp, (N_{\dot{v}})_{\dot{v} \in \ssplitp}, \Gamma')$.
We can construct a solution for $I$ that looks as follows: 
$(\ssplit, \orig, \copies, (N_{\dot{v}})_{\dot{v} \in \ssplit}, \Gamma^*)$. 
The mappings $\copies$ and $\orig$ are those guessed in Branching Rule~\ref{br:vertex}.
We know that $\ssplitp = \inner(r) \subseteq \ssplit$. The missing vertices $\ssplit\setminus\ssplitp$ can be added in any face of the partial solution's drawing $\Gamma'$ to get the drawing $\Gamma^*$. We will argue later that $\Gamma^*$ is still planar.%

We now show for $(N_{\dot{v}})_{\dot{v} \in \ssplit}$, that the family $\{N_{\dot{v}} \mid \dot{v} \in \copiessol(v) \}$ is a partition of $N_G(v)$.
First consider a pistil~$p\in\eta(r_1)$, since $\midset(r_1)=\midset(r_2)$, $p\in\eta(r_2)$. When creating the set $\noose(r)$, \algB\ removes the sets $X_{t}(p)$ from $\noose(r)$ where $p\in\midset(c_1)\cap\midset(c_2)$, hence $\noose(r)=\emptyset$.
Now we make a case distinction on the different types of pistils. For a pistil $p\in G'$, w.l.o.g. assume $p\in V(G'_1)$ and $p\not\in \midset(r_1)$, since $s_1$ is valid, $p$ is covered. If $p\in\midset(r_1)$ then $p\in\midset(r_2)$ and since $\noose(r)=\emptyset$, $p$ is covered. For a split vertex $\dot{p}\in \inner(s_r)$, since $s_r$ is valid, $\dot{p}$ is covered. Lastly, if $\dot{p}\in \ssplit\setminus\inner(s_r)$, $\dot{p}$ has all of its missing neighbors in $\ssplit$: $\dot{p}$ cannot have neighbors in $\inner(s_r)$, as vertices in $\inner(s_r)$ are already completely covered. We only need to embed $G_{\ssplit}[\ssplit\setminus\inner(s_r)]$ in $\Gamma'$ to obtain the coverage of $\dot{p}$.

Since $s_r$ is valid, $\Gamma'$ is planar. The only vertices of $\ssplit$ not yet embedded in $\Gamma'$ are the vertices not in $\ssplit\setminus\inner$. The last test ran after \algB\ verifies that the graph $G_{\ssplit}[\ssplit\setminus\inner(s_r)]$ is planar, and since it is not a connected component of $G'$, it can be embedded into any of its faces and preserve planarity. Hence, there exists a planar drawing~$\Gamma^*$ of the solution.

Given a solution $I=(\ssplit, \orig, \copies, (N_{\dot{v}})_{\dot{v} \in \ssplit}, \Gamma^*)$ we now show that we can find two valid consistent signatures signatures $s_1,s_2$ for $r_1,r_2\in V(T)$ with $G'_1,G'_2$ the subgraphs inside $\eta(r_1),\eta(r_2)$ respectively, such that $G'_1\cup G'_2=G'$.
By definition, the drawing $\Gamma^*$ is the drawing of a graph extending $G'$, we remove from it any component not connected to $G'$. The vertices we remove are necessarily vertices in $\ssplit$ as all other vertices are in $G'$. We call $\ssplitp$ the remaining split vertices and $\Gamma'$ the drawing obtained. 
Since every pistil in $\Gamma^*$ was covered and we did not remove vertices connected to $G'$, every pistil in $\Gamma'$ is covered, and it is a planar drawing.
With these elements we construct $\sigma'=(\ssplitp, (N_{\dot{v}})_{\dot{v} \in \ssplitp}, \Gamma')$ a partial solution for $G'$. We can now obtain the signature of a partial solution (this is a special case with no current faces), $s_r=(\inner(r), \cycle(r), M_{r}, \noose(r))$ where $\ssplit=\inner(r)$. \cref{lma:algoB} tells us that given a valid signature for a parent node, there must be two valid consistent signatures for its children, hence, as $r$ corresponds to $G'$, its children must partition $G'$ in two and thus we find two valid consistent signatures signatures $s_1,s_2$ for $r_1,r_2\in V(T)$ with $G'_1,G'_2$ the subgraphs inside $\eta(r_1),\eta(r_2)$ respectively, such that $G'_1\cup G'_2=G'$.
\end{proof}

\begin{lemma}\label{lma:rtruntime}
Given two valid signature tables $D(r_1)$ and $D(r_2)$ where $\midset(r_1)=\midset(r_2)$ such that the union of the subgraphs inside $\eta(r_1),\eta(r_2)$ is the whole input graph, we can decide in $\mathcal{O}(N_s(k)^2k^4)$ time if our input is a yes instance.%
\end{lemma}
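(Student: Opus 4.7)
The plan is to combine \cref{lma:root} with an enumeration argument over all pairs of valid child signatures, reusing the consistency test from \algB\ as the main subroutine.

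First I would appeal to \cref{lma:root}, which characterizes yes-instances of \pREDP\ at the root: the instance is a yes-instance exactly when there exist $s_1 \in D(r_1)$ and $s_2 \in D(r_2)$ that are consistent (in the sense defined for \algB) and such that the leftover graph $G_{\ssplit}[\ssplit \setminus \inner(s_r)]$ is planar, where $s_r = \algB(s_1, s_2)$. Hence the decision procedure is: iterate over every pair $(s_1, s_2) \in D(r_1) \times D(r_2)$; for each pair, first run the four consistency tests of \algB; if consistent, compute $\inner(s_r) = \inner(s_1) \cup \inner(s_2) \cup S$, where $S$ is the set of vertices moved from current nesting graphs into processed faces (exactly as \algB\ does at internal nodes); then test whether $G_{\ssplit}[\ssplit \setminus \inner(s_r)]$ is planar. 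Answer yes if some pair passes all checks, otherwise answer no.

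For the running time, by definition each table has at most $N_s(k)$ entries, so there are at most $N_s(k)^2$ pairs to consider. For each pair, the consistency check and the construction of $\inner(s_r)$ are carried out exactly as in the internal-node computation, which by \cref{lma:internalruntime} takes $\bigoh(k^4)$ time per pair (dominated by comparing the at most $\bigoh(k)$ missing-neighbor sets of size $\bigoh(k)$ on the noose). The final planarity test on $G_{\ssplit}[\ssplit \setminus \inner(s_r)]$ runs in time linear in the size of this graph, and since $|\ssplit| \le 2k$ this is at most $\bigoh(k^2)$, which is absorbed by the $\bigoh(k^4)$ term. Summing over all pairs yields the claimed $\bigoh(N_s(k)^2 \cdot k^4)$ bound.

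The only subtlety I anticipate is matching the bookkeeping at the root to that at an internal node: because $\midset(r_1) = \midset(r_2)$ and these mid-sets equal $\midset(e_r)$ rather than a genuine separator one level higher, every noose pistil is shared, which simplifies step (3) of \algB\ (all missing-neighbor sets must jointly cover $N_G(p)$) and makes the mapping $M_t$ trivial (no face is current at the root). These simplifications can only decrease the work compared to a generic internal node, so the internal-node runtime bound of \cref{lma:internalruntime} still applies. Correctness of the overall answer is then immediate from \cref{lma:algoB} (which ensures we enumerate exactly the valid signatures for the virtual parent of $r_1, r_2$) together with \cref{lma:root} (which links such a valid parent signature, after the extra planarity test, to the existence of a solution).
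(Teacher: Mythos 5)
Your proposal matches the paper's proof: both reduce the root computation to the pairwise consistency check of \algB\ (with per-pair cost $\bigoh(k^4)$ as in \cref{lma:internalruntime}) over the at most $N_s(k)^2$ signature pairs, add the extra planarity test on $G_{\ssplit}[\ssplit\setminus\inner]$ whose cost is negligible, and invoke \cref{lma:root} for correctness. The only cosmetic difference is that you bound the planarity test by $\bigoh(k^2)$ where the paper states $\bigoh(k)$, which does not affect the claimed bound.
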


\begin{proof}
\cref{lma:internalruntime} showed that \algB\ requires $\mathcal{O}(N_s(k)^2k^4)$ time to compute the table for an internal node. On the root level the only difference is the final test that checks the planarity of the graph induced by the at most $2k$ vertices of $\ssplit$ unused in the signature which can be done in $\bigoh(k)$ time.
Thus, we can check for the existence of a solution in $\mathcal{O}(N_s(k)^2k^4)$ time.%
\end{proof}

This now gives us the necessary tools to prove \cref{thm:ssre-fpt}.
}

\appendixproofwithrestatable{\fptpssre*}
{

\begin{proof}
From the input instance of $\pSSRE$, we first find mappings between candidate vertices and copies, and determine how the graph induced by candidate vertices maps to copies. \cref{lma:brvertex,lma:bredge} show that an exhaustive search can do this in $\bigoh(2^k k^4)$ time. Next, we show with \cref{cor:reduc:outerp,outerplma,lma:bridgisouter} that we can apply transformations to the input graph in linear time to obtain a graph, whose sphere-cut decomposition we can compute. Thus we obtain an instance of $\pREDP$, that we solve using our dynamic programming algorithm.

The dynamic programming algorithm computes partial solutions starting from the leaves of the sphere-cut decomposition. By \cref{lma:algoA} we know that all valid signatures for all $m$~leaves can be correctly computed, and this can be done in $\mathcal{O}(N_s(k)k2^{2k}2k!)$ time per leaf (see \cref{lma:leafruntime}). The number of possible signatures $N_s(k)$ can be bounded using \cref{lma:sigcount}, to get $N_s(k)= 2^{2k} \cdot 2^{20k^{2}} \cdot 2^{O(k \log k)} \cdot \binom{8k}{2} = 2^{O(k^{2})}$. Those signatures can be enumerated in $\bigoh(N_s(k))$.
The dynamic programming now correctly computes all valid signatures for all $m-1$~internal nodes of the sphere-cut decomposition in bottom up fashion (see \cref{lma:algoB}), in $\mathcal{O}(N_s(k)^2k^4)$ time per internal node, as shown in \cref{lma:internalruntime}. Finally, we arrive at the root, where we can correctly determine whether a solution to \pSSRE\ exists, by \cref{lma:root}. This takes an additional $\mathcal{O}(N_s(k)^2k^4)$ time (see \cref{lma:rtruntime}).
Thus, we can solve any instance in $2^{O(k^2)} \cdot n^{O(1)}$ time
\end{proof}
}

\section{Conclusions}
\looseness=-1
We have introduced the embedded splitting number problem. %
However, fixed-parameter tractability is only established for the \pSSRE\ subproblem. The main open problem is to investigate the parameterized complexity of \pESN. A trivial \XP-algorithm for \pESN\ can provide appropriate inputs to \pSSRE\ as follows: check for any subset of up to $k$ vertices whether removing those vertices results in a planar input drawing, and branch on all such subsets.

\looseness=-1
Many variations of embedded splitting number are interesting for future work. For example, rather than aiming for planarity, we can utilize vertex splitting for crossing minimization. Other possible extensions can adapt the splitting operation, for example, %
the split operation allows both creating an additional copy of a vertex and re-embedding it, and the cost of these two parts can differ: simply re-embedding a vertex can be a cheaper operation. %

\subsection*{Acknowledgments}

We would like to thank an anonymous reviewer for their input to simplify the proof of~\cref{thm:hardnessone}.

\bibliographystyle{splncs04}
\bibliography{references}

\ifArxiv
\newpage
\appendix
\appendixProofText
\fi

\end{document}